\let\today\relax
\def\ps@pprintTitle{%
\let\@oddhead\@empty
\let\@evenhead\@empty
\def\@oddfoot{\footnotesize\itshape
     {} \hfill\today}%
\let\@evenfoot\@oddfoot
}
\newtheorem{theorem}{Theorem}
\newtheorem{lemma}[theorem]{Lemma} 
\theoremstyle{definition}
\newtheorem{definition}{Definition}
\newtheorem{corollary}[theorem]{Corollary} 
\DeclarePairedDelimiter{\E}{\langle}{\rangle}
\title{Quantum walk informed variational algorithm design}
\author{Edric Matwiejew\corref{cor1}}
\author{Jingbo B. Wang}
\address{Department of Physics, The University of Western Australia, WA, 6009}
\begin{document}

\begin{frontmatter}

\begin{abstract}
We present a theoretical framework for the analysis of amplitude transfer in Quantum Variational Algorithms (QVAs) for combinatorial optimisation with mixing unitaries defined by vertex-transitive graphs, based on their continuous-time quantum walk (CTQW) representation and the theory of graph automorphism groups. This framework leads to a heuristic for designing efficient problem-specific QVAs. Using this heuristic, we develop novel algorithms for unconstrained and constrained optimisation. We outline their implementation with polynomial gate complexity and simulate their application to the parallel machine scheduling and portfolio rebalancing combinatorial optimisation problems, showing significantly improved convergence over preexisting QVAs. Based on our analysis, we derive metrics for evaluating the suitability of graph structures for specific problem instances, and for establishing bounds on the convergence supported by different graph structures. For mixing unitaries characterised by a CTQW over a Hamming graph on $m$-tuples of length $n$, our results indicate that the amplification upper bound increases with problem size like $\mathcal{O}(e^{n \log m})$. 
\end{abstract}

\begin{keyword}
quantum optimisation \sep continuous-time quantum walks \sep portfolio rebalancing \sep parallel task scheduling 
\end{keyword}
\end{frontmatter}

\section{Introduction}
Optimisation represents a critical and nearly ubiquitous challenge across various sectors, including logistics, finance, medicine, and compilation~\cite{Amaro2022Dec, bennett_quantum_2021, slate_quantum_2021,lozano_combinatorial_2019,kell_scientific_2012}. At their core, combinatorial optimisation problems (COPs) involve selecting a configuration from a finite set of discrete objects that minimises a given cost. In general, COPs are notoriously difficult to solve. They are often NP-hard due to their lack of easily exploitable structure and have a solution space that grows exponentially with the number of combinatorial variables~\cite{korte_2022,pardalos_complexity_1992}. For this reason, quantum computing emerges as a promising avenue to tackle these challenges~\cite{CZ12,FH16,boulebnane10solving,maslov2021quantum,bravyi2020quantum,wu2021strong,basso2021quantum}. By leveraging the exponential size of a Hilbert space of a relatively small number of qubits, quantum algorithms for combinatorial optimisation utilise superposition to search over the complete solution space in quantum parallel, creating an opportunity to exploit entanglement and interference as mechanisms to accelerate the identification of optimal or near-optimal solutions.  

Quantum Variational Algorithms (QVAs) are a class of hybrid quantum-classical algorithms for combinatorial optimisation ~\cite{Farhi14,marsh_quantum_2019,marsh_combinatorial_2020,HWO+19}. These algorithms utilise an alternating ansatz structure, comprised of interleaved parameterised phase-shift and mixing unitaries. The phase-shift unitary applies a phase rotation proportional to the solution costs of a COP to the basis states of a quantum superposition. The mixing unitary then drives the transfer of probability amplitude between states, where the phase-encoded costs result in interference. By adjusting the phase-shift and mixing parameters through classical optimisation, QVAs aim to induce constructive interference at low-cost solutions. As the number of ansatz iterations increases, so does the potential for convergence to optimal solutions, at the expense of a deeper quantum circuit and a larger optimisation space for the classical optimiser. Flexible circuit depth and robustness to system noise have flagged QVAs as a near-term application for quantum processors in the NISQ era~\cite{Preskill18,symons_practitioners_2023,zhou2020quantum}.

Researchers face several key challenges in the design of efficient QVAs. The first arises in constrained optimisation problems, where the space of valid solutions is restricted to a subset of the problem solution space~\cite{marsh_combinatorial_2020,marsh_quantum_2019,HWO+19}. Approaches based on the Quantum Approximate Optimisation Algorithm (QAOA) apply a penalty function that increases the cost associated with invalid solutions. Alternative QVAs specifically target constrained COPs, initialising a superposition over a space of valid solutions to which a mixing unitary is applied that preserves the problem constraints. An example of this approach is the Quantum Alternating Operator Ansatz (QAOAz), which uses the so-called XY-mixers to restrict the transfer of probability amplitude to states within the same permutation set~\cite{HWO+19,Brandhofer2023,CEB20}. Another example is the Quantum Walk-based Optimisation Algorithm (QWOA), in which an indexing unitary restricts its search to a lexicographically ordered canonical subspace of valid solutions~\cite{marsh_combinatorial_2020,marsh_quantum_2019}. More recently, the Unified Quantum Alternating Operator Ansatz defines constraint-preserving mixing unitaries according to the minimum non-zero Hamming distance between valid solutions, with state initialisation performed by post-selection conditioned on an oracle that marks the set of valid solutions~\cite{ruan_quantum_2023}.

Another challenge is identifying mixing unitaries that can achieve sufficient convergence to optimal solutions with a small number of ansatz iterations. The significance of this task is highlighted by considering the convergence behaviour of Grover's algorithm~\cite{Grover97}. It performs a deterministic search for a target state in a space of size $N$ that converges in $\sim\sqrt{N}$ iterations of its marking and diffusion unitaries; a result that has proven optimal for an unstructured quantum search~\cite{zalka_grovers_1999}. This has significant implications for the practicality of QVAs that employ mixing unitaries based on an unstructured coupling of the solution space, which includes the QWOA and variants of the QAOA, as they are expected to offer no more than a quadratic speedup over a uniform random sampling of the solution space~\cite{matwiejew_quantum_2023,bridi_analytical_2024,bennett2021quantum,symons_practitioners_2023}. As such, as the number of combinatorial variables grows, the corresponding exponential increase in the problem solution space necessitates an exponential increase in the number of ansatz iterations required by these algorithms to achieve a constant degree of convergence, rapidly resulting in impractical circuit depths.

Consequently, it is increasingly evident that an efficient QVA must use a mixing unitary that can exploit structural properties inherent to a target class of COPs. Notably, convergence exceeding that of Grover's algorithm has been observed for the QAOA applied to the maxcut and Boolean satisfiability problems~\cite{wurtz_maxcut_2021,basso2021quantum,boulebnane10solving}. More recently, the Quantum Multivariable Optimisation Algorithm (QMOA) demonstrated highly efficient convergence in the optimisation of high-dimensional oscillatory multivariable functions by employing a mixing unitary that couples across the coordinate dimensions of Cartesian space~\cite{matwiejew_quantum_2023}. The QAOA was also observed to achieve a similar degree of convergence on problems of this type, albeit with less consistent convergence to the globally optimal solution. 

Optimisation of variational parameters constitutes the final core challenge in QVA development. This task is inherently complex, as multivariate optimisation itself is generally NP-hard. As such, tuning the phase-shift and mixing parameters can entail a significant computational overhead. Parameter transfer offers an approach to mitigating this computational bottleneck. These strategies analyse the convergence behaviour of QVA parameters in the context of a specific type of COP to derive heuristics for parameter initialisation that can be generalised to higher ansatz iterations and larger problem instances~\cite{sureshbabu_parameter_2024,headley_problem-size-independent_2023}. This approach has shown considerable promise, particularly in QAOA formulations of the maxcut and weighted maxcut problems. Such heuristics have led to parameter initialisation schemes that generate initial parameters within 1.1\% of the best found through parameter optimisation~\cite{sureshbabu_parameter_2024,wang_quantum_2018}.

Addressing these challenges requires the development of models that describe the dynamics of QVAs in a manner that can guide the design of problem-specific mixing unitary structures and parameter initialisation schemes. A potential basis for this is the theoretical framework of continuous-time quantum walks (CTQWs), which describe a quantum evolution under a Hamiltonian determined by the adjacency structure of an undirected graph~\cite{ADZ93,venegas-andraca_quantum_2012,QWbook2014}. Continuous-time quantum walks have garnered significant attention in quantum algorithm research, particularly in quantum search algorithms aimed at amplifying the probability of single or multiple indistinguishable target states~\cite{CG04,marsh_quantum_2019,QMW+22,razzoli_universality_2022,novo_systematic_2015,marsh_deterministic_2021,SKW03}. These efforts often proceed by identifying symmetries within the graph structure, leading to a lower-dimensional description of the walk~\cite{novo_systematic_2015, razzoli_universality_2022}. However, the application of similar methods to the analysis of QVA mixing unitaries has been limited~\cite{wang_quantum_2018}. The QVA problem cost function results in complex phase distributions that, in general, preclude an exact description of its dynamics in terms of a lower-dimensional Hilbert space. 

Despite this apparent complexity, efforts to address the challenges of QVA design have established a strong connection between global structure in COPs, the convergence potential of QVA mixing unitaries, and the overhead associated with parameter optimisation. This motivates the focus of this work. We adopt a CTQW interpretation of the QVA mixing unitary and, within this picture, derive a compact description of its action on a superposition with phase-encoded solution costs. Our analysis is based on identifying the orbits of vertices under transformations in the graph automorphism group, which we find to be a natural approach for the defining graphs of many QVAs as they are distance or, at least, vertex-transitive. With this description, we prove that the ability of a particular mixing unitary to produce constructive interference is negatively correlated with the average variance over the phase angles of states within the same orbit. The resulting CTQW-based description of QVA dynamics is intuitive and highly instructive, providing explanations for the comparative performance of QVAs applied to the same COP class, practical heuristics for mixing unitary design, parameter initialisation, and efficiently computable figures of merit.

Based on this analytical description, we devise a heuristic for the design of mixing unitary structures in which the minimum non-zero Hamming distance among the valid solutions of a COP defines one or more underlying graphs. This is applied to derive novel QVAs for unconstrained and constrained COPs. The first is based on the QMOA, for which we identify the conditions under which the QMOA mixing unitary encompasses the class of Hamming graphs. In doing so, we establish a direct relationship between the QMOA and the QAOA, offering an explanation for the efficiency of the QAOA on COPs with binary variables. The resulting generalised QMOA can efficiently address unconstrained combinatorial optimisation problems with variables of arbitrary arity.

Moving to constrained COPs, our heuristic for mixing unitary design leads to a partitioning scheme based on permutations of multisets with multiplicities that correspond to the counts of combinatorial variable values for which a given constraint is satisfied. This leads to the development of a QWOA variant, the Quantum Walk-based Optimisation Algorithm over Combinatorial Subsets (QWOA-CS), which utilises a mixing unitary that is defined by two \emph{submixers} that first transfer amplitude within, and then between, subspaces that are defined by the partitioning scheme. We outline the efficient implementation of the QWOA-CS mixing unitary on a gate-based quantum processor through sparse Hamiltonian simulation and a novel Fourier-based method for the simulation of CTQWs on non-homogeneous $K$-partite graphs.

These novel algorithms are compared against pre-existing QVAs on two COPs of practical significance. The QMOA and QAOA are applied to the parallel machine (or task) scheduling problem, which aims to identify an optimal assignment of tasks to machines accounting for task completion time, task priority, non-homogeneous machine speeds, and energy consumption~\cite{xiao_branch_2021,henning_complexity_2020,azizoglu1999minimization,anand2012resource}. The QMOA exhibits a significant advantage over the QAOA, converging to the globally optimal solution in each considered problem instance. To the best of our knowledge, this is the first result indicating that QVAs may be efficiently applied to COPs of this type.

We also consider portfolio optimisation under the Mean-Variance Markowitz Model (MVMM)~\cite{Markowitz1952}. This COP, which is NP-hard with integer-constrained assets, has garnered much attention as a benchmarking application in the growing domain of quantum finance~\cite{buonaiuto_best_2023,baker_wasserstein_2022,slate_quantum_2021,hodson_portfolio_2019,OML19,qu_experimental_2024,hodson_portfolio_2019,Rebentrost18}. Two variants are commonly considered: budget-constrained optimisation and portfolio rebalancing~\cite{baker_wasserstein_2022,slate_quantum_2021}. The latter is characterised by ternary variables that do not have a one-to-one mapping to a Hilbert space of qubits. This results in a significant degree of solution degeneracy in a quantum formulation of the problem cost function~\cite{slate_quantum_2021,hodson_portfolio_2019}.

Previous work has found that the QWOA has a pronounced advantage over alternative methods based on the QAOA and QAOAz~\cite{slate_quantum_2021}. The QWOA indexing and unindexing unitaries remove degenerate solutions from the quantum search space. In comparison, the QAOA and QAOAz, are disadvantaged as their search domain is enlarged by the inclusion of degenerate solution states, which can bias their convergence to solutions away from the global optimum. Our numerical results indicate that the QWOA-CS can significantly outperform the original QWOA.

The structure of this article is as follows. In \cref{sec:background}, we provide theoretical background covering COPs, parallel machine scheduling, portfolio rebalancing, CTQWs, definitions from graph theory, quantum simulation of CTQWs, the QVA approach to combinatorial optimisation, relevant pre-existing QVAs, and the complexity of unitaries referenced in complexity analyses. In \cref{sec:theory}, we present our CTQW-based description of QVA dynamics, which, in \cref{sec:mixer-design}, leads to a heuristic for the construction of graphs for the QVA mixing unitary. In \cref{sec:figures_of_merit}, we introduce novel measures and analysis methods based on our theoretical framework. We then apply the design heuristic to develop the generalised QMOA and the QWOA-CS in \cref{sec:cart_qva,sec:perm_qva}. In \cref{sec:methods_and_results}, we present an analysis of the convergence supported by the graphs of the mixing unitaries considered in this work, benchmarks of QVA performance, and a parameter optimisation scheme based on analytical results. We discuss our results in \cref{sec:discussion} and provide concluding statements in \cref{sec:conclusion}.

\section{Background}
\label{sec:background}

In \cref{sec:cop}, we define the unconstrained and constrained combinatorial optimisation problems and their quantum encoding, followed by definitions of the parallel machine scheduling and portfolio rebalancing problems used for algorithm benchmarking in \cref{sec:methods_and_results}. \cref{sec:graphs_and_ctqws} provides relevant definitions from graph theory and introduces the CTQW. This is followed by an overview of efficient approaches to the quantum simulation of CTQWs in \cref{sec:ctqw_simulation}. In \cref{sec:qva} the QVA scheme for combinatorial optimisation is introduced together with an overview of the pre-existing algorithms that are relevant to this work. Finally, \cref{sec:complexity} provides a summary of the gate complexity and ancilla qubit overhead associated with non-standard unitary operations, which inform our complexity analyses in \cref{sec:cart_qva,sec:perm_qva,sec:methods_and_results}.

\subsection{Combinatorial Optimisation Problems} 
\label{sec:cop}

Consider a combinatorial optimisation problem (COP) with solutions represented as vectors $\bm{s} = (s_0, \dots, s_{n-1})$ of length $n$. Each element $s_i$ is drawn from a finite alphabet $\mathcal{X} = \{ x_0, \dots, x_{m-1} \}$ that contains $m$ distinct elements. The set of all possible solutions is denoted $\mathcal{S}$. Associated with $\mathcal{S}$ is a cost function, $C: \mathcal{S} \rightarrow \mathbb{R}$, that assigns a scalar \emph{cost} to each solution. An unconstrained COP has a valid solution space with a cardinality of $|\mathcal{S}| = m^n$. Constrained optimisation places additional requirements on $\bm{s}$ that restrict the space of valid solutions to a subset of $\mathcal{S}$ such that the space of valid solutions, denoted $\mathcal{S}^\prime$, has a cardinality less than $|\mathcal{S}|$.

This work considers both unconstrained and constrained COPs. In the latter case, we focus on the following equality constraint function~\cite{ruan_quantum_2023},
\begin{equation}
\label{eq:constraint}
h(\bm{s}) = \sum_{i=0}^{n-1} z(s_i) = A,
\end{equation}
where $z$ is an injective map $z: \mathcal{X} \rightarrow \mathbb{Z}$, and the space of valid solutions is defined as $\mathcal{S}' = \{ \bm{s} \in \mathcal{S} \; | \; h(\bm{s}) = A \}$.

In either case, the aim is to identify low-cost solutions that solve or closely approximate,
\begin{equation}
\bm{s}_{\text{opt}} = \Big\{\bm{s} \; | \; C(\bm{s}) = \min \{ C(\bm{s}) \; | \; \bm{s} \in \mathcal{S}'{}\}\Big\}.
\end{equation}
where, for unconstrained optimisation, $\mathcal{S} \equiv \mathcal{S}^\prime$.

\subsubsection{Quantum Encoding of the Problem Solution Space}
\label{sec:cop_encoding}

A quantum encoding of a COP solution space defines a mapping from the alphabet $\mathcal{X}$ to a space of bit-strings $\{0, 1\}^{\lceil \log m \rceil}$. The solution space $\mathcal{S}$ then has a bit-string encoding that is efficiently represented by a superposition over a Hilbert space $\mathcal{H}$ of $n \lceil \log m \rceil$ qubits. Under the condition that the problem cost function $C(\bm{s})$ is expressible as a polynomial function of binary variables, it has a $Z$-basis encoding that defines a diagonal quality operator $\hat{Q}$ where,
\begin{equation}
q_{\bm{s}} = \bra{\bm{s}}\hat{Q}\ket{\bm{s}}
\end{equation}
and $q_{\bm{s}} = C(\bm{s})$~\cite{nielsen_quantum_2010,matwiejew_quop_mpi_2022}.

For problems in which the mapping from $\mathcal{X}$ to $\{0, 1\}^{\lceil \log m \rceil}$ is not one-to-one, the set of feasible solutions, $\mathcal{S}$, may have a cardinality smaller than $\mathcal{H}$, resulting in degenerate or invalid solution states~\cite{slate_quantum_2021}.

For a COP to be efficiently solvable by a QVA, it is also required that $C(\bm{s})$ is computable in polynomial time and that its magnitude is polynomially bounded with respect to $|\mathcal{S}|$. The first requirement ensures that $\hat{Q}$ is computable efficiently in quantum parallel, and the second allows efficient computation of the expectation value of $\hat{Q}$ by the central limit theorem~\cite{crescenzi_structure_1999, matwiejew_quop_mpi_2022}.

\subsubsection{Parallel Machine Scheduling}
\label{sec:pms_background}

For a schedule of $n$ jobs that can be assigned to $m$ machines, the parallel machine scheduling problem (PMS) seeks a configuration of job assignments that minimises time and resource usage. For many variants, in particular those with discrete job assignments, the PMS problem is NP-hard~\cite{henning_complexity_2020,azizoglu1999minimization}. 

We consider a variant of the PMS problem that considers both the overall completion time and energy consumption~\cite{xiao_branch_2021,azizoglu1999minimization,anand2012resource}. The space of feasible solutions for this problem includes all configurations in which each job is assigned to exactly one machine, with multiple assignments to the same machine allowed. 

The solution space is encoded as,
\begin{equation}
    \ket{\mathcal{S}} = \bigotimes_{i=0}^{n-1}\left(\ket{x_{0,i}} + \ket{x_{1,i}} + \dots + \ket{x_{m-1,i}}\right)
\end{equation}
where $x_{j,i}$ denotes the assingment of machine $x_j$ to job $i$, $x_j \in \{0, 1, \dots, m-1\}$ and $\ket{s_i} = \ket{x_{j,i}}$. In relation to this encoding, the problem cost function is expressed as,
\begin{equation}
\label{eq:pms_costs}
\hat{Q} = \,  \sum_{i=1}^{n-1} \sum_{j=0}^{m-1} \left[ \eta \, \left(\frac{w_i\tau_{i}}{\kappa_j}\right) + (1 - \eta) \, \kappa_j^\alpha  \frac{\tau_{i}}{\kappa_j} \right] \ket{x_{j,i}}\bra{x_{j,i}}
\end{equation}
where $w_i \geq 0$ weights job $i$ based on its priority, $\tau_i$ is the processing time of job $i$, $\kappa_j$ is the speed of machine $j$, and $\alpha > 1$ defines the power function of the machines~\cite{xiao_branch_2021}. The weighting variable $\eta$ takes values between $0$ and $1$; when $\eta = 1$, the optimal solution minimises the weighted processing time, whereas for $\eta= 0$, the optimal solution minimises power usage.

\subsubsection{Portfolio Rebalancing}
\label{sec:portfolio}

Portfolio rebalancing is the task of adjusting asset positions in a portfolio to preserve an allocated level of risk and expected return in response to market drift~\cite{guastaroba_models_2009,Markowitz1952}. For a portfolio of $n$ assets, the possible positions are:

\begin{itemize}
    \item \textbf{Long}: Buying and holding an asset in anticipation of its value increasing.
    \item \textbf{Short}: Borrowing and selling an asset with the intention of repurchasing it at a lower price.
    \item \textbf{No}: Not including the asset in the portfolio.
\end{itemize}

A qubit encoding for a problem uses two qubits per asset in which the possible positions are assigned to~\cite{hodson_portfolio_2019},

\begin{itemize}
\item $\ket{01} \rightarrow$ long position.  
\item $\ket{10} \rightarrow$ short position. 
\item $\ket{00}$ and $\ket{11} \rightarrow $ no position.
\end{itemize}

With respect to the above encoding, the MVMM cost function takes the form,
\begin{equation} \label{eq:portfolio_cost_function}
\hat{Q} = \eta \sum_{i,j = 0}^{n-1} \sigma_{ij} \zeta_i \zeta_j - (1 - \eta) \sum_{i = 0}^{n-1} r_i \zeta_i,
\end{equation}
subject to the constraint,

\begin{equation}
A = \sum_{i = 1}^{n} \zeta_i,
\end{equation}
where $\zeta_i = \frac{1}{2}(\hat{Z}_{2i} - \hat{Z}_{2i+1})$ has eigenvalues $(-1, 1, 0)$ for long, short and no positions respectively, $\sigma_{ij}$ is the covariance between assets $s_i$ and $s_j$, and $r_i$ denotes the average return. The parameter $\eta \in \mathbb{R}$ allocates a balance between $r_i$ and $\sigma_{ij}$. As $\eta \rightarrow 0$, the model prioritizes maximum returns, while $\eta \rightarrow 1$ emphasizes minimising volatility~\cite{slate_quantum_2021}. 

The constraint $A$ manages the net position of the portfolio, influencing both its market exposure and its diversification. The balance of long and short positions is an important factor, in general, since short selling, which involves borrowing assets, carries inherent risks. Additionally, $A$ can be chosen to reflect the manager's view on market trends or preference for portfolio stability. For example, $A = 0$ is likely to produce $\bm{s}$ with expected returns that are less prone to market fluctuations~\cite{Markowitz1952,guastaroba_models_2009,green15,RL18}.

\subsection{Graphs and Continuous-Time Quantum Walks}
\label{sec:graphs_and_ctqws}

This work describes the QVA mixing unitary using the conceptual lense of CTQWs. Doing so allows the dynamics of a mixing unitary to be described using graph-theory concepts and terminology. This viewpoint underpins the approach to mixing unitary design presented in \cref{sec:mixer-design} and the algorithms introduced in \cref{sec:cart_qva,sec:perm_qva}.

\subsubsection{Graph Theory}
Here, we provide the formal definition of a graph and the graph properties relevant to this work~\cite{bener_2015,biggs_1974}. Let $G = (V, E)$ be a graph with vertices $V = \{ v \}$ and edges  $E = \{ (v, v^\prime)\}$ .  We require that $G$ be undirected, meaning that if $(v, v^\prime) \in G$ then $(v^\prime, v) \in G$.

The adjacency matrix of $G$ is a square matrix of size $(|V|, |V|)$ where the entry $G_{ij}$ is one if the vertices $v_i$ and $v_j$ are adjacent (that is, there exists an edge between $v_i$ and $v_j$), and zero otherwise. The degree of a vertex $v$ in a weighted graph is the count of edges incident to $v$, denoted $\text{deg}(v)$.

A graph is regular if all vertices have the same degree, which means $\text{deg}(v) = \text{deg}(v^\prime)$ for all $v, v^\prime \in V$. The distance between two vertices $v$ and $v^\prime$ in a graph is the minimum number of edges required to connect them, denoted $\text{dist}(v, v^\prime)$. The diameter of graph $G$ is defined as the largest distance between any pair of its vertices, denoted $\text{diam}(G) = \max \text{dist}(v, v^\prime)$. The shell (or neighbourhood) of a vertex $v$, denoted as $\mathcal{N}_d(v)$, refers to the set of vertices with distance $d$ from vertex $v$~\cite{fortunato_community_2012}. 

A graph automorphism is a bijection $g: V \rightarrow V $ that permutates (or ``relables'') the vertices of $G$ in a manner that preserves the adjacency structure of the graph. All such functions belong to the automorphism group of $G$, denoted $\text{Aut}(G)$. Members of $\text{Aut}(G)$ where $G$ is a finite graph have a representation as a permutation matrix that acts on the graph adjacency matrix as $\pi_{g}^{-1}{G}\pi_{g} = G^\prime$~\cite{biggs_1974}.

A graph is vertex-transitive if, for every pair of vertices $(v, v^\prime)$, there exists a non-trivial automorphism $g: V \rightarrow V$ such that $ g (v) = v^\prime$. A graph is distance-transitive if for any two pairs of vertices $(u, v)$ and $(u^\prime, v^\prime)$ that are the same distance apart, there is an automorphism, $g: V \times V \rightarrow V \times V$, that maps $u$ to $u^\prime$ and $v$ to $v^\prime$. The orbit of a vertex $v$ under the action of $g \in \text{Aut}(G)$ is the set of all $v^\prime \in V$ that can be obtained from $v$ by applying elements of $\text{Aut}(G)$~\cite{biggs_1974}. Every vertex belongs to one orbit. We denote the number of distinct orbits by $|\text{Orb}(G)|$.

\subsubsection{Continuous Time-Quantum Walks}

A quantum search by CTQW defines a correspondence, $\mathcal{S} \leftrightarrow V$, between the quantum solution space $\mathcal{S}$ and the set of vertexes of a graph $G$. A continuous-time quantum walk over $\mathcal{S}$ is then defined as,

\begin{equation}\label{eq:ctqw}
\ket{\psi(t)} = \exp(-\text{i} t \hat{H})\ket{\psi_0}    
\end{equation}
where $\ket{\psi_0} = \sum_{\bm{s} \in \mathcal{S}} c_{\bm{s}} \ket{\bm{s}}$ with complex coefficients $c_{\bm{s}}$. The Hamiltonian $\hat{H}$ is given by the adjacency matrix of $G$~\cite{jafarizadeh_investigation_2007}, or alternatively, the graph Laplacian, 
\begin{equation}\label{eq:laplacian}
\mathcal{L} = D - G
\end{equation}
where $D$ is a diagonal matrix of vertex degrees and $G$ takes the form of an adjacency matrix~\cite{razzoli_universality_2022}.

\subsection{Quantum Simulation of CTQWs}
\label{sec:ctqw_simulation}

Implementing CTQWs on digital quantum computers falls within the scope of quantum Hamilton simulation. As such, the mixing unitaries of an efficient QVA must correspond to some family of efficiently simulatable Hamiltonians. This is most straightforward when $G$ is efficiently expressed as the sum of a number of commuting Puali strings that grow at most polynomially with $n$~\cite{nielsen_quantum_2010}. Otherwise, the adjacency matrix of $G$ must be sparse or belong to some class of efficiently preparable dense Hamiltonians~\cite{marsh_combinatorial_2020,childs_limitations_nodate}.

\subsubsection{Sparse Graphs}
\label{sec:sparse_hamiltonian}

Let $G$ be a graph with a maximum vertex degree that grows at most polynomially with $n$. The adjacency matrix of $G$ will be sparse relative to $|\mathcal{S}|$ and therefore its CTQW efficiently implemented as a sparse Hamiltonian simulation. In this approach, $\hat{H}$ is prepared using oracles, 

\begin{equation}
    \text{COLUMN}\ket{i}\ket{y} = \ket{i}\ket{y \oplus j},
\end{equation}
where $i$ corresponds to a non-zero row and $j$ to a non-zero column, and

\begin{equation}
    \text{ELEMENT}\ket{i}\ket{j}\ket{z} = \ket{i}\ket{j}\ket{z \oplus \hat{H}_{ij}},
\end{equation}
where $\hat{H}_{ij}$ is a non-zero value at position $(i, j)$~\cite{berry2009black}.

Sparse Hamiltonian simulation based on quantum signal processing approximates the time-evolution operation to high precision by a truncated Joacobi-Anger expansion. An expansion of $R$ terms is computed a sequence of parameterised unitaries,
\begin{equation}
\ket{\psi(t)}\ket{\alpha} \approx \hat{U}_{\phi_0}\hat{U}_{\phi_1} \dots \hat{U}_{\phi_{R-1}}\ket{\psi_0}\ket{\alpha}
\end{equation}
comprised of single-qubit rotations on an ancilla register ${\ket{\alpha}}$, and a controlled walk operator that encodes $\hat{H}$ and performs SU(2) rotations on two-dimensional subspaces that span the eigenstates of $\hat{H}$~\cite{berry2009black,low2017optimal,low_hamiltonian_2019}. Simulation up to error $\epsilon$ requires,
\[R = \mathcal{O}\left( d ||\hat{H}||_{\text{max}} t + \frac{\log{(1/\epsilon)}}{\log{\log{(1/\epsilon)}}} \right),\] and $\lceil \log d \rceil + 2$ ancilla qubits, where $||\hat{H}||_\text{max}$ is the largest singular value of $\hat{H}$ and $d$ is the maximum number of non-zero enteries in a row of $\mathcal{H}$~\cite{low_hamiltonian_2019}. The singular values of graph adjacency matrices are upper bound by the maximum vertex degree ~\cite{biggs_1974}. 

\subsubsection{Cirulant Graphs}

Undirected circulant graphs can be efficiently simulated regardless of their sparsity~\cite{zhou_quantum_2017,zhou_efficient_2017}. This efficiency is due to the relationship between the eigensystem of circulant matrices and the discrete Fourier transform. The eigenvalues of a circulant matrix have the closed-form solution,
\begin{equation}
\lambda_j = b_0 + b_{N-1}\omega^j + b_{N-2} \omega^{2j} + \cdots + b_1 \omega^{(N-1)j},
\end{equation}
where $N$ is the size of the matrix, $b_{i=0,\ldots,N-1}$ defines the first row of the circulant matrix, $\omega = \exp\left(\frac{2 \pi \text{i}}{N}\right)$ is a primitive $N^\text{th}$ root of unity, and $j=0,\ldots,N-1$. The corresponding eigenvectors are given by,
\begin{equation}
\nu_j = \frac{1}{\sqrt{N}} ( \omega^j, \omega^{2j}, \ldots, \omega^{(N-1)j}),
\end{equation}
which form the matrix of the discrete Fourier transform~\cite{davis_1979}. Consequently, a continuous-time quantum walk (CTQW) can be efficiently simulated using,
\begin{equation}
\mathcal{F}^{-1} e^{-\text{i} t \hat{\Lambda}} \mathcal{F} \ket{\psi_0},
\end{equation}
where $\mathcal{F}$ is the quantum Fourier transform.

\subsection{Quantum Variational Algorithms}
\label{sec:qva}

Quantum variational algorithms address combinatorial optimisation using a variable sequence of ansatz iterations, summarised as,

\begin{equation} \label{eq:qaoa}
\ket{\bm{\theta}} = \prod_{i = 1}^{p} \hat{U}(\theta_i)\ket{\psi_0}.
\end{equation}
where positive integer $p$ is the number of ansatz iterations, $\hat{U}$ is the ansatz unitary, 
\[\bm{\theta} = ((\gamma_0, \bm{t}_0), (\gamma_1, \bm{t}_1), \dots, (\gamma_{p-1}, \bm{t}_{p-1}))\]
contains real-valued variational parameters for each ansatz iteration and $\ket{\psi_0}$ is an algorithm-specified superposition over $\mathcal{S}$~\cite{matwiejew_quop_mpi_2022}. 

The ansatz unitary encapsulates the phase-shift and mixing unitary, 
\begin{equation}
	\hat{U}(\theta_i)=\hat{U}_W(\bm{t}_i)\hat{U}_Q(\gamma_i).
\end{equation}
The phase-shift unitary, 
\begin{equation}
	\hat{U}_Q(\gamma) = \exp(-\text{i} \gamma \hat{Q}),
	\label{eq:phase-shift}
\end{equation}
applies a phase-rotation proportional to $q_{\bm{s}}$, encoding the solution costs into the phase of the solution states. The mixing unitary $\hat{U}_W$ performs one or more CTQWs with the walk times specified by the $\bm{t}$ parameter. 

A QVA is then executed by repeated preparation of $\ket{\bm{\theta}}$, optimising the variational parameters according to the objective function, 
\begin{equation}
    \label{eq:objective}
	\langle \hat{Q} \rangle = \bra{\bm{\theta}} \hat{Q} \ket{\bm{\theta}}.
\end{equation}

The mixing unitary drives the transfer of probability amplitude between $\ket{\bm{s}}$ during which phase differences encoded by $\hat{U}_Q$ are transferred, causing interference~\cite{matwiejew_quantum_2023}. 

\subsubsection{The Quantum Approximate Optimisation Algorithm}
\label{sec:QAOA}

The foundational Quantum Approximate Optimisation Algorithm addresses unconstrained COPs~\cite{Farhi14}. Its mixing unitary is defined by an $n$-dimensional hypercube graph,

\begin{equation}
    \label{eq:hypercube}
    \hat{G}_\text{QAOA} = \sum_{i=0}^{n \lceil \log m \rceil -1} \hat{X}_i,
\end{equation}
where $\hat{X}_i$ is a NOT gate applied to the $i$-th qubit~\cite{Farhi14,matwiejew_quop_mpi_2022}.

The initial state for the QAOA is an equal superposition over the entire solution space,

\begin{equation}
    \ket{\psi_{0-\text{QAOA}}} = \frac{1}{\sqrt{|\mathcal{S}}|} \sum_{\bm{s} \in \mathcal{S}} \ket{\bm{s}}.
\end{equation}

Application of the QAOA to constrained COPs is performed by defining an unconstrained version of the problem with the modified cost-function,
\begin{equation}
    C_{\text{QAOA}}(\bm{s}) = 
    \begin{cases} 
    C(\bm{s}) & \text{if } \bm{s} \in \mathcal{S}^\prime, \\
    C(\bm{s}) + \Omega(\bm{s}) & \text{otherwise},
\end{cases}
\end{equation}
where $\Omega(\bm{s}) > 0$ is a penalty function that increases the cost of invalid solutions~\cite{slate_quantum_2021,hodson_portfolio_2019}.

\subsubsection{The Quantum Multivariable Optimisation Algorithm}

The Quantum Multivariable Optimisation Algorithm (QMOA) was developed to address the unconstrained optimisation of continuous multivariable functions of dimension $n$~\cite{matwiejew_quantum_2023}. Its mixing unitary is a composite CTQW that mixes with respect to the dimensions of Cartesian space,
\begin{equation}
\hat{U}_{W\text{-QMOA}}(\bm{t}) = \prod_{i=0}^{n-1}\exp(-\text{i}t_i \hat{G}_i),
\end{equation}
where $\bm{t} = (t_0,...,t_{n-1})$ specifies the walk times for the CTQWs, and $\hat{G}_i$ is the adjacency matrix of a circulant graph that fully-connects solutions in dimension $i$. 

As each $\hat{G}_i$ is circulant, $\hat{U}_{W\text{-QMOA}}(\bm{t})$ is efficiently computable as, 
\begin{equation}
\label{eq:qmoa_mixer}
(\mathcal{F}^{-1})^{\otimes n} \exp\left(- \text{i} \hat{\Lambda}(\bm{t}) \right) \mathcal{F}^{\otimes n},
\end{equation}
where $\hat{\Lambda}(\bm{t})$ is defined by,
\begin{equation}
\label{eq:qmoa_eigenvalues}
\hat{\Lambda}(\bm{t}) = \sum_{i = 0}^{n-1}t_{i}\sum_{j=0}^{N_i - 1}\Lambda_{i,j}\ket{x_{i,j}}\bra{x_{i,j}}.
\end{equation}
Here, $\Lambda_{i,j}$ are the eigenvalues of $\hat{G}_i$, $N_i$ is the number of grid points and $\ket{x_{i,j}}$ is a discretised coordinate state, all in dimension $i$. The initial state of the QMOA is a uniform superposition over all coordinate states. The QMOA inherets a gate complexity of $\mathcal{O}\left(\text{polylog} \, \max(N_0, \dots, N_{n-1})\right)$ from the quantum Fourier transform~\cite{matwiejew_quantum_2023}.

\subsubsection{The Quantum Alternating Operator Ansatz}
\label{sec:qaoaz}

The Quantum Alternating Ansatz extends the QAOA to constrained combinatorial optimisation by, first, initialising the system as a superposition over valid solution states and, second, utilising a mixing unitary that couples between solutions satisfying the same constraint value~\cite{HWO+19}. 

Most relevant to this study are the so-called XY-mixers, which act on $\{(0,1),(0,1)\}$ subspaces in $\bm{s}$,
\begin{equation}
    \hat{G}_{XY} = \sum_{i,j \in E(G)}(\hat{X}_i\hat{X}_j + \hat{Y}_i\hat{Y}_j)
\end{equation}
where $E(G)$ is a set of edges associated with a graph that defines the XY-mixer coupling structure.

Two common forms of $\hat{G}_{XY}$ are the parity-ring and complete XY-mixers. The first defines $G(E)$ as two disjoint cycle graphs over alternating qubit pairs, mixing first on even qubit pairs and second on odd pairs. This has the effect of restricting transitions to states with the same parity. The complete XY-mixer defines $G(E)$ as a complete graph, performing a mixing operation over states within the same permutation subspace~\cite{wang_xy-mixers_2020,ruan_quantum_2023}. The gate complexity is $\mathcal{O}(n)$ and $\mathcal{O}(n^2)$ for the parity and complete XY-mixers respectively~\cite{Brandhofer2023}. 

In both instances, the state evolution is restricted to disjoint subgraphs, several of which may contain valid solutions. As such, the initial state is prepared as the Dicke state that populates subgraphs that statfy the problem constraint $A$~\cite{slate_quantum_2021,Brandhofer2023}. For example, portfolio rebalancing with the QAOAz parity or complete XY-mixers initialises $\ket{\psi_0}$ as,

\begin{equation}
    \ket{\psi_0} = \frac{1}{2^{(n-|A|)/2}}\ket{y}^{\otimes |A|}\otimes (\ket{00}+\ket{11})^{\otimes(n-|A|)} \, ,
\end{equation}
where $\ket{y}=\ket{01}$ if $A \geq 0$ or $\ket{10}$ if $A < 0$. This state is efficient to prepare but binomially distributes probability over the valid subgraphs~\cite{hodson_portfolio_2019,slate_quantum_2021}. 

\subsubsection{The Quantum Walk-based Optimisation Algorithm}
\label{sec:qwoa}
The Quantum Walk-based Optimisation Algorithm offers an alternative approach to constrained optimisation in which the evolution of the mixing unitary is restricted to a lexicographically ordered canonical subspace of valid solutions $\mathcal{S^\prime}$~\cite{marsh_combinatorial_2020,marsh_quantum_2019}. Given efficient algorithms $\text{id}$ and $\text{id}^{-1}$ that lexicographically index and unindex $\mathcal{S^\prime}$, it defines an indexing and conjugate unindexing unitary. The action of the indexing unitary is,

\begin{equation}
\label{eq:qwoa_index_5}
\hat{U}_{\#}\ket{\bm{s}}=\left\{\begin{array}{l}
\left|\mathrm{id}(\bm{s})\right\rangle \text{if } \bm{s} \in \mathcal{S}^{\prime}, \\
\ket{\bm{s}} \text { otherwise }.
\end{array}\right.
\end{equation}

As illustrated in \cref{qc:qwoa-indexing}, the indexing unitary is constructed using two registers of $\mathcal{O}(n \log m)$ qubits and a self-adjoint implementation of the indexing algorithm $\hat{U}_{\text{id}}$. It first prepares,
\begin{equation}
    \hat{U}_{\text{id}} \ket{\bm{s}}\ket{0} = \ket{\bm{s}}\ket{\text{id}(\bm{s})},
\end{equation}
followed by a swap operation. The desired state is then given by $\hat{U}_{\text{id}} \ket{\text{id}(\bm{s})}\ket{\bm{s}} = \ket{\text{id}(\bm{s})}\ket{0}$. The unindexing unitary $\hat{U}_\#^\dagger$ is similarly defined.

\begin{figure}[t!]
       \centering
        \resizebox{5.57978cm}{!}{%
        \begin{quantikz}[column sep = 0.23cm, row sep = 0.3cm]
        \lstick{\ket{\bm{s}}} & \qwbundle{} & \gate[2][1.1cm]{\hat{U}_{\text{id}}} & \swap{1} & \gate[2][1.1cm]{\hat{U}_{\text{id}}} & \qw  \rstick{\ket{\text{id}(\bm{s})}} \\
        \lstick{\ket{0}} & \qwbundle{} &  \qw & \targX{}  & \qw & \qw  \rstick{\ket{0}}
        \end{quantikz}
        }
    \caption[Circuit overview of the QWOA indexing unitary]{Circuit overview of the QWOA indexing unitary $\hat{U}_\#$.}
    \label{qc:qwoa-indexing}
\end{figure}

The unindexing unitary enables the preparation of an initial state that is equally superposed over $\mathcal{S}^\prime$,
\begin{equation}
\label{eq:qwoa_initial_state_5}
 \ket{\psi_{0\text{-QWOA}}}=\frac{1}{\sqrt{\left|\mathcal{S}^\prime\right|}}\sum\limits_{\bm{s} \in \mathcal{S}^\prime}\ket{\bm{s}}
\end{equation}
by first applying the Fourier transform modulo $|\mathcal{S}^\prime|$ (the quantum Fourier transform on the first $|\mathcal{S}^\prime|$ states), denoted $\mathcal{F}_{|\mathcal{S}^\prime|}$, to $\ket{0}$  followed by $\hat{U}_{\#}^\dagger$. There are highly depth-effcient circuits for the quantum Fourier transform modulo $N$ with gate complexity $\mathcal{O}(\text{polylog} \, N)$~\cite{cleve2000}. 

The QWOA mixing unitary $\hat{U}_{W\text{-QWOA}}$ then performs an indexed CTQW over $|\mathcal{S}^\prime|$ as,
\begin{equation}
\label{eq:qwoa_mixer_5}
\hat{U}_{W\text{-QWOA}}(t) = \hat{U}_\# \exp(-\text{i} t \hat{G}) \hat{U}_\#^\dag.
\end{equation}
where $\hat{G}$ is a complete graph of size $|\mathcal{S}^\prime|$.

\subsection{Complexity of Fundamental Unitaries}
\label{sec:complexity}

Algorithms presented in \cref{sec:cart_qva,sec:perm_qva,sec:methods_and_results} depend on unitary implementations for integer addition, division, and multiplication, as well as comparator and decrement unitaries. These are summarised in \cref{table:quantum_operations}, together with their associated gate complexity and ancillary qubit requirements. For the arithmetic operations, the reported complexity corresponds to Fourier-based methods due to their low ancilla qubit requirements and reversibility (i.e. no ``garbage'' is left in ancilla registers).

\begin{table*}[!ht]
\centering
\resizebox{\linewidth}{!}{%
\begin{tabular}{@{}lcccc@{}}
\toprule
\textbf{Operation} & \textbf{Definition} & \textbf{Register Sizes} & \textbf{Ancilla Qubits} & \textbf{Gate Complexity} \\
\midrule
Addition~\cite{ruiz-perez_quantum_2017} & $\hat{U}_+\ket{a}\ket{b}=\ket{a}\ket{a + b}$ & $\mathcal{O}(r)$ & None & $\mathcal{O}(r^2)$ \\
Division~\cite{khosropour_quantum_2011} & $\hat{U}_{/}\ket{a}\ket{b}\ket{0}=\ket{0}\ket{b}\ket{a/b}$ & $\mathcal{O}(r)$  & $\mathcal{O}(r)$ & $\mathcal{O}(r^3)$ \\
Multiplication~\cite{ruiz-perez_quantum_2017} & $\hat{U}_{\times}\ket{a}\ket{b}\ket{0}=\ket{a}\ket{b}\ket{ab}$ & $\mathcal{O}(r)$ & $\mathcal{O}(r)$ & $\mathcal{O}(r^3)$ \\
Compare~\cite{cuccaro_new_2004} & $\hat{U}_{>}\ket{a}\ket{b}\ket{0}=\ket{a}\ket{b}\ket{a>b}$ & $\mathcal{O}(r)$ & 1 & $\mathcal{O}(r)$ \\
Decrement \cite{douglas2009efficient} & $\hat{U}_{-1}\ket{a}=\ket{a-1}$ & $\mathcal{O}(r)$ & $\mathcal{O}(r)$ & $\mathcal{O}(r)$ \\
\bottomrule
\end{tabular}
}
\caption[Summary of fundamental gate complexity and ancilla requirements]{Summary of register sizes and gate complexity of fundamental operations on unsigned integers represented by $r$ qubits. A quantum encoding of the solution space of a COP with solutions of size $n$ and an alphabet of cardinality $|\mathcal{X}| = m$ requires $r = n \lceil \log m \rceil$ qubits (see \cref{sec:cop}).}
\label{table:quantum_operations}
\end{table*}

\section{Graph Subshells and Phase Discrepancy}
\label{sec:theory}

We now present the first contribution of this work, in which a CTQW description of QVA mixing unitaries that are defined by vertex-transitive graphs is used to develop a compact description of their action on phase-encoded solution states. This description is then used to derive conditions under which the combined action of $\hat{U}_W$ and $\hat{U}_Q$ can produce substantial constructive interference and, therefore, efficient convergence to an arbitrary solution state. We choose to focus on vertex-transitivity as such graphs do not inherently bias the convergence of a QVA and, as demonstrated in \cref{sec:cart_qva,sec:perm_qva,sec:methods_and_results}, encompass many graphs for which an efficient method for the simulation of their CTQW is known.

Let $G$ be a vertex-transitive graph. Relative to an arbitrary choice of reference vertex $\bm{s}$, we delineate a partitioning of its vertices into \emph{subshells}, which are defined as follows. 

\begin{definition}[Graph Subshells]
\label{def:subshells}
Let $\text{Aut}(G, \bm{s})$ be the subgroup of graph automorphisms that permute vertices $\bm{s}^\prime$ in shell $\mathcal{N}_d(\bm{s})$,
\begin{equation}
\text{Aut}(G, \bm{s}) = \{g \in \text{Aut}(G) \mid g(\mathcal{N}_d(\bm{s})) = \mathcal{N}_d(\bm{s})\},
\end{equation}
The subshells of $G$ are the sets of disjoint vertices generated by the orbits of $g \in \text{Aut}(G, \bm{s})$. The $k$-th subshell at a distance $d$ is denoted $\mathcal{N}_{d, k}(\bm{s})$ and the total number of orbits associated with the shell $\mathcal{N}_d(\bm{s})$ as $|\text{Orb}(\text{Aut}(G, \bm{s}))|$.
\end{definition}

The subshell partitioning of a distance-transitive $n=3$ hypercube graph and a vertex-transitive \emph{constrained permutation graph} (introduced in \cref{sec:perm_qva}) is illustrated in \cref{fig:subshells}. Based on this partitioning, we derive a compact expression for the amplitude transfer induced by a CTQW on a vertex-transitive graph. 

\begin{theorem}[Subshell Coefficients]
    \label{th:subshells}
    The amplitudes of a CTQW can be expressed as a linear combination of the complex coefficients of $\ket{\psi_0}$ associated with subspaces,
    \[\{ \ket{\bm{s}^\prime} \mid \bm{s}^\prime \in \mathcal{N}_{d, k}(\bm{s}) \} \subseteq \mathcal{H},\]
     grouped under time-dependant \emph{subshell coefficients} $w_{d,k}(t)$,
    \begin{equation}
    \label{eq:amp_dist}
    \E{\bm{s} | \psi(t)} = \sum_{d=0}^{D} \sum_{k=0}^{K_d-1}w_{d,k}(t)\sum_{\bm{s}^\prime \in \mathcal{N}_{d, k}(\bm{s})} c_{\bm{s}^\prime},
    \end{equation}
    where $D = \text{diam}(G)$, $K_d = |\text{Orb}\left({\text{Aut}(G, \bm{s})}\right)|$ and, 
    \[\ket{\psi_0} = \sum_{\bm{s} \in \mathcal{S}}c_{\bm{s}}\ket{\bm{s}} \]
    with time-independent complex coefficients $c_{\bm{s}}$.
\end{theorem}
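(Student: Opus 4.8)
The plan is to work directly with the matrix elements of the walk operator $\hat{U}(t) = \exp(-\text{i}t\hat{H})$ and to show that, once the reference vertex $\bm{s}$ is fixed, each element $\bra{\bm{s}}\hat{U}(t)\ket{\bm{s}^\prime}$ depends on $\bm{s}^\prime$ only through the subshell $\mathcal{N}_{d,k}(\bm{s})$ that contains it. Substituting $\ket{\psi_0} = \sum_{\bm{s}^\prime}c_{\bm{s}^\prime}\ket{\bm{s}^\prime}$ into \cref{eq:ctqw} gives
\[
\E{\bm{s}|\psi(t)} = \sum_{\bm{s}^\prime \in \mathcal{S}} \bra{\bm{s}}\hat{U}(t)\ket{\bm{s}^\prime}\, c_{\bm{s}^\prime},
\]
so if the matrix element is constant across each subshell I may define $w_{d,k}(t) := \bra{\bm{s}}\hat{U}(t)\ket{\bm{s}^\prime}$ for any representative $\bm{s}^\prime \in \mathcal{N}_{d,k}(\bm{s})$, factor it out of the inner sum, and recover \cref{eq:amp_dist}. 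The whole theorem thus reduces to this invariance of the walk amplitudes under the subshell-defining automorphisms.

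Second, I would establish that invariance from the commutation of $\hat{H}$ with the permutation representation of $\text{Aut}(G)$. For any $g \in \text{Aut}(G)$ the permutation matrix $\pi_g$ satisfies $\pi_g^{-1}G\pi_g = G$, and since a vertex-transitive graph is regular the degree matrix $D$ in \cref{eq:laplacian} is a scalar multiple of the identity; hence $[\pi_g, \hat{H}] = 0$ for either choice of Hamiltonian, and consequently $[\pi_g, \hat{U}(t)] = 0$ by term-by-term expansion of the exponential. Conjugating the walk operator then gives the key identity $\bra{g(\bm{a})}\hat{U}(t)\ket{g(\bm{b})} = \bra{\bm{a}}\hat{U}(t)\ket{\bm{b}}$ for all vertices $\bm{a},\bm{b}$. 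Specialising to $\bm{a}=\bm{s}$ and restricting to those $g$ that fix $\bm{s}$ --- the stabiliser underlying $\text{Aut}(G,\bm{s})$ in \cref{def:subshells} --- yields $\bra{\bm{s}}\hat{U}(t)\ket{g(\bm{s}^\prime)} = \bra{\bm{s}}\hat{U}(t)\ket{\bm{s}^\prime}$, so the amplitude is constant along the orbit of $\bm{s}^\prime$, which is exactly the subshell $\mathcal{N}_{d,k}(\bm{s})$.

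Finally, I would check that the subshells genuinely partition $\mathcal{S}$ so the regrouping is exhaustive: an automorphism fixing $\bm{s}$ preserves graph distance, hence maps each shell $\mathcal{N}_d(\bm{s})$ to itself, so the orbits of $\text{Aut}(G,\bm{s})$ refine the shells, every vertex lies in exactly one subshell, and the distances range over $0 \le d \le D = \text{diam}(G)$ with $k$ indexing the $K_d = |\text{Orb}(\text{Aut}(G,\bm{s}))|$ orbits in shell $d$. Collecting the constant amplitudes into $w_{d,k}(t)$ and summing over this partition reproduces \cref{eq:amp_dist}. The main obstacle I anticipate is not the algebra but pinning down \cref{def:subshells} precisely: one must confirm that the shell-stabilising condition is to be read as the vertex stabiliser of $\bm{s}$ (equivalently, as holding simultaneously for all $d$, which forces $g(\bm{s})=\bm{s}$ through $\mathcal{N}_0(\bm{s})=\{\bm{s}\}$), since it is only $g(\bm{s}) = \bm{s}$ that makes $\bra{\bm{s}}\hat{U}(t)\ket{\bm{s}^\prime}$ orbit-invariant; without it the conjugation identity moves the bra as well and the grouping argument fails.
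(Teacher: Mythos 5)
Your proposal is correct and follows essentially the same route as the paper's proof: both rest on the invariance of the walk amplitudes under the permutation representation of the automorphisms stabilising $\bm{s}$ (the paper phrases this as conjugation-invariance of $\hat{G}^l$ fed into the Taylor series of the exponential, you as $[\pi_g,\hat{U}(t)]=0$), which makes $\bra{\bm{s}}\hat{U}(t)\ket{\bm{s}^\prime}$ constant on each subshell so the coefficients $c_{\bm{s}^\prime}$ can be regrouped orbit by orbit. Your version is in fact slightly more explicit on two points the paper leaves implicit --- that regularity of a vertex-transitive graph makes the Laplacian case reduce to the adjacency case, and that the conjugation identity is only useful for automorphisms fixing $\bm{s}$ --- but these are refinements of, not departures from, the same argument.
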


\begin{proof}
The projection of $\hat{G}\ket{\psi_0}$ on $\ket{\bm{s}}$ is,
\begin{align*}
\bra{\bm{s}}\hat{G}\ket{\psi_0} &= \bra{\bm{s}} \left( \sum_{\bm{s}, \bm{s}^\prime} w_{\bm{s}, \bm{s}^\prime} \ket{\bm{s}} \bra{\bm{s}^\prime} \right) \sum_{\bm{s}^\prime} c_{\bm{s}^\prime} \ket{\bm{s}^\prime} \\ 
& = \sum_{\bm{s}^\prime} w_{\bm{s}, \bm{s}^\prime} c_{\bm{s}^\prime}.
\end{align*}
where $w_{\bm{s},\bm{s}^\prime} \in \mathbb{R}$.

Let $\pi_g$ be the permutation matrix representation of $g \in \text{Aut}(G, \bm{s})$. As $\pi_g$ and its inverse $\pi_g^{-1} = \pi_g^{T}$ reorder the basis states in a manner that preserves adjacency structure relative to $\bm{s}$, the action of $\hat{G}$ on $\ket{\psi}$ is invariant under conjugation by $\pi_g$. Therefore, 
\begin{align*}
    \bra{\bm{s}}\hat{G}^\prime \ket{\psi} & = \left(\bra{\bm{s}} \pi_g^{T}\right) \hat{G} \left(\pi_g\ket{\psi} \right) = \bra{\bm{s}}\hat{G}\ket{\psi}.
\end{align*}
A result that also holds for $\hat{G}^l$ and $(\hat{G}^\prime)^l$, where $l$ denotes the $l$-th power of the adjacency matrix, by the associativity of the matrix product. Let $w_{d, l, k}$ be the coefficient of $\hat{G}^l$ associated with subshell $\mathcal{N}_{d, k}$ then,

\begin{align*}
   \bra{\bm{s}}\hat{G}^l\ket{\psi} = \sum_{d=0}^{D} \sum_{k=0}^{K_d-1}w_{d, l, k}\sum_{\bm{s}^\prime \in \mathcal{N}_{d, k}} c_{\bm{s}^\prime}
\end{align*}
where $D$, $K_d$ and $c_{\bm{s}}$ are defined as in \cref{eq:amp_dist}. The subshell coefficients $w_{d,k}(t)$ are obtained from this expression by substituting $\hat{G}^l$ with the Taylor series definition of the matrix exponential $\exp(-\text{i} t \hat{G}) = \sum_{l=0}^\infty \frac{\left( -\text{i} t \hat{G} \right)^l}{l!}$. Sets $\{ \ket{\bm{s}^\prime} \mid \bm{s}^\prime \in \mathcal{N}_{d, k}(\bm{s}) \}$ span $\mathcal{N}_{d,k}(\bm{s})$ and are closed under scalar multiplication and addition, hence they are subspaces.
\end{proof}

\begin{figure*}[!ht]
    \centering
    \begin{subfigure}[t]{0.24\textwidth}
        \includegraphics[width=\linewidth]{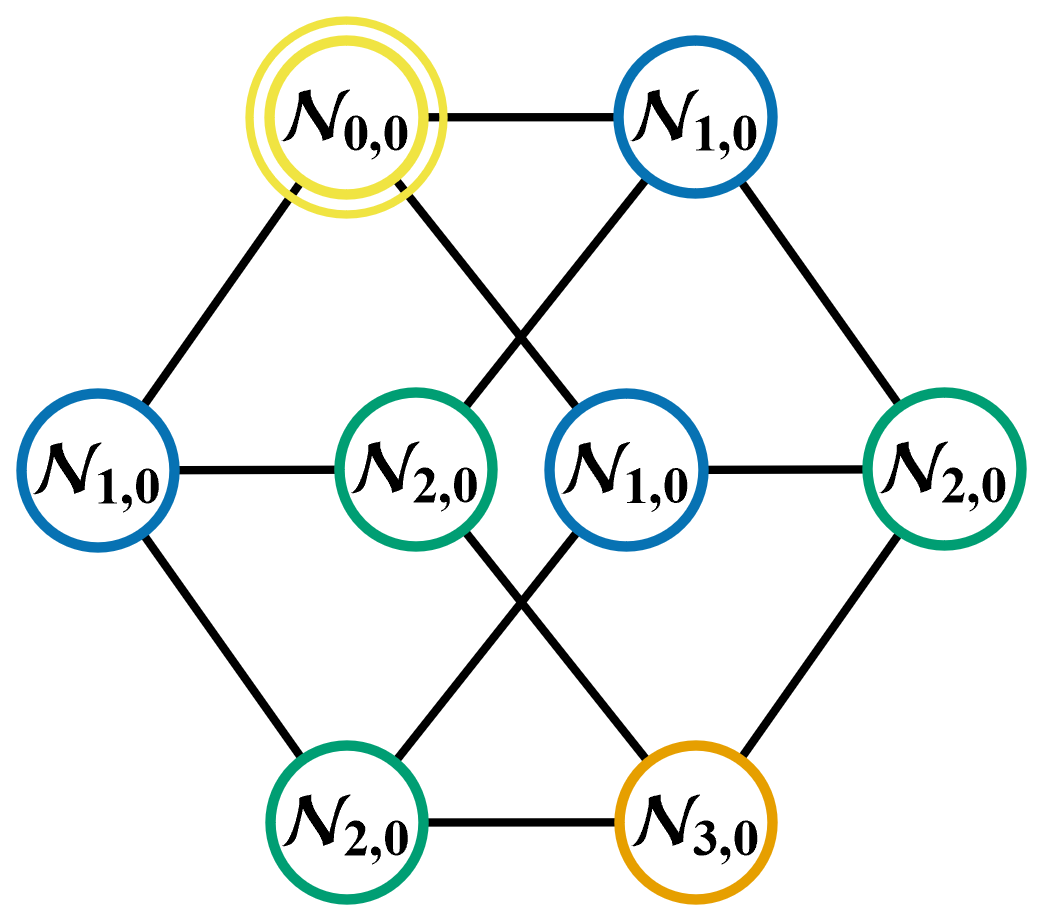}
        \caption{}
    \end{subfigure}
    \hfill
    \begin{subfigure}[t]{0.24\textwidth}
        \includegraphics[width=\linewidth]{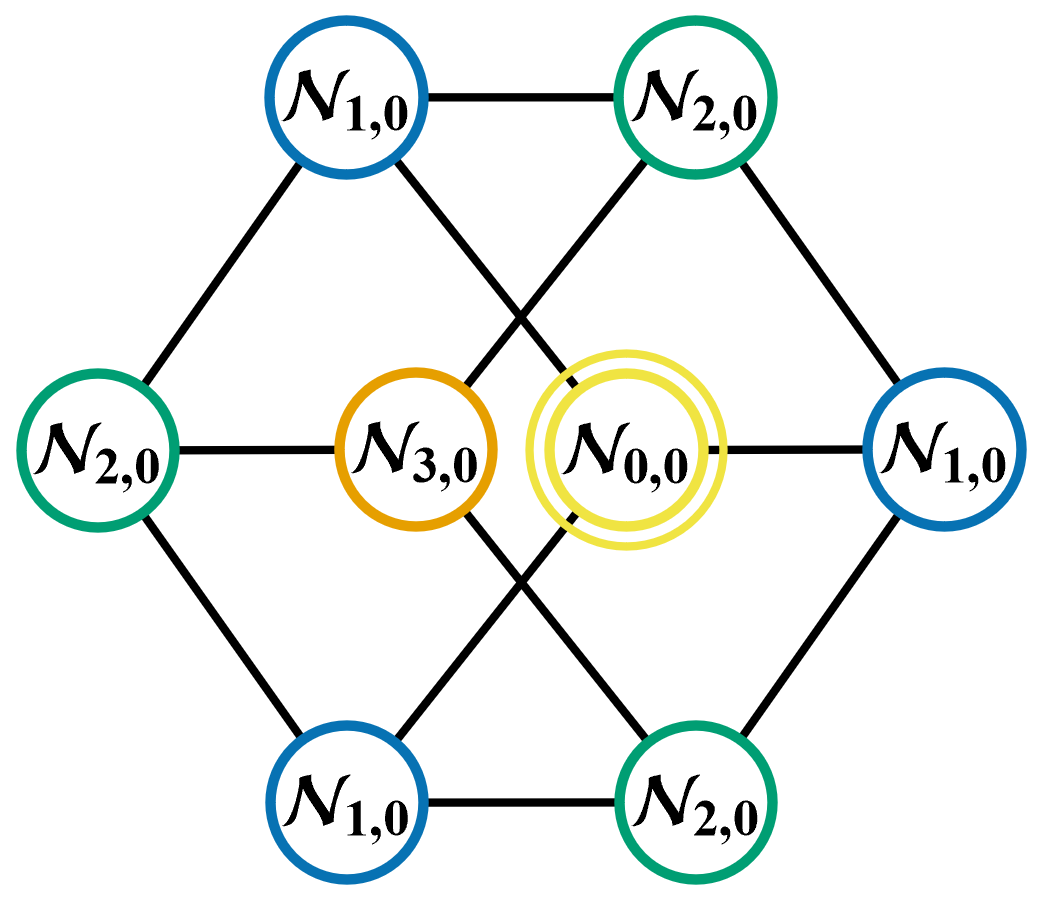}
        \caption{}
    \end{subfigure}
    \hfill
    \begin{subfigure}[t]{0.24\textwidth}
        \includegraphics[width=\linewidth]{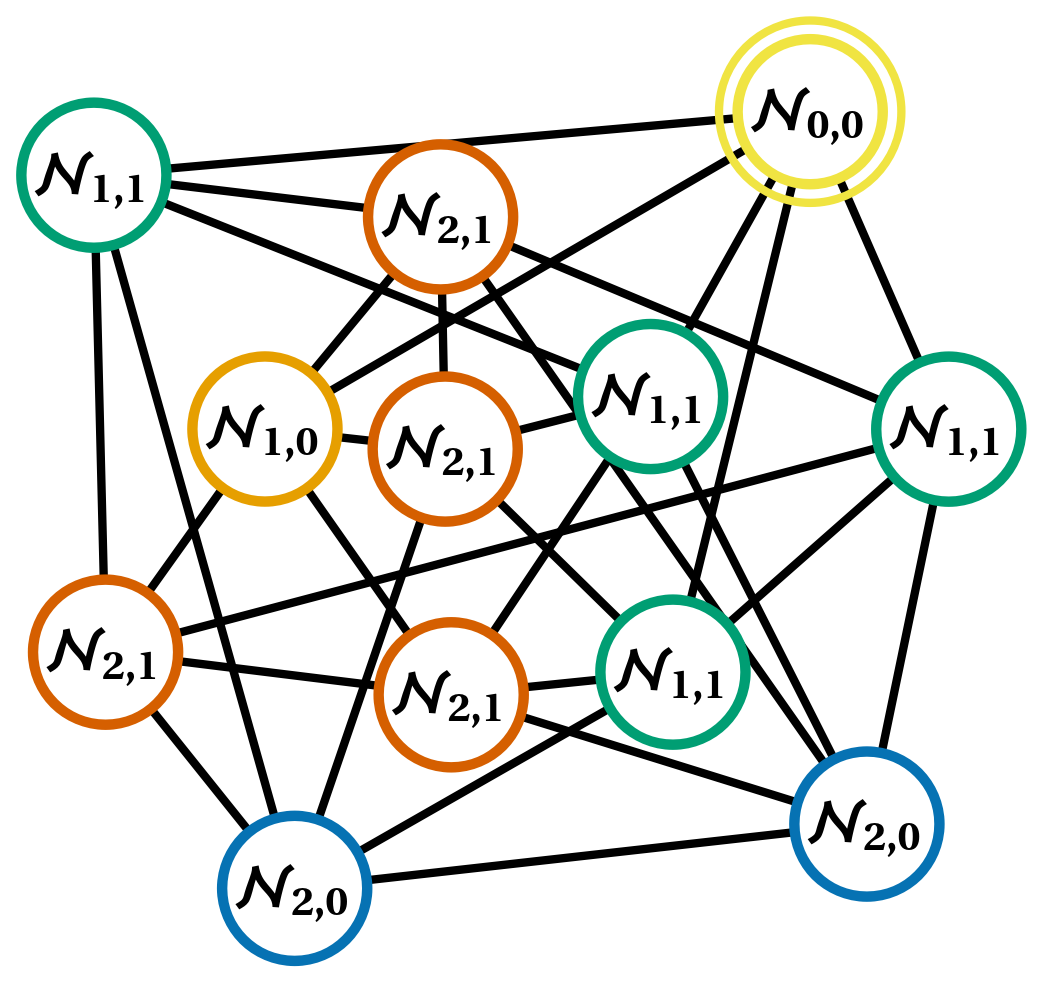}
        \caption{}
    \end{subfigure}
    \hfill
    \begin{subfigure}[t]{0.24\textwidth}
        \includegraphics[width=\linewidth]{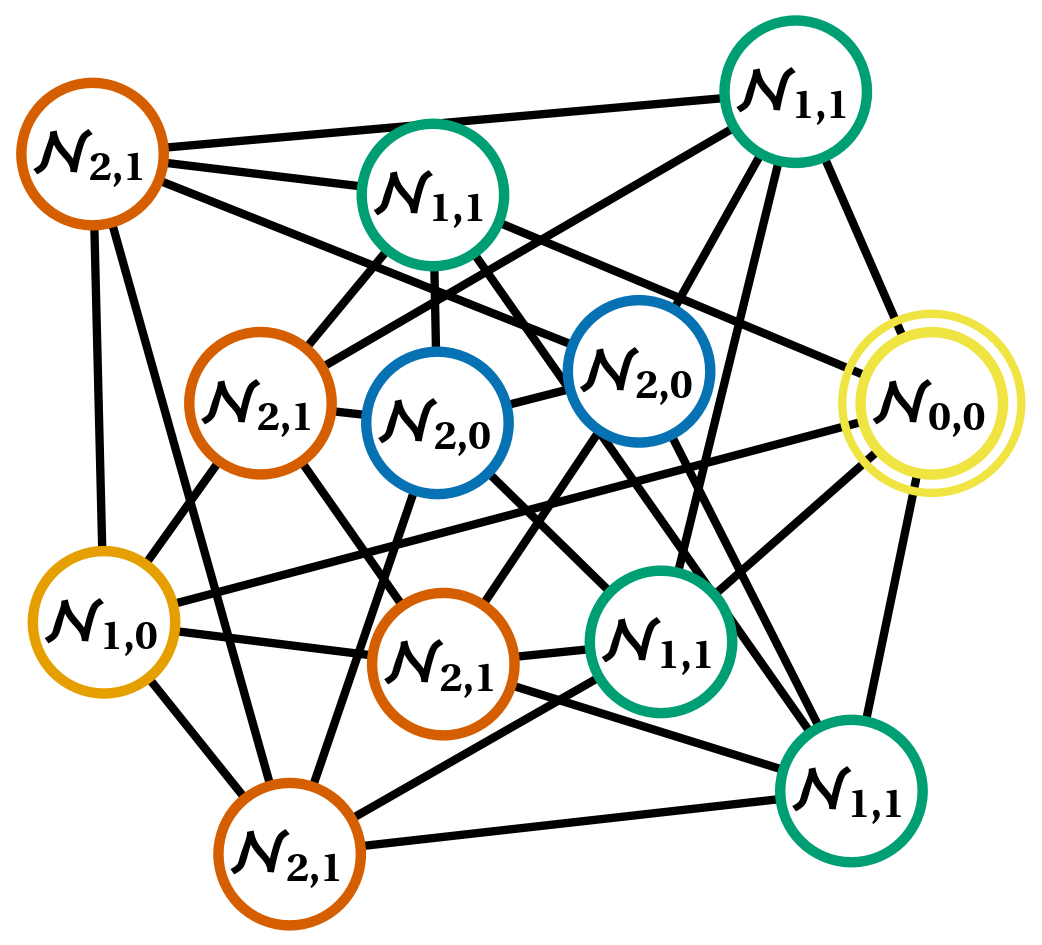}
        \caption{}
    \end{subfigure}
    \caption[Graph subshells in a hypercube and constrained permutation graph]{Graph subshell of (a and b) an $ n=3 $ hypercube graph (or $(3, 2)$ Hamming graph, see \cref{sec:cart_decomp}), and (c and d) a constrained permutation graph (see \cref{sec:perm_graph}) which represents the non-trivial permutations of $(-1, 0, 0, 1)$ that differ by the transposition of two elements. In each graph, subshells relative to a reference vertex $\bm{s}$ are labelled as $\mathcal{N}_{d,k}$, where $d$ is the distance from $\bm{s}$ and $k$ indexes the subshells at distance $d$. The reference vertex $\bm{s}$ is circled twice in yellow and labelled as $\mathcal{N}_{0,0}$ in each graph. Graph automorphisms $g \in \text{Aut}(G, \bm{s})$ (see \cref{def:subshells}) permute vertices within the same subshell, with $g(s) = s$ as $|\mathcal{N}_{0,0}\bm{s}|=1$. As both graphs are vertex-transitive, the adjacency structure relative to any choice of $\bm{s}$ is identical. The hypercube graph has the additional property of distance transitivity, so $\mathcal{N}_d(\bm{s}) = \mathcal{N}_{d,k}(\bm{s})$ where shell $\mathcal{N}_d(\bm{s})$ is the set of vertices at distance $d$ from $\bm{s}$. The constrained permutation graph is not distance-transitive, as the orbits of $g$ partition the vertices in shells $\mathcal{N}_1(\bm{s})$ and $\mathcal{N}_2(\bm{s})$ into subshells $\mathcal{N}_{1,0}(\bm{s})$, $\mathcal{N}_{1,1}(\bm{s})$, $\mathcal{N}_{2,0}(\bm{s})$ and $\mathcal{N}_{2,1}(\bm{s})$. These subshells have a clearly distinct adjacency structure relative to $\bm{s}$; for instance, vertices in $\mathcal{N}_{2,1}(\bm{s})$ are adjacent to the vertex in subshell $\mathcal{N}_{1,0}(\bm{s})$, while the vertices in shell $\mathcal{N}_{2,0}(\bm{s})$ are not.}
    \label{fig:subshells}
\end{figure*}

By this result and \cref{def:subshells}, it also follows that:

\begin{corollary}
A distance-transitive graph $G$ has $\text{diam}(G) + 1$ subshell coefficients.    
\end{corollary}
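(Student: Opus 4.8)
The plan is to show that distance-transitivity collapses each shell into a single subshell, so that $K_d = 1$ for every distance $d$, after which the count of subshell coefficients in \cref{th:subshells} reduces to $\text{diam}(G)+1$ by a trivial summation. The whole argument hinges on establishing the identity $\mathcal{N}_d(\bm{s}) = \mathcal{N}_{d,0}(\bm{s})$ already observed for the hypercube in \cref{fig:subshells}, and then simply counting.

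First I would fix the reference vertex $\bm{s}$ and take any two vertices $v, v^\prime \in \mathcal{N}_d(\bm{s})$, so that $\text{dist}(\bm{s}, v) = \text{dist}(\bm{s}, v^\prime) = d$. Applying the definition of distance-transitivity to the equidistant pairs $(\bm{s}, v)$ and $(\bm{s}, v^\prime)$ yields an automorphism $g \in \text{Aut}(G)$ with $g(\bm{s}) = \bm{s}$ and $g(v) = v^\prime$.

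Next I would verify that this $g$ belongs to the shell-stabilising subgroup $\text{Aut}(G, \bm{s})$ of \cref{def:subshells}. Since graph automorphisms preserve graph distance and $g$ fixes $\bm{s}$, it follows that $\text{dist}(\bm{s}, g(u)) = \text{dist}(g(\bm{s}), g(u)) = \text{dist}(\bm{s}, u)$ for every vertex $u$; hence $g(\mathcal{N}_{d^\prime}(\bm{s})) = \mathcal{N}_{d^\prime}(\bm{s})$ for all $d^\prime$, which is precisely the membership condition defining $\text{Aut}(G, \bm{s})$. As $v$ and $v^\prime$ were arbitrary, $\text{Aut}(G, \bm{s})$ therefore acts transitively on $\mathcal{N}_d(\bm{s})$, so the whole shell is a single orbit and the number of orbits it contains is $K_d = 1$ for each $d = 0, \dots, \text{diam}(G)$.

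Finally I would substitute $K_d = 1$ into the double sum of \cref{th:subshells}: the number of distinct subshell coefficients $w_{d,k}(t)$ is $\sum_{d=0}^{D} K_d = \sum_{d=0}^{D} 1 = D + 1 = \text{diam}(G) + 1$, as claimed. The only point requiring care is the second step, confirming that the automorphism supplied by distance-transitivity genuinely stabilises \emph{every} shell rather than merely the shell at distance $d$; this is where I expect the main, though modest, obstacle to lie, and it rests solely on the distance-preserving property of automorphisms combined with $g(\bm{s}) = \bm{s}$.
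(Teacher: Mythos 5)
Your proposal is correct and follows essentially the same route as the paper's proof: both use distance-transitivity applied to the equidistant pairs $(\bm{s}, v)$ and $(\bm{s}, v^\prime)$ to conclude that each shell $\mathcal{N}_d(\bm{s})$ is a single orbit of $\text{Aut}(G,\bm{s})$, giving $K_d = 1$ and hence $\text{diam}(G)+1$ subshell coefficients via \cref{th:subshells}. Your intermediate verification that the supplied automorphism stabilises every shell (because it fixes $\bm{s}$ and preserves distances) is a detail the paper leaves implicit, but it does not change the argument.
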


\begin{proof}
    The graph automorphisms $g \in \text{Aut}(G)$ of a distance-transitive graph $G$ satisfy $g((\bm{s}, \bm{s}^\prime)) \rightarrow (\bm{v}, \bm{v}^\prime)$ for all vertex tuples where $\text{dist}(\bm{s}, \bm{s}^\prime) = \text{dist}(\bm{v}, \bm{v}^\prime)$. Therefore, the number of orbits associated with shell $\mathcal{N}_d(\bm{s})$ is $|\text{Orb}(\text{Aut}(G, \bm{s}))| = 1$ and, by \cref{eq:amp_dist}, the number of subshells and subshell coefficients is $\text{diam}(G) + 1$.  
\end{proof}

We now use the subshell coefficient description of a CTQW to describe conditions in which a quantum search with the ansatz unitary $\hat{U}_W\hat{U}_Q$ may efficiently exploit constructive interference to produce high-convergence to an arbitrary state $\bm{s}$. 

\begin{lemma}
 The constructive interference at $\bm{s}$ due to interaction with $\bm{s}^\prime \in \mathcal{N}_{d,k}$ is maximised by minimising the \emph{phase discrepancy},

    \begin{equation}
        \label{eq:phase discrepancy}
        \begin{aligned}
        \Phi_{d, k}(\gamma, t) =& |\left( \phi_{0,0}(t) + \gamma q_{\bm{s}} \, \, \mathrm{mod} \, \, \pi \right) - \\
        & \left( \phi_{d,k}(t) + \gamma q_{\bm{s}^\prime} \, \, \mathrm{mod} \, \, 2 \pi \right)|,
        \end{aligned}
    \end{equation}
    where $\phi_{d,k}(t) = \text{Arg}(w_{d,k}(t))$.
\end{lemma}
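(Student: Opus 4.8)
The plan is to specialise the Subshell Coefficients theorem (\cref{th:subshells}) to the state prepared by $\hat{U}_W(\bm{t})\hat{U}_Q(\gamma)$ and then reduce the claim to an elementary two-amplitude interference estimate. Applying the phase-shift unitary \eqref{eq:phase-shift} to the initial superposition replaces each coefficient by $c_{\bm{s}'}\mapsto c_{\bm{s}'}\,e^{-\text{i}\gamma q_{\bm{s}'}}$; substituting this phase-encoded state into \cref{eq:amp_dist} gives
\begin{equation*}
\E{\bm{s} | \psi(t)} = \sum_{d=0}^{D}\sum_{k=0}^{K_d-1} w_{d,k}(t)\sum_{\bm{s}'\in\mathcal{N}_{d,k}(\bm{s})} c_{\bm{s}'}\,e^{-\text{i}\gamma q_{\bm{s}'}}.
\end{equation*}
Because $\mathcal{N}_{0,0}(\bm{s})=\{\bm{s}\}$, the $d=k=0$ term is the self-amplitude $w_{0,0}(t)\,c_{\bm{s}}\,e^{-\text{i}\gamma q_{\bm{s}}}$, which is precisely the amplitude we seek to reinforce.

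Next I would isolate the pairwise interaction named in the statement. The measured probability is $\left|\E{\bm{s}|\psi(t)}\right|^2$, a sum of pairwise products of the complex contributions above; the interaction of the target $\bm{s}$ with a fixed $\bm{s}'\in\mathcal{N}_{d,k}(\bm{s})$ is governed by the two amplitudes
\[
A_0 = w_{0,0}(t)\,c_{\bm{s}}\,e^{-\text{i}\gamma q_{\bm{s}}}, \qquad
A' = w_{d,k}(t)\,c_{\bm{s}'}\,e^{-\text{i}\gamma q_{\bm{s}'}}.
\]
Writing $w_{d,k}(t)=|w_{d,k}(t)|\,e^{\text{i}\phi_{d,k}(t)}$ with $\phi_{d,k}(t)=\text{Arg}(w_{d,k}(t))$, and taking the initial coefficients to share a common phase (as for the uniform superpositions used to initialise the QVAs considered here), the arguments of $A_0$ and $A'$ are $\phi_{0,0}(t)-\gamma q_{\bm{s}}$ and $\phi_{d,k}(t)-\gamma q_{\bm{s}'}$ respectively.

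The core step is then the elementary identity $|A_0 + A'|^2 = |A_0|^2 + |A'|^2 + 2|A_0|\,|A'|\cos\Delta$, where $\Delta$ is the difference of the two arguments. With the moduli $|w_{d,k}(t)|$ fixed by the walk and the $|c_{\bm{s}'}|$ fixed by the initial distribution, this pairwise overlap is maximised exactly when $\cos\Delta = 1$, that is, when the phase difference is driven to zero modulo $2\pi$, and is maximally destructive when $\Delta\equiv\pi$. Identifying $\Delta$ with the phase discrepancy $\Phi_{d,k}(\gamma,t)$ of \eqref{eq:phase discrepancy} then shows that minimising $\Phi_{d,k}$ maximises the constructive interference at $\bm{s}$, as claimed. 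The sign convention of the $\gamma q$ terms is immaterial here, since the classical optimiser is free to replace $\gamma$ by $-\gamma$.

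The main obstacle is not the interference identity, which is elementary, but matching the asymmetric modular reductions in \eqref{eq:phase discrepancy} exactly. Reducing the subshell phase modulo $2\pi$ is immediate from the $2\pi$-periodicity of $\cos\Delta$, but the reference self-phase enters only modulo $\pi$; justifying this is the delicate modeling step, and I would handle it by treating $\Phi_{d,k}$ as the separation between the subshell contribution and the principal axis set by the dominant self-term, folding the reference phase to its principal value in $[0,\pi)$ and verifying that the resulting minimiser coincides with the $\cos\Delta=1$ alignment condition up to the optimiser's freedom in $\gamma$. A secondary simplification worth flagging is that the lemma concerns a single representative $\bm{s}'$, so no global competition between subshells arises at this stage; the trade-off between distinct subshells, and the penalty incurred by phase variance among the $\bm{s}'\in\mathcal{N}_{d,k}(\bm{s})$ grouped under the common coefficient $w_{d,k}(t)$, enters only when \cref{eq:amp_dist} is summed and is best treated separately.
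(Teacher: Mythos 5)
Your proposal is correct and follows essentially the same route as the paper: both reduce the claim to the two-term sum $|w_{0,0}(t)|e^{-\text{i}(\phi_{0,0}(t)+\gamma q_{\bm{s}})} + |w_{d,k}(t)|e^{-\text{i}(\phi_{d,k}(t)+\gamma q_{\bm{s}^\prime})}$ extracted from \cref{eq:amp_dist} and observe that its magnitude is maximised when the two complex phases coincide, i.e.\ when the phase discrepancy vanishes. Your explicit use of the cosine law and your flagging of the asymmetric $\mathrm{mod}\,\pi$ versus $\mathrm{mod}\,2\pi$ reductions go slightly beyond the paper, which states the alignment condition without addressing that asymmetry at all.
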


\begin{proof}
Let the state of a CTQW at $t=0$ be,
    \[\ket{\psi_0} = \left(\frac{\hat{U}_Q(\gamma)}{\sqrt{|\mathcal{S}|}}\sum_{\bm{s} \in \mathcal{S}}\ket{\bm{s}} \right).\]
By \cref{eq:amp_dist}, the interaction between subshells $\mathcal{N}_{0,0}(\bm{s})$ and $\mathcal{N}_{d,k}(\bm{s})$ may be expressed as,
    \[ |w_{0,0}(t)|e^{-\text{i} (\phi_{0,0}(t) + \gamma q_{\bm{s}})} + |w_{d,k}(t)|e^{-\text{i}(\phi_{d,k}(t) + \gamma q_{\bm{s}^\prime})}\]
    For any value of $t$ where $|w_{0,0}(t)|$ and $|w_{d,k}(t)|$ are non-zero, the magnitude of this expression is maximised if two terms have the same complex phase. In this case, the absolute difference between the complex arguments of these terms, given by \cref{eq:phase discrepancy}, will be zero. 
\end{proof}

Based on this result, we propose a theorem that underpins the procedure for mixing unitary design presented in \cref{sec:mixer-design}. Let $q_{d,k}$ be the set of quality values $q_{\bm{s}^\prime}$ associated with vertices in subshell $\mathcal{N}_{d, k}(\bm{s})$ and $\overline{\phi}_{d,k}$ be the minimum average phase discrepancy between reference state $\bm{s}$ and $\bm{s}^\prime \in \mathcal{N}_{d,k}(\bm{s})$ obtained by variation of $\gamma$ and $t$.

\begin{theorem}
    \label{th:min_var}
    Minimisation of $\text{Var}(q_{d,k})$ reduces $\Phi_{d,k}$ for non-uniform $C(\bm{s})$.
\end{theorem}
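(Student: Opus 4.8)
The plan is to reduce the statement to a one-dimensional dispersion estimate by exploiting the fact that, within a single subshell, the phase discrepancy varies with $\bm{s}^\prime$ only through the encoded quality $q_{\bm{s}^\prime}$. First I would fix $\mathcal{N}_{d,k}(\bm{s})$ and note that in \cref{eq:phase discrepancy} the arguments $\phi_{0,0}(t)$, $\phi_{d,k}(t)$, the reference quality $q_{\bm{s}}$, and the parameters $\gamma, t$ are all common to every $\bm{s}^\prime \in \mathcal{N}_{d,k}(\bm{s})$. Writing the reference angle $\alpha = \phi_{0,0}(t) + \gamma q_{\bm{s}}$ and the subshell angles $\beta_{\bm{s}^\prime} = \phi_{d,k}(t) + \gamma q_{\bm{s}^\prime}$, each member's discrepancy is the wrapped angular distance between $\alpha$ and $\beta_{\bm{s}^\prime}$, equal to $|\Delta\phi(t) + \gamma(q_{\bm{s}} - q_{\bm{s}^\prime})|$ with the structural offset $\Delta\phi(t) = \phi_{0,0}(t) - \phi_{d,k}(t)$; the quantity $\overline{\phi}_{d,k}$ is then the subshell average of these distances, minimised over $\gamma$ and $t$.

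The heart of the argument is a centering step. Non-uniformity of $C(\bm{s})$ forces a nonzero $\gamma$, as a vanishing phase parameter erases all cost information and produces no quality-dependent interference; holding $\gamma$ in this useful range, the offset $\Delta\phi(t)$ acts as a free rotation that I would tune so that $\alpha$ aligns with the centroid of the angles $\{\beta_{\bm{s}^\prime}\}$. After this optimal centering the signed deviations have vanishing mean and the residual average collapses to
\[
\overline{\phi}_{d,k} \;=\; |\gamma|\,\frac{1}{|\mathcal{N}_{d,k}(\bm{s})|}\sum_{\bm{s}^\prime \in \mathcal{N}_{d,k}(\bm{s})}\bigl|\,q_{\bm{s}^\prime} - \overline{q}_{d,k}\,\bigr|,
\]
namely $|\gamma|$ times the mean absolute deviation of the subshell qualities about their mean $\overline{q}_{d,k}$ (the exact $L^1$ minimiser being the median, which only lowers this value further).

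I would then tie this to the variance through the elementary inequality $\mathbb{E}\,|X - \mathbb{E}X| \le \sqrt{\text{Var}(X)}$ (Jensen / Cauchy--Schwarz), bounding the mean absolute deviation by the standard deviation $\sqrt{\text{Var}(q_{d,k})}$. Since both dispersion measures vanish precisely when the subshell qualities are constant and shrink together as the spread of $q_{d,k}$ is reduced, driving $\text{Var}(q_{d,k})$ toward its minimum drives $\overline{\phi}_{d,k}$ to zero, which is the asserted monotone relationship.

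The hard part will be controlling the modular, circular arithmetic in the centering step. The clean reduction to a mean absolute deviation is exact only while the angular spread $|\gamma|\,(\max q_{d,k} - \min q_{d,k})$ is small enough that no wrapping occurs and angular distance coincides with ordinary distance; for larger spreads one must argue directly on the circle, replacing the mean and median by their circular analogues and checking that $\sum_{\bm{s}^\prime}|\alpha - \beta_{\bm{s}^\prime}|$ is still minimised at the circular median. A secondary point is verifying that the centering rotation is attainable by some admissible walk time $t$, i.e.\ that $\Delta\phi(t)$ sweeps a sufficiently wide interval; absent this, the claim is best read as the proportionality between $\overline{\phi}_{d,k}$ and the dispersion of $q_{d,k}$ at fixed $\gamma$ and available offset, which still delivers the stated reduction.
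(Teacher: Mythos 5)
Your argument is correct and rests on the same core observation as the paper's proof: the intra-subshell phase discrepancy is governed by the deviation of the costs $q_{\bm{s}^\prime}$ from the subshell mean $\overline{q}_{d,k}$, so shrinking that spread shrinks the discrepancy. The paper's own proof is far terser --- it merely notes that if $\overline{\phi}_{d,k}>0$ then some subset of costs $q^*$ must deviate from $\overline{q}_{d,k}$, and that the discrepancy vanishes in the limit $q^*\rightarrow\overline{q}_{d,k}$, with no quantitative control. What your version adds is genuine: the centering step (absorbing the structural offset $\phi_{0,0}(t)-\phi_{d,k}(t)$ into an alignment of the reference angle with the centroid of the subshell angles) turns the minimised average discrepancy into $|\gamma|$ times the mean absolute deviation of $q_{d,k}$, and the Jensen/Cauchy--Schwarz step then yields the explicit bound $\overline{\phi}_{d,k}\le|\gamma|\sqrt{\text{Var}(q_{d,k})}$, which is strictly more informative than the paper's purely limiting statement and makes the monotone dependence on the variance concrete. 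The two caveats you flag --- circular wrapping when $|\gamma|\,(\max q_{d,k}-\min q_{d,k})$ is not small, and whether the offset $\Delta\phi(t)$ actually sweeps enough of the circle for the centering to be attainable at an admissible walk time --- are real, but they are silently assumed away in the paper as well; in the small-spread regime that the theorem is actually about (the conclusion concerns driving $\text{Var}(q_{d,k})$ toward zero) both are harmless, so your proof stands.
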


\begin{proof}
    If $\overline{\phi}_{d, k} > 0$ then there must exist solutions in $\mathcal{N}_{d,k}$ for which $\phi_{d,k} > 0$. Denote the set costs associated with these solutions as $q^*$, where $q^* \subseteq q_{d,k}$. As $q^* \rightarrow \overline{q}_{d,k}$ then $\phi_{d_k} \rightarrow 0$. Therefore, $\text{Var}(q_{d,k}) \rightarrow 0$ reduces $\overline{\phi}_{d,k}$. 
\end{proof}

\cref{th:min_var} establishes that we can maximise the amplification potential of a QVA by choosing a graph that minimises, on average, the \emph{subshell variance} $\text{Var}(q_{d,k})$ for arbitrary choice of reference state $\bm{s}$. By \cref{th:subshells} it follows that it will be possible to achieve amplification at a target state due to constructive interference with a large proportion of the quantum search space if the number of subshells is small compared to the cardinality of the solution search space. Therefore, a good choice of $G$ is likely to be a distance-transitive or a vertex-transitive graph where the number of orbits in each of its shells is much smaller than the size of the shell at distances greater than one. We can also infer that the minimisation of subshell variance has the additional benefit of reducing the complexity associated with the optimisation of $\bm{\theta}$ as a (comparatively more) uniform response to changes in $t$ and $\gamma$ over each subshell will manifest as an objective function (see \cref{eq:objective}) with a smoother and more pronounced gradient. 

\section{A Heuristic for Mixing Unitary Design}
\label{sec:mixer-design}

The theoretical framework introduced in \cref{sec:theory} leads to a design heuristic for mixing unitaries that are tailored to specific classes of COPs based on the minimum non-zero Hamming distance between valid solution states. The Hamming distance between solutions is given by,

\begin{equation}
\label{eq:hamming}
H(\bm{s}, \bm{s}^\prime) = |\{ i \mid s_i \neq s_i^\prime, \, 0 \leq i < n \}|,
\end{equation}
where $s_i$, $s_i^\prime$ are the $i$-th elements of $\bm{s}$ and $\bm{s}^\prime$, respectively. The heuristic summarised as follows:

\begin{enumerate}
    \item Identify the minimum non-zero Hamming distance $d_{\text{min}}$ among all pairs of solutions in $\mathcal{S}$:
    \begin{equation}
        d_{\text{min}} = \min \{H(\bm{s}, \bm{s}^\prime) \mid \bm{s}, \bm{s}^\prime, \bm{s} \neq \bm{s}^\prime \}.
    \end{equation}

    \item Define a set of sets $S= \{S_k\}$ where each $S_k$ is a subset of solutions where each solution has at least one other distinct solution in the set with the minimum Hamming distance between them,
    \begin{equation}
        S_k = \{ \bm{s} \in \mathcal{S}^\prime \mid \exists \bm{s}^\prime \in \mathcal{S}^\prime, H(\bm{s}, \bm{s}^\prime) = d_{\text{min}} \}. 
    \end{equation}

    \item For each set $S_k$, define a graph $\mathcal{G}_k$ with vertices $V = S_k$ and edges 
    \[E = \{ (\bm{s}, \bm{s}^\prime) \mid H(\bm{s}, \bm{s}^\prime) = d_{\text{min}}, \, \bm{s}, \bm{s}^\prime \in S_k \}.\]
     The union of all such graphs is represented as $\mathcal{G}$.

     \item With $\mathcal{G}$, define a mixing unitary that preserves the structure of $S$,
     \begin{equation}
        \label{sec:designed_mixer}
        \hat{U}_{\mathcal{G}}(t) = \prod_{k=0}^{|\mathcal{G}| - 1} \exp(-\text{i} t \hat{\mathcal{G}}_k).
     \end{equation}
\end{enumerate}

This procedure follows from the reasoning that, since the problem cost function is polynomially bounded in $|\mathcal{S}|$, for two solutions $\bm{s}$ and $\bm{s}^\prime$ with a Hamming distance of $d_{\text{min}}$, it will generally be the case that
\[|C(\bm{s}) - C(\bm{s}^\prime)| < \Delta C,\]
where $\Delta C$ is the difference between the maximum and minimum solution costs in $C(\mathcal{S})$. Therefore, by defining a correspondence between the Hamming distance between solutions in $\mathcal{S}$ and the inter-node distance in $\mathcal{G}_k$, we increase the likelihood that its subshells are mapped to solutions with similar cost function values, reducing $\text{Var}(q_{d,k})$ (see \cref{th:min_var}) relative to the variance of the space of valid solutions.

The feasibility of this approach depends on the efficient identification of $S$ and algorithms for the quantum simulation of CTQWs on $\mathcal{G}_k \in \mathcal{G}$, with the associated design challenges depending on the particular problem at hand. In \cref{sec:cart_qva,sec:perm_qva}, we detail the development of two efficient QVAs for unconstrained and constrained COPs according to this guiding procedure.

\section{Convergence Potential and Mean Shell Variance}
\label{sec:figures_of_merit}

Here we draw on the analytical framework introduced in \cref{sec:theory} to propose new figures of merit and analysis methods that may be used to evaluate the relative performance of mixing unitaries based on vertex-transitive graphs within the alternating phase-walk QVA framework defined in \cref{sec:cop}.

\subsection{Convergence Potential}
\label{sec:convergence_def}

Consider an artificially constructed scenario in which subshell-associated costs $q_{d, k}$ are selected to minimise the phase discrepancy relative to a specific target vertex $\bm{s}$. The probability of measuring $\bm{s}$ in this scenario provides an upper bound on the convergence that the graph of a given mixing unitary can support under the parameterisation of $t$ and $\gamma$. We define this upper bound as the \emph{convergence potential} of the graph, denoted as $\text{Prob}^*$.

Relative to an initial uniform superposition over the nodes of the graph, $\text{Prob}^*$ is computed by first identifying a walk time $t^*$ that maximises the \emph{population-weighted} sum of the graph subshell coefficients,
\begin{equation}
\label{eq:opt_t}
t^* = \underset{t \in T}{\arg \max} \sum_{d,k} |w_{d,k}(t)| \times |\mathcal{N}_{d,k}|.
\end{equation}
The phase-optimal `cost' $q_{d,k}^*$ for each subshell $\mathcal{N}_{d,k}$ is then obtained by
\begin{equation}
q^*_{d,k} = -\text{Arg}\left(w_{d,k}(t^*)\right),
\end{equation}
which is incorporated into \cref{eq:amp_dist} as,
\begin{equation}
c_{\bm{s}} = \frac{1}{\sqrt{|N_{d,k}|}}e^{-\text{i}q^*_{d,k}},   
\end{equation}
thus aligning the phase of the probability amplitude contribution of each subshell. The convergence potential is then computed as folows,
\begin{equation}
\label{eq:convergence_potential}
\text{Prob}^* = \left|\sum_{d=0}^D\sum_{k=0}^{K_d-1}w_{d,k}(t^*)\left|\mathcal{N}_{d,k}\right|c_{\bm{s}}\right|^2,
\end{equation}
where $D$ and $K_d$ are defined as in \cref{eq:amp_dist}.

Furthermore, computation of the graph convergence potential makes possible the empirical investigation of the stability in convergence supported by vertex-transitive graphs at varying amounts of phase discrepancy $\Phi$. One such approach is to generate \emph{variance-adjusted} $q^*_{d,k}$ according to,
\begin{equation}
\label{eq:variance-adjusted}
\bm{q}^*(d, k, \sigma^2) = \left(q^*_{d,k} + Z_0, q^*_{d,k} + Z_1, \ldots, q^*_{d,k} + Z_{|\mathcal{N}_{d,k}|-1}\right),
\end{equation}
where each $Z_i \in \mathbb{R}$ is independently drawn from a uniform distribution over the interval
\[(-\sqrt{12} \sigma, \sqrt{12}\sigma),\]
which has variance $\sigma^2$. The resulting probability of measuring $\bm{s}$ is then given by $\text{Prob}(\bm{s}) = |\langle{\bm{s}|\bm{\theta}} \rangle|^2$, with
\begin{equation}
\label{eq:variance-adjusted-state}
\ket{\bm{\theta}} = \hat{U}_W(t)\hat{U}_{Q^*}(\gamma)\frac{1}{|V|}\sum_{\bm{s}^\prime \in G} \ket{\bm{s}^\prime},
\end{equation}
where $\hat{U}_W$ is a CTQW over the graph under consideration $G$ and $\hat{U}_{Q^*}$ is a phase-shift unitary with a quality operator $Q^*$ defined by $\bm{q}^*$. Optimal variational parameters $\bm{\theta} = (t, \gamma)$ may be determined using numerical optimisation.

\subsection{Mean Shell Variance}
\label{sec:MSV}

By \cref{th:min_var}, improved convergence behaviour is expected from QVAs whose mixing unitary is defined by a vertex-transitive graph that, on average, minimizes variance over its shells for an arbitrary choice of $\bm{s}$ relative to other vertex-transitive graphs. This \emph{shell variance} may be expressed as
\begin{equation}
    \overline{\sigma^2_{d}} = \frac{1}{|\mathcal{N}_{d}|} \sum_{\bm{s} \in \mathcal{S}} \text{Var}\left(C(\mathcal{N}_{d}(\bm{s}))\right),
\end{equation}
where $C(\mathcal{N}_{d}(\bm{s}))$ is the set of solution costs associated with the shell at distance $d$ from $\bm{s}$ and 
\[\text{Var}(C(\mathcal{N}_{d}(\bm{s}))) = 0,\]
if $|\mathcal{N}_{d}| = 1$. Based on this, we introduce the \emph{Mean Shell Variance} (MSV) as a figure of merit,
\begin{equation}
\label{eq:MSV}
    \text{MSV} = \text{Mean}\left(\overline{\sigma^2_0}, \overline{\sigma^2_1}, \dots, \overline{\sigma^2_D}\right),
\end{equation}
where $D = \text{diam}(G)$ is the diameter of the underlying graph.

The MSV is classically computable only if $|\mathcal{S}|$ is small. However, if $|\mathcal{N}_d|$ is known and $\text{dist}(\bm{s}, \bm{s}^\prime)$ is efficiently computable, it can be approximated accurately from a fixed number of uniformly sampled solutions $\bm{s}$ and their associated costs, given the polynomial bounds on both $|C(\bm{s})|$ and the complexity of its evaluation.

\section{Optimisation Over Cartesian Subspaces}
\label{sec:cart_qva}

We now apply the hueristic for the mixing unitary design outlined in \cref{sec:mixer-design} to unconstrained COPs, idenitfying the partitioning $S$, problem-class-specific graph $\mathcal{G}$ and developing a QVA based on the mixing unitary $\hat{U}_{\mathcal{G}}$ tailored for problems of this type. In the following subsection, we establish that the minimum non-zero Hamming distance $d_\text{min}$ for unconstrained COPs defines a correspondence between the solution space $\mathcal{S}$ and an $n$-ary Cartesian product of the alphabet $\mathcal{X}$, and that the resulting $\mathcal{G}$ is a Hamming graph. This leads to a generalisation of the QMOA in \cref{sec:gen_qmoa} an algorithm for the optimisation of unconstrained COPs with alphabets of arbitrary arity.

\subsection{Cartessian Partitioning of the Solution Space}
\label{sec:cart_decomp}

For unconstrained optimisation, the solution space consists of all possible $n$-combinations of the elements in $\mathcal{X}$. As such, the solutions $\bm{s}^\prime$ that are closest to $\bm{s}$ in terms of the Hamming distance are those that differ by exactly one of their combinatorial variances, therefore, $d_{\text{min}} = 1$. By this we observe that $\mathcal{S}$ is isomorphic to the $n$-ary Cartesian product,
\begin{equation}
\mathcal{S} \cong \mathcal{X}^{\otimes n} = \underbrace{\mathcal{X} \times \mathcal{X} \times \cdots \times \mathcal{X}}_{n\ \text{times}},
\end{equation}
where, relative to $\mathcal{X}^{\otimes n}$, a solution $\bm{s}$ specifies one of $|S| = m^n$ discrete coordinates. As all $\bm{s}^\prime \in \mathcal{S}$ can be reached from $\bm{s}$ by a sequence of at most $n$ solutions that each satisfy $d_{\text{min}}$, the natural choice for $S$ is simply the trivial partitioning $S = \{\mathcal{S}\}$. 

Consider a problem with one combinatorial variable ($n=1$). By step three of the design heuristic, the defining graph of the mixing unitary $\mathcal{G}$ is a complete graph of size $m$, which we denote here as $G_{(1,m)}$. It follows that, for arbitrary $n$, the resulting $\mathcal{G}$ is given by an $n$-dimensional Hamming graph on the set of $m$-tuples~\cite{andries_e_brouwer_spectra_2012}. Formally, the $(n, m)$ Hamming graph is defined by
\begin{equation}
    \mathcal{G} = G_{(n,m)} = G_{(1,m)}^{\square n},
\end{equation}
where $\square n$ is the graph Cartesian product of $G_{(1,m)}$ with itself $n$ times. The Cartesian product of graphs $G$ and $G^\prime$, $G \square G^\prime$, is a graph with the vertex set $V(G) \times V(G^\prime)$ and edges $E(G) \times E(G^\prime)$ where vertices $(v, v^\prime)$ and $(u, u^\prime)$ are adjacent if and only if $v$ and $u$ are adjacent in $G$ or $v^\prime$ and $u^\prime$ are adjacent in $G^\prime$. Consequently, the vertices of $G_{(n,m)}$ bijectively map to $S$, with edges between $\bm{s}$ and $\bm{s}^\prime$ only if they differ in exactly one of their $n$ combinatorial variables.

The $(n, m)$ Hamming graphs are known to be distance-transitive, have diameter $n$, and degree $n(m-1)$~\cite{andries_e_brouwer_spectra_2012}. In \cref{sec:methods_and_results}, we study the convergence potential supported by this family of graphs. For this, it is necessary to establish the size of the shells of such graphs, which we derive in the following lemma.

\begin{lemma}
    \label{th:hamming_shell_size}
    For an $(n, m)$ Hamming graph, the size of shell $\mathcal{N}_{d}$ is
    \begin{equation}
        |\mathcal{N}_{d}| = \binom{n}{d} \times (m - 1 )^d.
    \end{equation}
\end{lemma}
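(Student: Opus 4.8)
The plan is to exploit the bijection between the vertices of $G_{(n,m)}$ and the $n$-tuples over $\mathcal{X}$, together with the fact that graph distance in a Hamming graph coincides with the Hamming distance $H$ of \cref{eq:hamming}. First I would establish this identification of distances. Since two vertices are adjacent precisely when they differ in exactly one coordinate, any walk from $\bm{s}$ to $\bm{s}^\prime$ alters a single coordinate per edge, so a walk of length $\ell$ can leave its endpoints disagreeing in at most $\ell$ coordinates; this gives the lower bound $\text{dist}(\bm{s}, \bm{s}^\prime) \geq H(\bm{s}, \bm{s}^\prime)$. Conversely, a walk of length exactly $H(\bm{s}, \bm{s}^\prime)$ is realised by correcting the differing coordinates one at a time, so $\text{dist}(\bm{s}, \bm{s}^\prime) = H(\bm{s}, \bm{s}^\prime)$.

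Given this, the shell $\mathcal{N}_d(\bm{s})$ is exactly the set of tuples at Hamming distance $d$ from $\bm{s}$, and by the vertex-transitivity of $G_{(n,m)}$ its cardinality is independent of the choice of $\bm{s}$. I would then count these tuples directly. A tuple at distance $d$ is specified by choosing the subset of $d$ coordinates in which it differs from $\bm{s}$ — there are $\binom{n}{d}$ such subsets — and then, independently for each chosen coordinate, selecting one of the $m-1$ alphabet symbols distinct from the corresponding entry of $\bm{s}$, yielding $(m-1)^d$ assignments. Multiplying the two factors gives $|\mathcal{N}_d| = \binom{n}{d}(m-1)^d$.

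There is no genuine obstacle here; the argument is an elementary counting exercise once the distance identification is in place. The only point requiring a little care is the lower bound $\text{dist} \geq H$, which must rule out shorter walks that temporarily move away from $\bm{s}^\prime$ — this is handled by the observation that each edge changes a single coordinate, so the number of disagreeing coordinates can increase by at most one per step. As a sanity check, the formula reduces to the stated degree $n(m-1)$ at $d = 1$, vanishes for $d > n = \text{diam}(G_{(n,m)})$, and, summed over $d$ via the binomial theorem, recovers the total $|\mathcal{S}| = m^n$.
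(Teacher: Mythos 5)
Your proposal is correct and follows essentially the same route as the paper's proof: choose the $\binom{n}{d}$ coordinate positions at which the tuples differ, then assign one of $m-1$ alternative symbols independently at each, giving $\binom{n}{d}(m-1)^d$. The only difference is that you explicitly verify that graph distance in $G_{(n,m)}$ coincides with Hamming distance before counting — a point the paper's proof takes for granted — which is a reasonable bit of added rigour but not a different argument.
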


\begin{proof}
    Let $\bm{s}$ and $\bm{s}^\prime$ be two solutions with $|\mathcal{X}| = m$ and a Hamming distance of $d$. The number of combinations of $d$ vertices in which differences can occur is $\binom{n}{d}$. Each non-matching $s^\prime \in \bm{s}^\prime$ can be one of $m-1$ elements in $\mathcal{X}$, with a total of $(m-1)^d$ possible choices over each of the differing $s^\prime$.
\end{proof}

\subsection{The Generalised QMOA}
\label{sec:gen_qmoa}

\begin{figure*}[th!]
\centering
\begin{subfigure}[a]{0.90\linewidth}
      \resizebox{\linewidth}{!}{%
    \begin{quantikz}[column sep = 0.05cm, row sep = -0.35cm]
\lstick[4]{$\ket{0}^{\otimes n \lceil \log m \rceil}$} & \qwbundle[1]{\lceil \log m \rceil} & \phantomgate{HHHH}  & \gate[1][1.8cm][0.8cm]{\hat{U}_{\bm{s}}(m)} \gategroup[wires=5, steps=1, style={dotted, cap=round, inner sep=2pt, column sep = 0.1.5cm}, label style={label position=below, yshift=-0.6cm}]{$\ket{\psi_0}$} & \phantomgate{h} & \push{\,\cdots\,}  & \phantomgate{h} & \ctrl{1} \gategroup[wires=5, steps=2, style={dotted, cap=round, inner sep=2pt, column sep = 0.1.5cm}, label style={label position=below, yshift=-0.6cm}]{$\bm{\theta} = ((\gamma_0, t_0), (\gamma_1,t_1), \dots, (\gamma_{p-1}, t_{p-1}))$} & \gate[1][1.5cm][0.8cm]{\hat{U}_{\mathcal{L}_m}(t_i)}  & \phantomgate{h} & \push{\,\cdots\,}  & \phantomgate{h} & \meter{} & \qw\rstick[4, nwires=3]{$\bra{\bm{\theta}} \hat{Q} \ket{\bm{\theta}}$} \\
\qw & \qwbundle[1]{\lceil \log m \rceil} & \phantomgate{HHHH}  & \gate[1][1.8cm][0.8cm]{\hat{U}_{\bm{s}}(m)} & \qw & \push{\,\cdots\,}  & \qw & \ctrl{1}  & \gate[1][1.5cm][0.8cm]{\hat{U}_{\mathcal{L}_m}(t_i)}  & \qw & \push{\,\cdots\,}  & \qw & \meter{} & \qw \\
\ghost{H} & \ghost{H} & \ghost{H} & \ghost{H} & \ghost{H} & \ghost{H} & \ghost{H} & {\,\vdots\,} & \ghost{H} & \ghost{H} & \ghost{H} & \ghost{H} & \ghost{H} & \ghost{H} \\
\qw & \qwbundle[1]{\lceil \log m \rceil} & \phantomgate{HHHH}  & \gate[1][1.8cm][0.8cm]{\hat{U}_{\bm{s}}(m)} & \qw & \push{\,\cdots\,}  & \qw & \ctrl{1} \vqw{-1} & \gate[1][1.5cm][0.8cm]{\hat{U}_{\mathcal{L}_m}(t_i)}  & \qw & \push{\,\cdots\,}  & \qw & \meter{} & \qw \\
\lstick{$\ket{0}$} & \qwbundle[1]{\lceil n \log m \rceil} & \phantomgate{HHHH}  & \qw & \qw & \push{\,\cdots\,}  & \qw & \gate[1][1.8cm][0.8cm]{\hat{U}_Q(\gamma_i)} & \qw & \qw & \push{\,\cdots\,}  & \qw & \qw & \qw \rstick{$\ket{0}$}
\end{quantikz}
    }  
    \caption{}
\end{subfigure}

\vspace{0.4cm}
\centering
\begin{subfigure}[b]{0.8\linewidth}
        \resizebox{\linewidth}{!}
    {%
       \begin{quantikz}[column sep = 0.15cm, row sep = 0.15cm]
\lstick{$\ket{0}$} & \gate{H} & \gate[wires=4,nwires=3]{0 \leq i < m}\vqw{4} & \qw & \gate[wires=4,nwires=3]{0 \leq i < m}\vqw{4} & \gate{H} & \octrl{1} & \qw & \octrl{1} & \gate{H} & \qw\rstick[4]{$\frac{e^{\text{i}\phi_s^\prime}}{\sqrt{m}}\sum_{i=0}^{m-1}\ket{i}$} \\
\lstick{$\ket{0}$} & \gate{H} & \qw & \qw & \qw & \gate{H} & \octrl{1} & \qw & \octrl{1} & \gate{H} & \qw \\
\ghost{H} & \ghost{H} & \ghost{H} & \ghost{H} & \ghost{H} & \ghost{H} & {\,\vdots} & \ghost{H} & {\,\vdots} & \ghost{H} & \ghost{H} \\
\lstick{$\ket{0}$} & \gate{H} & \qw & \qw & \qw & \gate{H} & \octrl{1} \vqw{-1} & \qw & \octrl{1} \vqw{-1} & \gate{H} & \qw \\
\lstick{$\ket{0}$} & \qw & \targ{} & \gate{R_z(\phi_s)} & \targ{} & \qw & \targ{} & \gate{R_z(\phi_s)} & \targ{} & \qw & \qw\rstick{$\ket{0}$}
    \end{quantikz}
        }
        \caption{$\hat{U}_{\bm{s}}(m)$.}
\end{subfigure}

\vspace{0.4cm}
\centering
\begin{subfigure}[c]{0.8\linewidth}
       \resizebox{\linewidth}{!}{%
    \begin{quantikz}[column sep = 0.15cm, row sep = 0.15cm]
    \lstick[4]{$\ket{\psi}$} & \gate[wires=4,nwires=3]{\hat{U}_{\bm{s}}(m)} & \qw & \octrl{1} & \qw & \octrl{1} & \qw & \gate[wires=4,nwires=3]{\hat{U}_{\bm{s}}(m)^\dagger} & \qw\rstick[4]{$e^{-\text{i}t\hat{\mathcal{L}}_m}\ket{\psi}$}\\
    \qw & \qw & \qw & \octrl{1} & \qw & \octrl{1} & \qw & \qw & \qw\\
    \ghost{H} &  \ghost{H} &  \ghost{H} & {\,\vdots\,} & \ghost{H} & {\,\vdots\,} & \ghost{H} & \ghost{H} & \ghost{H} \\
    \qw & \qw & \gate{R_z(-\frac{m t}{2})} & \targ{} \vqw{-1}\vqw{1} & \gate{R_z(-\frac{m t}{2})} & \targ{} \vqw{-1}\vqw{1} & \gate{R_z(m t)} & \qw & \qw \\
    \lstick{$\ket{0}$} & \qw & \qw & \targ{} & \gate{R_z(-\frac{m t}{2})} & \targ{} & \qw & \qw & \qw \rstick{$\ket{0}$}
    \end{quantikz}
    }
    \caption{$\hat{U}_{\hat{\mathcal{L}_m}}(t)$}
    \end{subfigure}
    \caption{(a) Representative circuit diagram of the generalised QMOA. The initial state $\ket{\psi_0}$ is prepared as an equal superposition over the first $m$ computational basis states of $n$ registers that encode the combinatorial variables of $\bm{s}$. The parameterised solution costs $\gamma C(\bm{s})$ are phase-encoded to $r$ bits of precision on a register of $\mathcal{O}(r)$ qubits~\cite{childs2004quantum}. (b) $\hat{U}_{\bm{s}}$ prepares a uniform superposition over the first $m$ computational basis states~\cite{yoder2014fixed}, where $\phi_s = 2 \arcsin\left( \sqrt{\frac{2^m}{4 m}} \right)$, $\phi_s^\prime = \pi - \arccos\left( \sqrt{\frac{2^m}{4 m}} \right)$, and $H$ is a Hadamard gate. A CTQW on the Laplacian of an $(n, m)$ Hamming graph is simulated by $\left(\hat{U}_{\mathcal{L}_m}(t_i)\right)^{\otimes n}$. (c) Simulates a CTQW on the Laplacian of a complete graph that fully connects the first $m$ basis states~\cite{yoder2014fixed}.}
    \label{qc:qmoa}
\end{figure*}
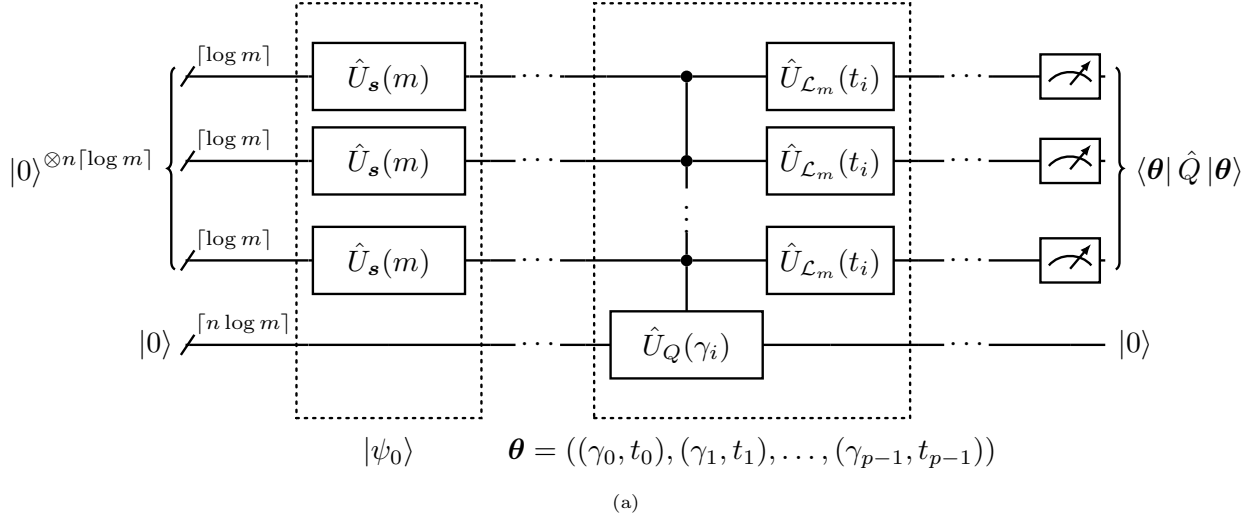
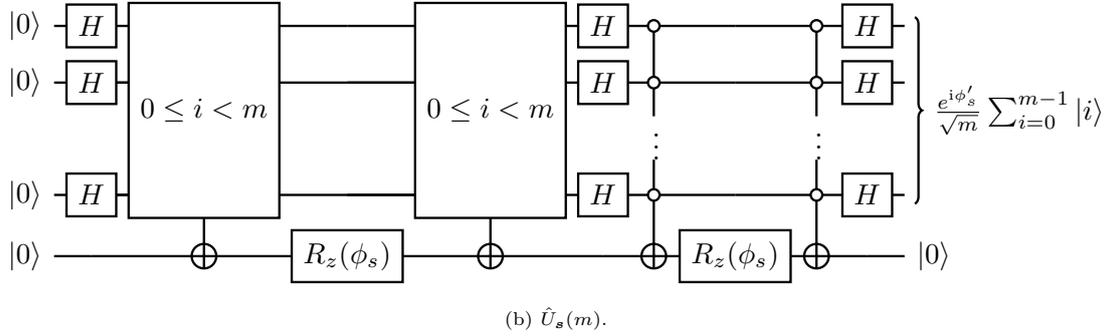
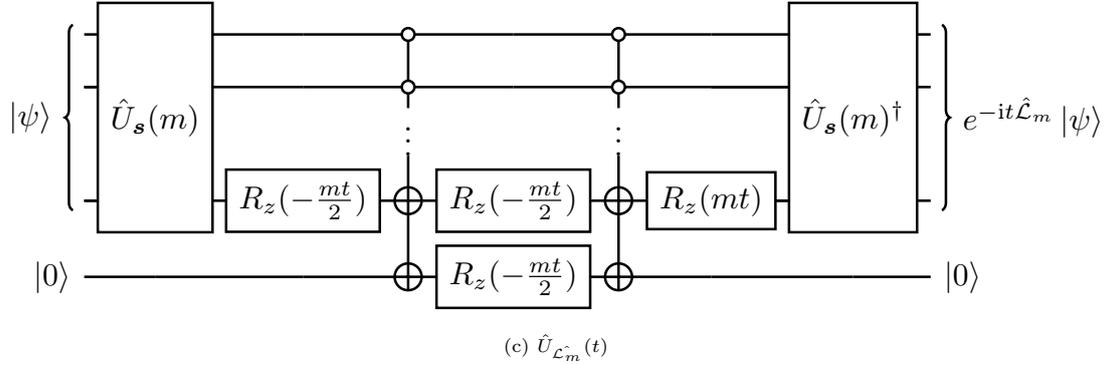

Defining $\mathcal{G}$ as the $(n, m)$ Hamming graph, arrives at a special case of the QMOA mixing unitary (see \cref{eq:qmoa_mixer}) in which the mixer for each coordinate is the complete graph, $\bm{t}$ specifies the same walk time $t$ in each coordinate, and the number of points in each dimension is $m$. Given a solution space where its size $|\mathcal{S}| = m^n$ is equal to an integer power of two, the initial state $\ket{\psi_0}$ may be prepared by applying a Hadamard gate to each of the $r = n\lceil \log m \rceil$ qubits that encode the solution space and the QMOA applied without modification. 

The original QMOA assumes that $m$ is equal to an integer power of two. For unconstrained COPs where this is not the case, we introduce the \emph{generalised} QMOA. This variant of the QMOA can be realised using the modulo $m$ quantum Fouier transform. However, the quantum Fourier transform has been found to be susceptible to noise, which may lead to a leakage of probability amplitude to states outside of $\mathcal{S}$. For this reason, we introduce an alternative method based on the work of Bennett et al. for the exact simulation of a CTQW on the Laplacian of a complete graph with $m$ vertices $\mathcal{L}_m$~\cite{bennett_quantum_2021}. It makes use of circuits developed for a fixed-point quantum search~\cite{yoder2014fixed}, which are informed by the following expression of the unitary of a CTQW on $\mathcal{L}_m$ up to global phase,
\begin{equation}
    \hat{U}_{\mathcal{L}_m}(t) = \mathcal{I} - (1 - e^{-\text{i} m t})\ket{\psi_m}\bra{\psi_m},
\end{equation}
which describes the walk as a rotation about an equal superposition over $m$ states $\ket{\psi_m}=\sum_{i,j=0}^{m-1}\ket{i}\bra{j}$. As the problem formulation in \cref{sec:cop} specifies that the alphabet $\mathcal{X}$ is shared by each of the combinatorial variables $n$, a CTQW on the Laplacian of an $(n,m)$ Hamming graph is exactly simulated by $\left(\hat{U}_{\mathcal{L}_m}(t)\right)^{\otimes n}$, which share the same global phase.

For arbitrary $m$, $\ket{\psi_m}$ is prepared by $\hat{U}_{\bm{s}}(m)\ket{0}$ where $\hat{U}_{\bm{s}} = H^{\otimes r}\hat{U}_0H^{\otimes r}\hat{U_1}H^{\otimes r}$ with,
\begin{equation}
\hat{U}_0=\left\{\begin{array}{ll}
e^{\text{i} \phi}\ket{i} & \text{if } 0 \leq i < m, \\
\ket{i} & \text{otherwise},
\end{array}\right.
\end{equation}
and,
\begin{equation}
\hat{U}_1=\left\{\begin{array}{ll}
e^{\text{i} \phi}\ket{i} & \text{if } i=0, \\
\ket{i}, & \text{otherwise}
\end{array}\right.
\end{equation}
where $\phi_s=2 \arcsin\left( \sqrt{\frac{2^m}{4 m}} \right)$. The initial state for the generalised QMOA is then obtained by $\left(\ket{\psi_0}=\hat{U}_{\bm{s}}(m)\right)^{\otimes n}$. Circuit diagrams illustrating the implementation of the generalised QMOA in this manner are shown in \cref{qc:qmoa}, which together have $\mathcal{O}(n^2)$ gate complexity~\cite{yoder2014fixed}. Parallel execution of CTQWs on $\mathcal{L}_m$ requires $n$ ancilla qubits; alternatively, sequential execution of $\hat{U}_{\mathcal{L}_m}$ may be performed with a single ancilla at the expense of increased circuit depth.

We note that this formulation frames the QMOA as an extension of the QAOA to higher-order Hamming graphs, as the hypercube graph (see \cref{eq:qmoa_mixer}) is exactly the $(n, 2)$ Hamming graph. The QAOA mixing unitary is obtained from the QMOA mixing unitary (see \cref{eq:qmoa_mixer}) if $m = 2$, in which case $\left(\mathcal{F}^{-1}\right)^{\otimes n}$ and $\mathcal{F}^{\otimes n}$ reduce to a tensor product of Hadamard gates $H^{\otimes n}$, and $\exp\left(-\text{i} t \hat{\Lambda}(t)\right) = \exp\left(-\text{i} t \left(\ket{0}\bra{0} - \ket{1}\bra{1}\right)^{\otimes n}\right)$, which together reduce to the well-known circuit for the QAOA mixing unitary, applying a controlled $X$ rotation to each of the $n$ qubits~\cite{medvidovic2021classical}.

\section{Optimisation Over Permutation Subspaces}
\label{sec:perm_qva}

This section applies the heuristic for mixing unitary design to constrained optimisation under integer equality constraints as defined in \cref{sec:cop}. In \cref{sec:multiset_partitioning}, we demonstrate that the heuristic defines a partitioning $S$ of the feasible solution space $\mathcal{S}$ into permutation sets, from which a subset that corresponds to the valid solution space $\mathcal{S}^\prime$ is efficiently identifiable for all COPs of this type. Based on this result, we build on the QWOA's indexing strategy (see \cref{sec:qwoa}), developing the Quantum Walk-based Optimisation Algorithm over Combinatorial Subsets (QWOA-CS), which searches over a canonical subspace of non-degenerate valid solutions. \cref{sec:indexing_perm} describes efficient algorithms for indexing and unindexing $\mathcal{S}^\prime$ in a manner that preserves the structure of the multiset partitioning. The QWOA-CS is introduced in \cref{sec:qwoa-cs}. This QVA employs indexing together with a mixing unitary that is defined by two \emph{submixers}. The first of these acts within the disjoint sets of $S$ and is defined by the graph $\mathcal{G}$ that follows from the design heuristic. The second submixer, which drives the transfer of probability amplitude between disjoint valid sets, is defined by a $K$-partite graph. We describe the properties of the disjoint subgraphs of $\mathcal{G}$ in \cref{sec:perm_graph} and outline the efficient implementation of the QWOA-CS mixing unitary in \cref{sec:perm_ctqw,sec:k-partite-graph}.

\subsection{Multiset Partitioning of the Solution Space}
\label{sec:multiset_partitioning}

Consider a solution $\bm{s}$ that satisfies the constraint $h(\bm{s}) = A$ (see \cref{eq:constraint}). Since $h$ is the sum of identical one-to-one mappings from $\mathcal{X}$ to $\mathbb{Z}$, changing a single element in $\bm{s}$ will result in an invalid solution. However, because $h$ depends only on the number of each $x_j \in \mathcal{X}$ in $\bm{s}$, permutations of $\bm{s}$ will also be valid. Hence, a valid solution with a minimum non-zero Hamming distance from $\bm{s}$ is obtained by transposition of two of its elements, with $d_{\text{min}} = 2$.

Unlike the unconstrained case, $S$ and, consequently, $\mathcal{G}$ are disjoint since by transposition we can only obtain solutions that are related by permutation. Therefore, it is necessary to identify a union of $S_k \in S$ that is equal to the valid solution space $\mathcal{S}^\prime$ to initialise a superposition that fully covers the problem search space. In the following, we show that this is achievable with polynomial complexity in $n$ by establishing a correspondence between $S$ and a set of generating multisets.

Let $\mathcal{P}_i = (\mathcal{X}, P_i)$ be a multiset of cardinality $n$, where $P_i: \mathcal{X} \rightarrow \mathbb{Z}^+$ specifies the multiplicity $x_j \in \mathcal{X}$ in the multiset. We notate the image of $P_i$ as, 
\begin{equation}
 \bm{P}_i = (P_{i,x_0}, P_{i,x_1}, \dots, P_{i,x_{m-1}})
\end{equation}
where $P_{i,x_j}$ is the multiplicity of $x_j$. The number of possible $\mathcal{P}_i$  is equal to the number of weak compositions of $n$ into $m$ non-negative integers~\cite{heubach_2009},
\begin{equation}
\label{eq:num_multisets_1}
|\mathcal{P}| = \binom{{n + m - 1}}{m}.
\end{equation}
where $\mathcal{P}$ is the set of all $\mathcal{P}_i$ with cardinality $n$. For constant $m$, \cref{eq:num_multisets_1} can be expressed as, 
    \begin{equation}
    \label{eq:num_multisets}
        |\mathcal{P}| = \frac{1}{m-1} (n + m -1)(n + m -2)\times\dots\times(n + 1)
    \end{equation}
using the multiplicative formula for the binomial coefficients~\footnote{The multiplicative formula for the binomial coefficients is, \[\binom{n}{k} = \frac{n \times (n-1) \times \dots \times (n - k + 1)}{k \times (k - 1) \times \dots \times 1}.\]} as a polynomial in $n$ of degree $m-1$.

As the set of permutations of a multiset $\text{Perm}(\mathcal{P}_i)$ contains every unique ordering of its members, the cardinality of $\text{Perm}(\mathcal{P}_i)$ is given by the multinomial coefficient~\cite{bener_2015},
\begin{equation}
\label{eq:multinomial}
|\text{Perm}(\mathcal{P}_i)| =  \frac{n!}{P_{i,x_0}! P_{i,x_1}! \dots P_{i,x_{m-1}}!}.
\end{equation}

Recall the linear equality constraint function $h$ defined in \cref{eq:constraint}. We establish the equivalence of its image on $\mathcal{S}$ and $\mathcal{P}$ in the following lemma.

\begin{lemma}
\label{th:multiset_covers}
The space of multisets $\mathcal{P}$ with cardinality $n$ provides a domain for the constraint function $h$ that covers the set of unique constraint values,
\[h(\mathcal{S}) = \{ A \mid A \in h(\bm{s}), \bm{s} \in \mathcal{S} \}.\]
\end{lemma}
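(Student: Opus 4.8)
The plan is to prove the asserted set equality by double inclusion, after first recording the one observation that does all the work: the constraint function $h(\bm{s}) = \sum_{i=0}^{n-1} z(s_i)$ is invariant under any reordering of the components of $\bm{s}$, since $z$ is applied identically to each entry and the results are summed. Consequently $h$ factors through the multiset of entries of $\bm{s}$, and I would make this explicit by extending $h$ to a multiset $\mathcal{P}_i$ with image $\bm{P}_i = (P_{i,x_0}, \dots, P_{i,x_{m-1}})$ via
\[
h(\mathcal{P}_i) = \sum_{j=0}^{m-1} P_{i,x_j}\, z(x_j),
\]
which is well defined because the multiplicities $P_{i,x_j}$ depend on $\mathcal{P}_i$ alone. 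Writing $h(\mathcal{P}) = \{ h(\mathcal{P}_i) \mid \mathcal{P}_i \in \mathcal{P} \}$, the goal is to show $h(\mathcal{P}) = h(\mathcal{S})$.

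For the inclusion $h(\mathcal{S}) \subseteq h(\mathcal{P})$, I would take an arbitrary $\bm{s} = (s_0, \dots, s_{n-1}) \in \mathcal{S}$ and let $\mathcal{P}(\bm{s})$ be the multiset of its entries, whose multiplicity of $x_j$ is the number of indices $i$ with $s_i = x_j$. Because $\bm{s}$ has length $n$, these multiplicities sum to $n$, so $\mathcal{P}(\bm{s}) \in \mathcal{P}$. Grouping the terms of $h(\bm{s}) = \sum_{i=0}^{n-1} z(s_i)$ according to which $x_j$ each $s_i$ equals then gives $h(\bm{s}) = h(\mathcal{P}(\bm{s}))$, so every constraint value attained by a solution is attained by a cardinality-$n$ multiset.

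For the reverse inclusion $h(\mathcal{P}) \subseteq h(\mathcal{S})$, I would take any $\mathcal{P}_i \in \mathcal{P}$ and construct a solution by listing its members in an arbitrary order; since $\mathcal{P}_i$ has cardinality $n$ this produces some $\bm{s} \in \text{Perm}(\mathcal{P}_i) \subseteq \mathcal{S}$, and the same grouping identity yields $h(\bm{s}) = h(\mathcal{P}_i)$. Combining the two inclusions establishes $h(\mathcal{P}) = h(\mathcal{S})$.

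The argument is essentially bookkeeping, so I do not expect a genuine obstacle; the only points requiring care are that the map $\bm{s} \mapsto \mathcal{P}(\bm{s})$ is \emph{surjective} onto the cardinality-$n$ multisets, so that no value in $h(\mathcal{P})$ is missed, and that the counting of multiplicities in the grouping step is exact. I would also note for context that the injectivity of $z$ is not actually needed for this set equality, though it is what later guarantees the correspondence between $S$ and $\mathcal{P}$ is faithful when the valid subsets $S_k$ are reconstructed as permutation sets of generating multisets.
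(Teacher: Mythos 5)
Your proof is correct and rests on exactly the same observation as the paper's — that $h(\bm{s})=\sum_i z(s_i)$ is invariant under permutation of $\bm{s}$ and hence factors through the multiset of entries — with the paper stating this in one line and you spelling out the well-definedness of $h$ on multisets and the two inclusions explicitly. Your aside that the injectivity of $z$ is not needed for this particular equality is also accurate.
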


\begin{proof}
As $h(\bm{s})$ is invariant under reordering of its terms $z(s_i)$, it is invariant under permutation of $\bm{s}$. Therefore, $h(\mathcal{S}) \equiv h(\mathcal{P})$.
\end{proof}

\begin{corollary}
The cardinality of $\mathcal{A}$ is upper bound by \cref{eq:num_multisets}.
\end{corollary}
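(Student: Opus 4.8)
The plan is to obtain the bound as an immediate consequence of \cref{th:multiset_covers} together with the elementary fact that a map cannot enlarge the cardinality of its domain. First I would fix the object being bounded: $\mathcal{A}$ denotes the set of distinct achievable constraint values, i.e. $\mathcal{A} = h(\mathcal{S}) = \{ A \mid A \in h(\bm{s}), \, \bm{s} \in \mathcal{S} \}$. By \cref{th:multiset_covers} this set coincides with the image of $h$ taken over the multiset domain, $h(\mathcal{S}) \equiv h(\mathcal{P})$, so it suffices to bound $|h(\mathcal{P})|$.

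Next I would note that $h$ descends to a well-defined map $h : \mathcal{P} \rightarrow \mathbb{Z}$: as established in the proof of \cref{th:multiset_covers}, $h$ is invariant under reordering of its summands $z(s_i)$ and hence is constant on each permutation class, so its value depends only on the multiset $\mathcal{P}_i$ and not on the ordered representative $\bm{s}$. Since the image of any function has cardinality no larger than that of its domain, $|h(\mathcal{P})| \leq |\mathcal{P}|$. Combining the two observations gives $|\mathcal{A}| = |h(\mathcal{P})| \leq |\mathcal{P}|$, where $|\mathcal{P}|$ is exactly the quantity in \cref{eq:num_multisets} (equivalently \cref{eq:num_multisets_1}), which closes the argument.

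There is essentially no technical obstacle here; the substantive content is entirely front-loaded into \cref{th:multiset_covers}, and the corollary reduces to a one-line domain-versus-image count. The only point I would be careful to flag is why the statement is an inequality rather than an equality: because the injective map $z$ may nevertheless produce coincident weighted sums for distinct multiplicity profiles (different images $\bm{P}_i$ whose constraint values happen to agree), $h$ need not be injective on $\mathcal{P}$, so several multisets can collapse onto the same value of $A$. Accordingly I would present the result strictly as the domain-size upper bound and would not attempt to strengthen it to an equality.
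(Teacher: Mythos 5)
Your proof is correct and follows essentially the same route as the paper's: both invoke \cref{th:multiset_covers} to replace $h(\mathcal{S})$ with $h(\mathcal{P})$ and then bound the image of $h$ on $\mathcal{P}$ by the domain size $|\mathcal{P}|$ from \cref{eq:num_multisets}, with equality only when $h$ is injective on $\mathcal{P}$. Your version is slightly more explicit than the paper's (which compresses the domain-versus-image count into the remark that the image ``is maximised when $h$ is a bijection''), and your identification of $\mathcal{A}$ as $h(\mathcal{S})$ and your note on why the bound need not be tight are both accurate.
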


\begin{proof}
As $h(\mathcal{S}) \equiv h(\mathcal{P})$, the image of $h(\mathcal{P})$, it will be maximised when $h$ is a bijection from $\mathcal{P}$ to $h(\mathcal{P})$.
\end{proof}

By these results and \cref{eq:num_multisets}, we establish that the set of multisets $\mathcal{P}^\prime = (\mathcal{P}_0, \dots, \mathcal{P}_{K-1})$ that satisfy a given constraint value $A$ is identifiable with a worst-case polynomial-time complexity of $\mathcal{O}(n^{m-1})$ and that, for any $A$, there exists a partitioning of $\mathcal{S}^\prime$ into permutation sets 
\[S = (S_0, S_1, \dots, S_{K-1})\] 
where $S_k = \text{Perm}(\mathcal{P}_k)$ and 
\begin{equation}
\mathcal{P}_k \in \{\mathcal{P}_i \mid h(\mathcal{P}_i) = A\}.    
\end{equation}
We refer to the disjoint $\mathcal{G}_k$ resulting from the design heuristic and this partitioning as the \emph{constrained permutation} graphs. 

\subsection{Indexing Algorithms for Permutation Sets}
\label{sec:indexing_perm}

To develop an efficient scheme for the lexicographic indexing of the multiset partitioning described in \cref{sec:multiset_partitioning}, we combined an indexing of the valid generating multisets $\mathcal{P}^\prime$ with a local index for $\bm{s} \in S_k$. The complexity associated with an indexing $\mathcal{P}^\prime$ is problem-specific. However,  by \cref{eq:num_multisets}, identification of the valid multisets for a given constraint value $A$ requires $\mathcal{O}(n^{m-1})$ queries to the constraint function $h$ in the worst case. Thus the indexing of $\mathcal{P}^\prime$ is efficient relative to the size of the solution space in general.

Indexing within $S_k$ follows from the observation that the number of $\bm{s} \in S_k$ starting with the same $x_i$ is given by the multinomial coefficient for $n - 1$ with the remaining counts of the unplaced values (i.e., all possible configurations for the remaining values). The resulting indexing algorithm $\text{id}_k$ recursively computes a lexicographically ordered index for each $s$ in $\bm{s}$ based on its position in $\bm{s}$ and the multiplicities given by $\bm{P}_k$. As shown in \cref{alg:index_sj}, it computes a prefix for each $s$ equal to the cumulative permutations occurring before it at the same position, which is calculated iteratively based on the encountered symbol counts within the sequence. The unindexing algorithm, shown in \cref{alg:unindex}, denoted $\text{id}_k^{-1}$, essentially reverses this indexing process.

An essential operation in the indexing and unindexing algorithms is the computation of multinomial coefficients. This can be achieved using the following factorisation of the multinomial coefficients as a product of binomial coefficients~\cite{bener_2015}, 
\begin{equation}
\binom{n}{P_{k,0}, P_{k,1}, \ldots, P_{k,m-1}} = \prod_{i=0}^{m-2} \binom{n - \sum_{j=0}^{i-1} P_{k,j}}{P_{k,i}}.
\end{equation}
A quantum circuit illustrating this process for $\binom{n}{{P_{k,x_0}, P_{k,x_1}, P_{k,x_2}}}$, which corresponds to a problem instance where $|X| = 3$, is shown in \cref{qc:multinomial}. Registers $\ket{b_0}$, $\ket{b_0}$ and $\ket{b_0}$ are of size $\mathcal{O}(\log m)$; consequently, the binomial coefficients are computable with $\mathcal{O}(n\log m)$ gate complexity~\cite{marsh_combinatorial_2020}. Multiplication of $n \log m$ bit integers is the most intensive operation in the circuit, leading to an overall gate complexity of $\mathcal{O}(n^3 \log^3 m)$. The equivalent circuit for any $n$ and $m$ requires $\mathcal{O}(m \log n)$ ancillia qubits and $\mathcal{O}(n)$ multiplications and divisions. As \cref{alg:index_sj,alg:unindex} have a recursion depth of $n$ and evaluate at most $m$ conditional expressions per call, they have a gate complexity of $\mathcal{O}(n^4 \log^4 m)$.

\begin{figure*} 
\centering
\resizebox{\linewidth}{!}{%
\begin{quantikz}[column sep = 0.1cm, row sep = 0.2cm]
        \lstick{\ket{b_0}} & \qwbundle{ } & \phantomgate{hhh} & \gate[4][2cm]{\hat{U}_+}\gateinput{$x_0$}\gateoutput{$x_0$} & \qw & \qw & \qw & \qw & \qw & \qw & \qw & \gate[4][2cm]{\hat{U}_+}\gateinput{$x_0$}\gateoutput{$x_0$}& \qw\rstick{\ket{b_0}} & \\
        \lstick{\ket{b_1}} & \qwbundle{} & \phantomgate{h} & \gateinput{$x_1$}\gateoutput{$x_1$} & \qw & \gate[5][1.6cm]{\mathcal{B}}\gateinput{$x_0$}\gateoutput{$x_0$} & \qw & \qw & \qw & \gate[5][1.6cm]{\mathcal{B}}\gateinput{$x_0$}\gateoutput{$x_0$} & \qw & \gateinput{$x_1$}\gateoutput{$x_1$} & \qw & \qw\rstick{\ket{b_1}} & \\
        \lstick{\ket{b_2}} & \qwbundle{} & \phantomgate{h} & \qw & \gate[3][2cm]{\hat{U}_+}\gateinput{$x_0$}\gateoutput{$x_0$} &  \qw & \qw & \qw & \qw & \qw & \gate[3][2cm]{\hat{U}_+}\gateinput{$x_0$}\gateoutput{$x_0$} & \qw & \qw\rstick{\ket{b_2}} & \\
        \lstick{\ket{0}} & \qwbundle{} & \phantomgate{h} & \gateinput{$y$}\gateoutput{${(x_0 {+} x_1)} {\oplus} y$} & \gateinput{$x_1$}\gateoutput{$x_1$} & \gateinput{$x_1$}\gateoutput{$x_1$} & \gate[4][1.6cm]{\mathcal{B}}\gateinput{$x_0$}\gateoutput{$x_0$} & \qw & \gate[4][1.6cm]{\mathcal{B}}\gateinput{$x_0$}\gateoutput{$x_0$} & \gateinput{$x_1$}\gateoutput{$x_1$} & \gateinput{$x_1$}\gateoutput{$x_1$} & \gateinput{$y$}\gateoutput{${(x_0 {+} x_1)} {\oplus} y$} & \qw\rstick{\ket{0}} & \\
        \lstick{\ket{0}} & \qwbundle{} & \phantomgate{h} & \qw & \gateinput{$y$}\gateoutput{${(x_0 {+} x_1)} {\oplus} y$} & \qw & \gateinput{$x_1$}\gateoutput{$x_1$} & \qw & \qw & \qw & \gateinput{$y$}\gateoutput{${(x_0 {+} x_1)} {\oplus} y$} & \qw & \qw\rstick{\ket{0}} & \\
        \lstick{\ket{0}} & \qwbundle{} & \phantomgate{h} & \qw & \qw & \ghost{H}\gateinput{$y$}\gateoutput{$\binom{x_1}{x_0} {\oplus} y$} & \qw & \gate[3][2cm]{\hat{U}_\times}\gateinput{$x_0$}\gateoutput{$x_0$} & \gateinput{$x_1$}\gateoutput{$x_1$} 
 & \ghost{H}\gateinput{$y$}\gateoutput{$\binom{x_1}{x_0} \oplus y$} & \qw & \qw & \qw\rstick{\ket{0}} & \\
        \lstick{\ket{0}} & \qwbundle{} & \phantomgate{h} & \qw & \qw & \qw & \ghost{H}\gateinput{$y$}\gateoutput{$\binom{x_1}{x_0} {\oplus} y$} & \gateinput{$x_1$}\gateoutput{$x_1$} & \ghost{H}\gateinput{$y$}\gateoutput{$\binom{x_1}{x_0} {\oplus} y$} & \qw & \qw  & \qw & \qw\rstick{\ket{0}} & \\
        \lstick{\ket{0}} & \qwbundle{} & \phantomgate{h} & \qw & \qw & \qw & \qw & \gateinput{$y$}\gateoutput{$(x_0 {\times} x_1) {\oplus} y$} & \qw & \qw & \qw & \qw & \qw\rstick{\ket{\mathcal{M}}} &
\end{quantikz}
}
    \caption[Circuit implementing computation of the multinomial coefficient]{Circuit for computation of the multinomial coefficient $\mathcal{M} = \binom{{b_0 + b_1 + b_2}}{{b_0, b_1, b_2}}$ as the product of two binomial coefficients.}
    \label{qc:multinomial}
\end{figure*}
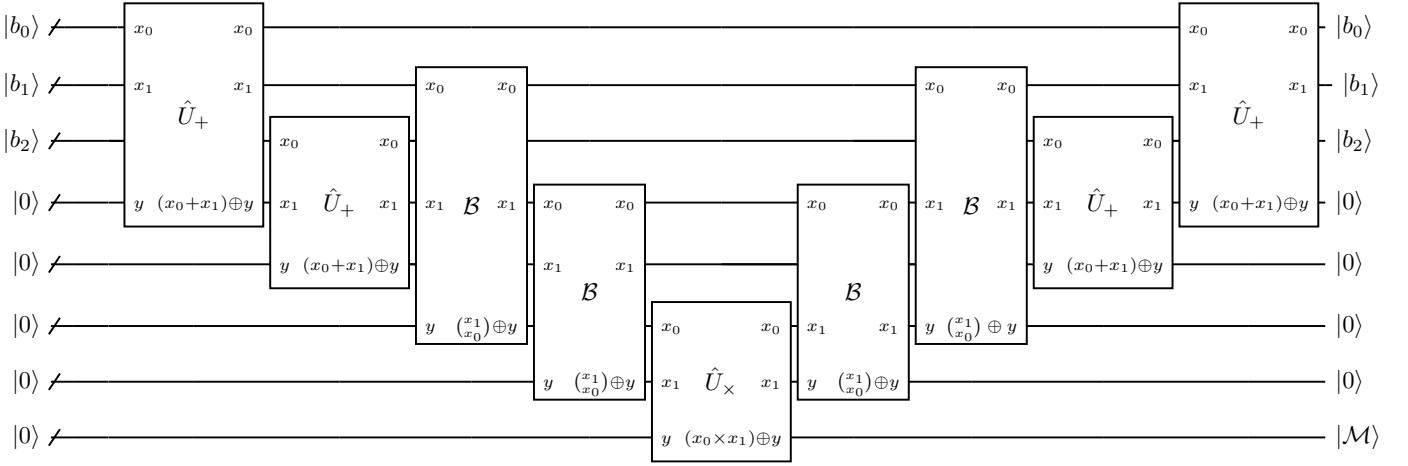

\begin{algorithm2e}[t]
\caption[QWOA-CS indexing algorithm for permutation sets]{$\text{id}_k(\bm{s})$ (QWOA-CS)}
\label{alg:index_sj} \SetKwFunction{subindex}{\textsc{IDK}} \SetKwProg{func}{function}{}{}
  \func{\subindex{$\bm{s}$}}{%
    $n \gets \text{size}(\bm{s})$\;
    $\bm{P}_k \gets (\text{count}(\bm{s}, x_0), \dots, \text{count}(\bm{s}, x_{m-1}))$\;
    \lIf{$n = 0$}{$\Return \; 0$}
    nPerm $\gets \binom{n}{{P_{k, 0}, \dots, P_{k, m-1}}}$\;
    $s \gets \bm{s}.\text{pop}(0)$ \;
    $\text{prefix} \gets 0$\;
    \ForEach{$x_i \text{ in } X$}{%
    \uIf{$s = x_i$}{%
      \Return  $\text{nPerm}\lfloor \frac{\text{prefix}}{n} \rfloor + \subindex{$\bm{s}$}$\;
    }
    $\text{prefix} \gets \text{prefix} + \bm{P}_{k,j}$\;
    }
    
  }
\end{algorithm2e}
\begin{algorithm2e}[t]
\caption[QWOA-CS unindexing algorithm for permutation sets]{$\text{id}_k^{-1}(i)$ (QWOA-CS)}
\label{alg:unindex}
\SetKwFunction{unindex}{\textsc{UNIDK}}
\SetKwProg{func}{function}{}{}
  \func{\unindex{$i, \bm{P}_k, \bm{s}$}}{%
    $n \gets \sum_{j=0}^{m-1}P_{k,j}$\;
    \If{$n = 0$}{\KwRet{$\bm{s}$}\;}
    nPerm $\gets \binom{n}{{P_{k, 0}, \dots, P_{k, m-1}}}$\;
    \ForEach{$P_{k,j} \text{ in } \bm{P}_k$}{
    $\text{prefix} \gets \text{nPerm}\lfloor \sum_{l=0}^{l \leq j} P_{k,l} \rfloor$\;
    \If{$i < \text{prefix}$}{
        $s.\text{append}(x_j)$\;
        $\bm{P}_{k,j} \gets \bm{P}_{k, j} - 1$\;
        $i \gets i - \text{prefix}$\;
        \KwRet{\unindex{$i, \bm{P}_k, \bm{s}$}}\;
    }
    }
  }
\end{algorithm2e}

We arrive at an indexing that preserves the structure of $S$ by defining the index offset function,
\begin{equation}
f(k) = 
\begin{cases} 
0 & \text{if } k = 0 \\
\sum_{i=0}^{k-1} |S_i| & \text{if } k > 0 
\end{cases}
\end{equation}
which computes the number of solutions based on the size of the $|S_k|$ with a lower index. An indexing algorithm for $\mathcal{S}^\prime$ with the desired grouping is then given by
\begin{equation}
    \label{eq:index}
    \text{id}(\bm{s}, k) = f(k) + \text{id}_k(\bm{s}),
\end{equation}
and the inverse unindexing function by
\begin{equation}
    \label{eq:unindex}
    \text{id}^{-1}(\text{id}(\bm{s}, k), k) = \text{id}_k^{-1}(\text{id}(\bm{s}, k) - f(k)),
\end{equation}   
where the time complexity of these algorithms is determined by the fastest-growing complexity among those of $\text{id}_k$, $\text{id}_k^{-1}$, and $f(x)$, all of which are $\mathcal{O}(\text{poly } n)$.

\subsection{The QWOA on Combinatorial Subsets}
\label{sec:qwoa-cs}

The Quantum Walk-based Optimisation Algorithm on Combinatorial Subsets (QWOA-CS), summarised in ~\cref{qc:qwoa-cs,fig:qwoa-cs-coupling}, performs a hierarchical search based on the multiset partitioning of $S$ described in \cref{sec:multiset_partitioning} and the multiset indexing scheme detailed in \cref{sec:indexing_perm}. 

As shown in \cref{qc:qwoa-cs-index}, the QWOA-CS indexing unitary $\hat{U}_{\#CS}$ introduces an additional register of size $\mathcal{O}(\log |\mathcal{P}^\prime|)$ that defines the multiset index $k$ of $\text{id}_k$. An index local to $S_k$ is first computed by,    
\begin{equation}
\label{eq:index_1}
\hat{U}_{\text{id}_k}\hat{\mathcal{S}}_{0,2}\hat{U}_{\text{id}_k}\ket{\bm{s}}\ket{k}\ket{0} = \ket{\text{id}_k(\bm{s})}\ket{k}\ket{0},
\end{equation}
where $\hat{\mathcal{S}}_{0,2}$ is a swap operation between the left and right registers. The unitary $\hat{U}_{f_k}$ then applies the index-offset,
\begin{equation}
\label{eq:index_2}
    \hat{U}_{f_k}\ket{\text{id}(\bm{s})}\ket{k} = \ket{f(k) + \text{id}(\bm{s})}\ket{k} = \ket{\text{id}(\bm{s}, k)}\ket{k},  
\end{equation}
For brevity, we refer to $\ket{\text{id}(\bm{s},k)}$ as $\ket{\text{id}(\bm{s})}$. Construction of the QWOA-CS unindexing unitary $\hat{U}_{\#CS}^\dagger$ naturally follows as the inverse of this procedure.

\begin{figure}
       \centering
        \resizebox{6.19975cm}{!}{%
        \begin{quantikz}[column sep = 0.23cm, row sep = 0.3cm]
        \lstick{\ket{\bm{s}}} & \qwbundle{} & \gate[3]{\hat{U}_{\text{id}_k}} & \swap{2} & \gate[3]{\hat{U}_{\text{id}_k}} & \gate[2]{\hat{U}_{f_k}} & \qw\rstick{\ket{\text{id}(\bm{s})}} \\
        \lstick{\ket{k}} & \qwbundle{} & \qw & \qw & \qw & \qw & \qw\rstick{\ket{k}} \\
        \lstick{\ket{0}} & \qwbundle{} & \qw & \targX{} & \qw & \qw & \qw\rstick{\ket{0}}
        \end{quantikz}
        }
    \caption[Circuit overview of the QWOA-CS indexing unitary]{Circuit overview of the QWOA-CS indexing unitary $\hat{U}_{\#CS}$.}
    \label{qc:qwoa-cs-index}
\end{figure}

The QWOA-CS ansatz unitary is summarised as

\begin{equation}
    \hat{U}_\text{QWOA-CS}(\gamma_i, t_{0,i}, t_{1,i}) = \hat{U}_Q(\gamma_i)\hat{U}_{W\text{-CS}}(t_{0,i}, t_{1,i})
\end{equation}
where,
\begin{equation}
    \label{eq:qwoa-cs-mixing-unitary}
    \hat{U}_{W\text{-CS}}(t_{0,i}, t_{1,i}) = \hat{U}_{\mathcal{P}^\prime}(t_{1,i})\hat{U}_{\mathcal{G}}(t_{0,i}),
\end{equation}
encapsulates two sequentially applied submixers, $\hat{U}_{\mathcal{G}}$ and $\hat{U}_{\mathcal{P}^\prime}$. 

The first submixer, $\hat{U}_{\mathcal{G}}$, implements CTQWs on the $K$ subgraphs of $\mathcal{G}$ (see \cref{sec:mixer-design,sec:multiset_partitioning}) that connect solutions in the valid solution space.

The second submixer $\hat{U}_{\mathcal{P}^\prime}$ performs a CTQW over the indexed solution space, 
\begin{equation}
    \hat{U}_{\mathcal{P}^\prime}(t) = \hat{U}_{\#\text{CS}}\hat{U}_{\mathcal{L}}(t) \hat{U}_{\#\text{CS}}^\dagger
\end{equation}
with the CTQW $\hat{U}_{\mathcal{L}}$ defined by the graph Laplacian of a nonhomogenous $K$-partite graph $\mathcal{L}_{|S_0|, |S_1|, \dots,|S_K|}$ with vertices $\mathcal{S}^\prime$ and edges, 

\begin{equation}
E = \{ (\bm{s}, \bm{s}^\prime) \mid \bm{s} \in S_k, \bm{s}^\prime \in S_{k^\prime}, S_k \bigcap S_{k^\prime} = \emptyset \}.
\end{equation}
We address the choice of the graph Laplacian for $\hat{U}_{\mathcal{L}}$ in \cref{sec:k-partite-graph}. The coupling structure of the QWOA-CS mixing unitary for an example portfolio rebalancing problem is shown in \cref{fig:qwoa-cs-coupling}.

\begin{figure*}[t!]
    \centering
    \includegraphics[width =\linewidth]{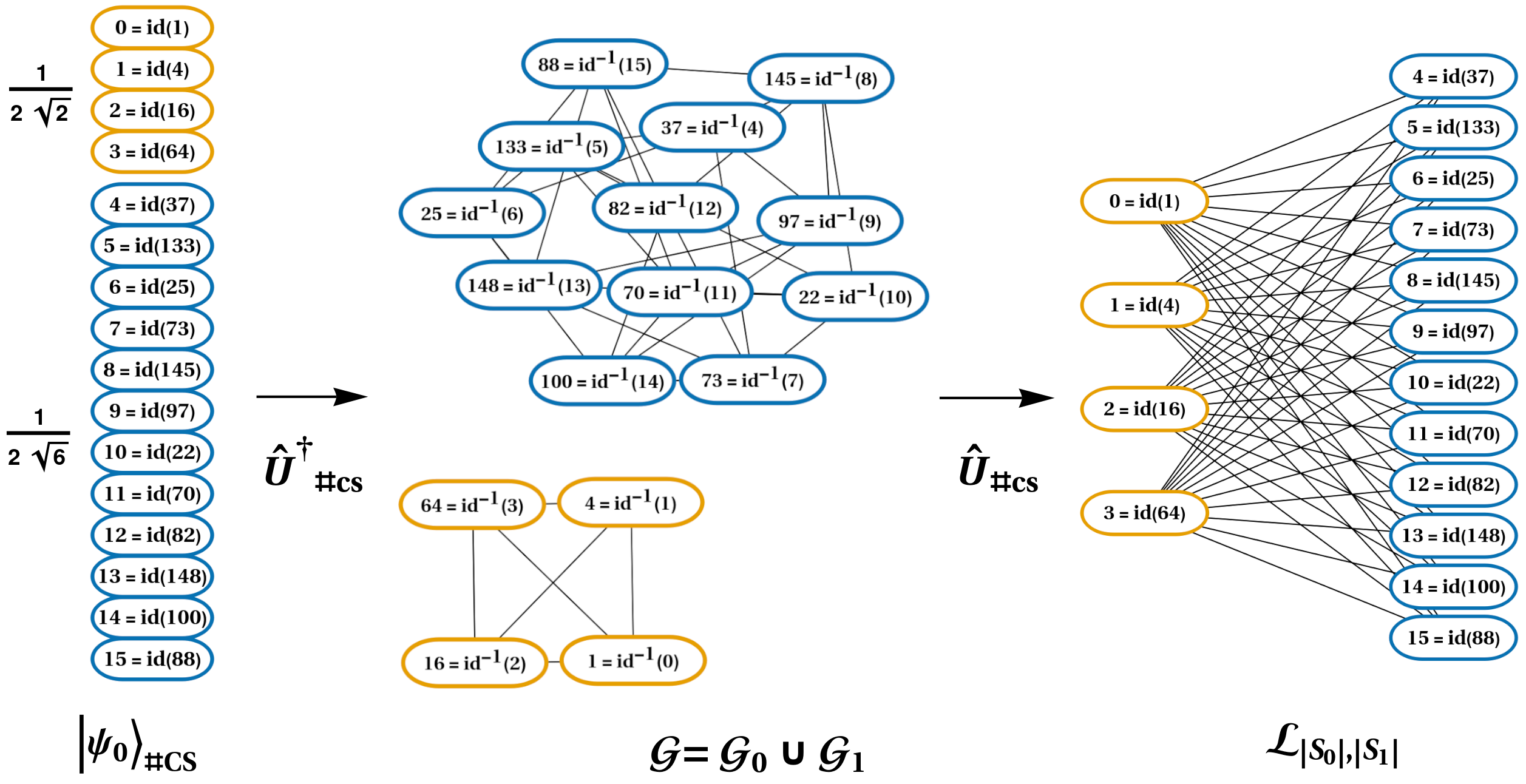}
    \caption[The QWOA-CS initial state and defining graphs]{Illustration of the QWOA-CS weighted initial state $\ket{\psi_0}_{\#CS}$, and the defining graphs of the QWOA-CS mixing unitary (\cref{eq:qwoa-cs-mixing-unitary}), which are a constrained permutation graph $\mathcal{G}$ and $K$-partite graph $\mathcal{L}_{|S_0|, |S_1|}$, for a portfolio rebalancing problem with $n=4$ assets and constraint value $A=-1$ (see \cref{sec:portfolio}) where $x = \text{id}(y)$ labels indexed states and $y = \text{id}^{-1}(x)$ indicates the corresponding unindexed state. Vertices outlined in orange belong to partition $S_0$ and those outlined in blue to $S_1$, which contain solutions generated by permutations of multisets of $\mathcal{X}=(1, -1, 0)$ with multiplicities $\bm{P}_1 = (0, 1, 3)$ and  $\bm{P}_1 = (1, 2, 1)$ respectively (see \cref{sec:multiset_partitioning}). Solutions are encoded with $n \lceil \log m \rceil = 2n$ qubits, so the size of the Hilbert space, $|\mathcal{H}|=256$, is several factors larger than the number of valid solutions, $|\mathcal{S}^\prime|=16$. Furthermore, as the mapping from $\mathcal{S}$ to $\mathcal{H}$ is not one-to-one, $\mathcal{H}$ contains degenerate encodings, with 56 states mapping to one of the 16 valid solutions. The weighted initial state $\ket{\psi_0}_{\#CS}$, defined in \cref{eq:subset_initial_state}, is prepared as an indexed state that populates a canonical subspace of valid non-degenerate solutions. The unindexing unitary $\hat{U}_{\#CS}^\dagger$ maps the populated indexed states to their corresponding permutations in the unindexed space as described in \cref{sec:indexing_perm}. For example, $\text{id}^{-1}(0)$ maps index 0 to the valid solution $\bm{s}_{1}=(-1, 0, 0, 0)$. According to the heuristic for mixing unitary design in \cref{sec:mixer-design}, $\bm{s}_{1}$ is connected in $\mathcal{G}_1$ to $\bm{s}_{4} = (0, -1, 0, 0)$, $\bm{s}_{16}=(0, 0, -1, 0)$, and $\bm{s}_{64} = (0, 0, 0, -1)$, which all have a Hamming distance of two from $\bm{s}_{1}$. The neighbours of any $\bm{s}$ are efficiently identifiable with a time complexity of $\mathcal{O}(n^2)$ in the unindexed space, enabling efficient implementation of $\hat{U}_{\mathcal{G}}$ via sparse Hamiltonian simulation as outlined in \cref{sec:perm_ctqw}. The unindexing unitary $\hat{U}_{\#CS}$ maps a set of non-degenerate valid solutions to their corresponding indexed states, where $U_{\mathcal{L}}$ performs a CTQW over $\mathcal{L}_{|S_0|,|S_1|}$, transferring probability amplitude between solutions in $S_0$ and $S_1$ by the method introduced in \cref{sec:k-partite-graph}.}
    \label{fig:qwoa-cs-coupling}
\end{figure*}

To prevent bias in the convergence of the QWOA-CS due to nonhomogeneity in the size of the valid paritions $S_k$, the system is initialised with a weighted probability distribution that is uniform with respect to the number of valid generating multisets $|\mathcal{P}^\prime|$. As depicted in \cref{qc:qwoa-cs}, this is achieved by applying $\mathcal{F}_{|\mathcal{P}^\prime|}$ to the second register, followed by $\mathcal{F}_{|S_k|}$ and the index offset function $f$ on the first register, to obtain the indexed state,
\begin{equation}
    \ket{\psi_0}_{\#\text{CS}} = \frac{1}{\sqrt{|\mathcal{P}^\prime|}} \sum_{k = 0}^{|\mathcal{P}^\prime|-1}\left( \frac{1}{\sqrt{|S_k|}} \sum_{i=f(k)}^{f(k+1) - 1}\ket{i}\right)\ket{k}.
    \label{eq:subset_initial_state}
\end{equation}
An un-indexed superposition with the desired probability weighting is then given by $\hat{U}_{\#CS}^\dagger\ket{\psi_0}_{\#\text{CS}}$.

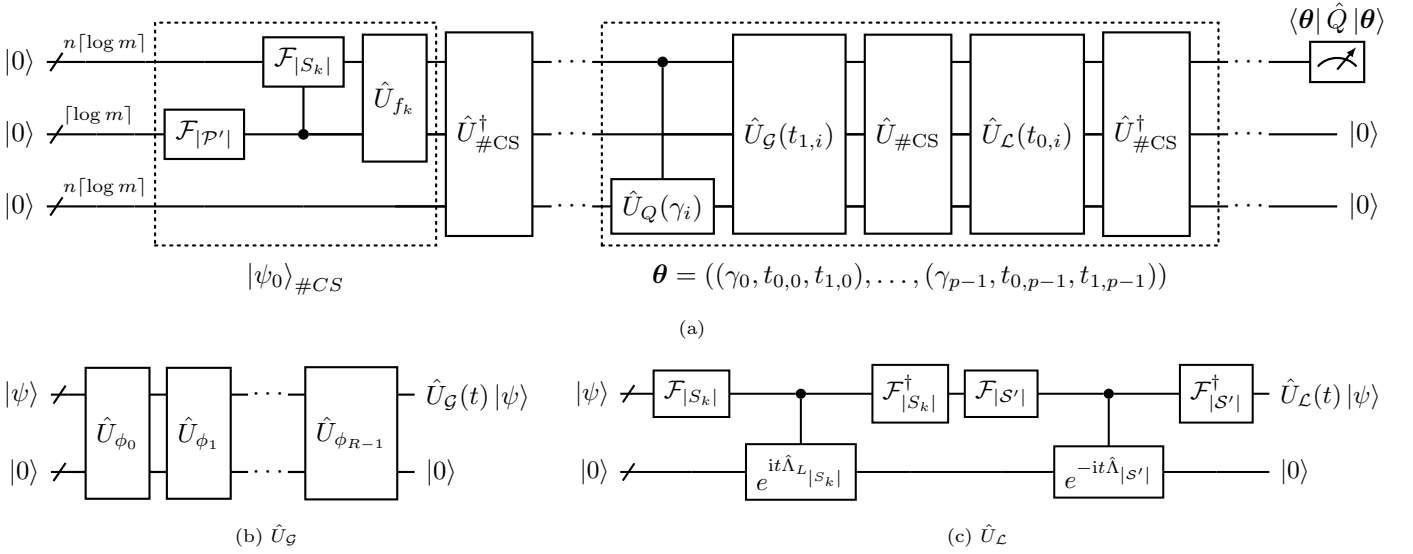
\begin{figure*}
\centering
\begin{subfigure}[c]{\linewidth}
\resizebox{\linewidth}{!}{%
\begin{quantikz}[column sep = 0.25cm, row sep = 0.2cm]
\lstick{\ket{0}} & \qwbundle{n \lceil \log m \rceil} & \phantomgate{HH} & \qw\gategroup[wires=3, steps=3, style={dotted, cap=round, inner sep=0pt, column sep = 0.2cm}, label style={label position=below, yshift=-0.6cm}]{$\ket{\psi_0}_{\#CS}$} & \gate{\mathcal{F}_{|S_k|}} & \gate[2]{\hat{U}_{f_k}} & \gate[3]{\hat{U}_{\#\text{CS}}^\dagger} & \push{\,\cdots\,} & \ctrl{2}\gategroup[wires=3, steps=6, style={dotted, cap=round, inner sep=0pt, column sep = 0.2cm}, label style={label position=below, yshift=-0.6cm}]{$\bm{\theta} = ((\gamma_0, t_{0,0}, t_{1,0}),\dots,(\gamma_{p-1}, t_{0,p-1}, t_{1,p-1}))$} & \gate[3]{\hat{U}_{\mathcal{G}}(t_{1,i})} &  \gate[3]{\hat{U}_{\#\text{CS}}}  &  \gate[3]{\hat{U}_{\mathcal{L}}(t_{0,i})} &  \gate[3]{\hat{U}_{\#\text{CS}}^\dagger} & \qw  & \push{\,\cdots\,} & \meter{$\bra{\bm{\theta}} \hat{Q} \ket{\bm{\theta}}$}  \\
\lstick{\ket{0}} & \qwbundle{\lceil \log m \rceil} & \phantomgate{h} & \gate{\mathcal{F}_{|\mathcal{P}^\prime|}} & \ctrl{-1} & \qw & \qw & \push{\,\cdots\,} & \qw & \qw    & \qw     & \qw  & \qw &  \qw & \push{\,\cdots\,}  & \qw\rstick{\ket{0}}          \\
\lstick{\ket{0}} & \qwbundle{n \lceil \log m \rceil} & \phantomgate{h} & \qw & \qw & \qw & \qw & \push{\,\cdots\,} & \gate{\hat{U}_Q(\gamma_i)} & \qw     & \qw  & \qw    & \qw & \qw & \push{\,\cdots\,}    & \qw\rstick{\ket{0}}          
\end{quantikz}
}
\caption{}
\end{subfigure}

\vspace{0.5em}

\begin{subfigure}[b]{0.39\linewidth}
       \centering
        \resizebox{\linewidth}{!}{%
        \begin{quantikz}[column sep = 0.23cm, row sep = 0.3cm]
        \lstick{\ket{\psi}} & \qwbundle{} & \gate[2]{\hat{U}_{\phi_0}} & \gate[2]{\hat{U}_{\phi_1}} & \push{\,\cdots\,} & \gate[2]{\hat{U}_{\phi_{R-1}}} & \qw\rstick{$\hat{U}_{\mathcal{G}}(t)\ket{\psi}$} \\
        \lstick{\ket{0}} & \qwbundle{} & \qw & \qw & \push{\,\cdots\,} & \qw & \qw\rstick{\ket{0}}
        \end{quantikz}
        }
        \caption{$\hat{U}_{\mathcal{G}}$}
\end{subfigure}
\hfill 
\begin{subfigure}[b]{0.59\linewidth}
       \centering
        \resizebox{\linewidth}{!}{%
        \begin{quantikz}[column sep = 0.23cm, row sep = 0.3cm]
        \lstick{\ket{\psi}} & \qwbundle{} & \gate[1]{\mathcal{F}_{\left|S_k\right|}} & \ctrl{1} & \gate[1]{\mathcal{F}_{
        \left|S_k\right|}^\dagger} & \gate[1]{\mathcal{F}_{\left|\mathcal{S}^\prime\right|}} & \ctrl{1} & \gate[1]{\mathcal{F}_{\left|\mathcal{S}^\prime\right|}^\dagger} & \qw\rstick{$\hat{U}_{\mathcal{L}}(t)\ket{\psi}$} \\
        \lstick{\ket{0}} & \qwbundle{} & \qw & \gate[1]{e^{\text{i} t \hat{\Lambda}_{L_{\left| S_k \right|}}}} & \qw & \qw & \gate[1]{e^{-\text{i} t \hat{\Lambda}_{\left|\mathcal{S}^\prime\right|}}} & \qw & \qw\rstick{\ket{0}}
        \end{quantikz}
        }
        \caption{$\hat{U}_{\mathcal{L}}$}
\end{subfigure}

\caption[Circuit overview of the QWOA-CS and its CTQW sub-circuits]{(a) Circuit overview of the QWOA-CS described in \cref{sec:qwoa-cs}. $\mathcal{F}_N$ denotes a quantum Fourier transform of size $N$, $\hat{U}_{f_k}$ is the index offset unitary, $\hat{U}_{\#\text{CS}}$ is the QWOA-CS indexing unitary (see \cref{qc:qwoa-cs-index}), and $\hat{U}_{\#\text{CS}}^\dagger$ is the unindexing unitary. The initial state $\ket{\psi_0}_{\#CS}$ is prepared as a uniform superposition over $K$ permutation subsets $S = (S_0, S_1, \dots, S_{K-1})$ whose union corresponds to a canonical subspace of valid solutions $\mathcal{S}^\prime$ (see \cref{eq:subset_initial_state}). Phase-encoded solution costs are computed on the third register as described in \cref{qc:qmoa}. (b) $\hat{U}_{\mathcal{G}}$ simulates a CTQW over a constrained permutation graph via sparse Hamiltonian simulation (see \cref{sec:sparse_hamiltonian,sec:perm_ctqw}). (c) $\hat{U}_{\mathcal{L}}$ simulates a CTQW over the $K$-partite graph $\mathcal{L}_{|S_0|, |S_1|, \dots, |S_{K-1}|}$ in the indexed solution space (see \cref{sec:k-partite-graph}).}
\label{qc:qwoa-cs} 
\end{figure*}

\subsection{Properties of Constrained Permutation Graphs}
\label{sec:perm_graph}

The constrained permutation graph, here denoted $\mathcal{G}_k$, is conceptually similar to the complete XY-mixer used by QAOAz (see \cref{sec:qaoaz}) as both graphs are defined by permutation sets in which two permutations are connected if and only if they differ by one transposition of their fundamental elements. However, since $\mathcal{G}_k$ is defined over permutations of a multiset $\text{Perm}(\mathcal{P}_k)$, the two graphs are not equivalent. Therefore, it is necessary to describe the structural properties of the constrained permutation graph to establish concretely our expectations for efficient convergence under \cref{th:min_var} and to determine the complexity associated with the quantum simulation of CTQWs on graphs of this type (see \cref{sec:perm_ctqw}).

Let $I_x: \mathcal{S} \rightarrow I_{x_i} $ be a mapping from $\mathcal{S}$ to a space of index subsets $I_{x_i} = \{ j \mid s_j = x_i \}$ (i.e., subsets containing the indices in $\bm{s}$ that are equal to $x_i$). The union of these sets, $ I \bigcup_{i=0}^{m-1} I_{x_i} $, covers the indexing set $(0, 1, \dots, n-1)$.

\begin{lemma}
Graphs $\mathcal{G}_k$ are regular, have degree,

\begin{equation}\label{eq:degree}
    \text{deg}(\mathcal{G}_k) = \sum_{i > j}^{m-1}|I_{x_i}||I_{x_j}|,
\end{equation}
and are vertex-transitive.
\end{lemma}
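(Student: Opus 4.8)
The plan is to count the neighbours of an arbitrary vertex by identifying them with transpositions of distinct-valued elements, read off the degree formula, deduce regularity from the fact that this count depends only on the shared multiplicities of $\mathcal{P}_k$, and finally establish vertex-transitivity through the action of the symmetric group on positions.

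First I would fix a reference vertex $\bm{s} \in S_k = \text{Perm}(\mathcal{P}_k)$ and characterise its neighbours. By the construction of $\mathcal{G}_k$, a vertex $\bm{s}^\prime$ is adjacent to $\bm{s}$ exactly when $H(\bm{s}, \bm{s}^\prime) = d_{\text{min}} = 2$. Since both $\bm{s}$ and $\bm{s}^\prime$ are permutations of the same multiset, if they differ in precisely two positions $p, q$ then the values carried there must be exchanged, so $s^\prime_p = s_q$ and $s^\prime_q = s_p$ with $s_p \neq s_q$. This establishes a bijection between the neighbours of $\bm{s}$ and the unordered pairs of positions $\{p,q\}$ holding distinct symbols: each such transposition yields a distinct neighbour (the differing positions are exactly $p$ and $q$), and conversely every neighbour is produced by exactly one transposition. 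Transpositions of equal-valued positions are excluded, as they return $\bm{s}$ itself and so contribute no edge.

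Counting these pairs gives the degree. Grouping the admissible position pairs according to the symbols they carry, the number of ways to select one position holding $x_i$ and one holding $x_j$ with $i \neq j$ is $|I_{x_i}||I_{x_j}|$, so summing over unordered symbol pairs yields $\text{deg}(\bm{s}) = \sum_{i>j}^{m-1}|I_{x_i}||I_{x_j}|$. Because every vertex of $S_k$ is a permutation of the same multiset $\mathcal{P}_k$, the multiplicities $|I_{x_i}| = P_{k,x_i}$ are identical at every vertex; hence the degree is independent of the choice of $\bm{s}$ and $\mathcal{G}_k$ is regular.

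For vertex-transitivity I would use the natural action of the symmetric group $S_n$ on the $n$ positions, sending $\bm{s}$ to its reordering under $\sigma$. Such a permutation maps $S_k$ bijectively onto itself and preserves Hamming distance, so it is a graph automorphism. Given any two vertices $\bm{s}, \bm{s}^\prime$, the equality of their symbol multiplicities lets me match the positions of each symbol $x_i$ in $\bm{s}$ to those in $\bm{s}^\prime$ bijectively; assembling these matchings over all $x_i$ produces a $\sigma \in S_n$ with $\sigma \cdot \bm{s} = \bm{s}^\prime$, which is the required automorphism. I expect the main obstacle to be the neighbour-counting bijection of the second step: the argument must simultaneously rule out double-counting (distinct transpositions collapsing to a single neighbour) and omissions (neighbours not arising from any single transposition), and this is precisely where the constraint $H(\bm{s},\bm{s}^\prime)=2$ together with the fixed multiset forces the exchanged-values structure that makes the correspondence exact.
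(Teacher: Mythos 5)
Your proposal is correct and follows essentially the same route as the paper's proof: neighbours are identified with transpositions of positions holding distinct symbols, the disjointness of the $I_{x_i}$ gives the count $\sum_{i>j}|I_{x_i}||I_{x_j}|$, regularity follows from the permutation-invariance of the multiplicities, and vertex-transitivity comes from the permutation symmetry of the positions. Your treatment is somewhat more explicit than the paper's — in particular you spell out the neighbour-counting bijection and construct the $S_n$ automorphism explicitly where the paper only asserts that "vertex-transitivity follows from the symmetry in the transpositions" — but the underlying argument is the same.
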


\begin{proof}
     As sets $ I_{x_i} $ are disjoint, $ I_{x_i} \times I_{x_j} $, where $i \neq j$, yields $ |I_{x_i}| |I_{x_j}| $ unique index pairs $ (i, j) $, each corresponding to a valid transposition $(i \, j)$ of $s_i, s_j \in \bm{s}$. Noting that $(i \, j) \equiv (j \, i)$ the degree of $\mathcal{G}_k$ is then given by summation over $ |I_{x_i}| |I_{x_j}| $ for $ i > j $. As $|I_{x_i}|$ is invariant under permutation of $\bm{s}$, this result holds for all $\bm{s} \in S_k$, so $\mathcal{G}_k$ is regular. Vertex-transitivity follows from the symmetry in the transpositions $(i \, j)$ that define the edges of $\mathcal{G}_k$.
\end{proof}

\begin{lemma}
The degree of $\mathcal{G}_k$ is upper bound by,
    \begin{equation}
    \label{eq:deg_bound}
        \text{deg}(\mathcal{G}_k) \leq \left\lfloor \frac{n^2}{M} \right\rfloor,
    \end{equation}
where $M = \frac{1}{2}m(m-1)$ is the number of unique unordered pairs of non-equivalent $I_{x_i}$.
\end{lemma}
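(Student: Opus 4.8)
The plan is to reduce the statement to a constrained optimisation over the multiplicities of the generating multiset. By the preceding lemma, $\text{deg}(\mathcal{G}_k) = \sum_{i>j}^{m-1} |I_{x_i}||I_{x_j}|$, so writing $a_i = |I_{x_i}|$ I would first record the structural fact that the index sets $I_{x_i}$ partition $(0, 1, \dots, n-1)$, and hence that $\sum_{i=0}^{m-1} a_i = n$ with each $a_i$ a non-negative integer. Since the degree depends on $\mathcal{G}_k$ only through these multiplicities, the claimed bound becomes a purely combinatorial statement: maximise $\sum_{i>j} a_i a_j$ over all weak compositions of $n$ into $m$ parts, and show that the maximum does not exceed $\lfloor n^2/M \rfloor$.

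The second step is to linearise the objective with the identity $\left(\sum_i a_i\right)^2 = \sum_i a_i^2 + 2\sum_{i>j} a_i a_j$, giving $\sum_{i>j} a_i a_j = \tfrac{1}{2}\!\left(n^2 - \sum_i a_i^2\right)$. Maximising the sum of cross products is therefore equivalent to minimising $\sum_i a_i^2$ subject to $\sum_i a_i = n$. By convexity of $x \mapsto x^2$ (equivalently, Cauchy--Schwarz), this sum of squares is minimised at the balanced configuration $a_i = n/m$, where $\sum_i a_i^2 = n^2/m$. Substituting back yields the continuous optimum $\tfrac{1}{2}\!\left(n^2 - n^2/m\right) = \tfrac{(m-1)n^2}{2m}$, which I would then recast in terms of $M = \tfrac{1}{2}m(m-1)$ and pass to integers to reach the stated $\lfloor n^2/M \rfloor$.

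The main obstacle I anticipate lies entirely in this final passage from the real-valued optimum to the integer bound. Convexity cleanly identifies the continuous maximiser $a_i = n/m$, but when $m \nmid n$ this point is infeasible, so I would verify by a short exchange (rearrangement) argument that moving a unit of multiplicity from a larger part to a smaller one never decreases $\sum_{i>j} a_i a_j$, establishing that the integer maximum is attained at the most nearly equal composition and that its value is bounded by $\lfloor n^2/M \rfloor$. I would also check carefully the algebraic identification of $\tfrac{(m-1)n^2}{2m}$ with $n^2/M$ and the role of the floor; this is exact and tight for the alphabet size $m=3$ relevant to the portfolio rebalancing application, where the balanced partition is realisable and the floor captures the integer maximum precisely.
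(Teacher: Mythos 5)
Your reduction to maximising $\sum_{i>j} a_i a_j$ over weak compositions $\sum_i a_i = n$, and the linearisation $\sum_{i>j} a_i a_j = \tfrac12\bigl(n^2 - \sum_i a_i^2\bigr)$ followed by convexity, is correct and in fact more rigorous than the paper's own argument, which invokes AM--GM (an inequality pointing in the wrong direction for an upper bound on the sum) and then asserts the maximiser is $|I_{x_i}| = n/M$ --- a value inconsistent with $\sum_i |I_{x_i}| = n$ unless $m=3$. The genuine problem is the final step you flagged but deferred: the ``algebraic identification'' of your continuous optimum $\tfrac{(m-1)n^2}{2m}$ with $n^2/M$ cannot be carried out in general. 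Since $M = \tfrac12 m(m-1)$, one has $\tfrac{(m-1)n^2}{2m} = \tfrac{(m-1)^2}{4}\cdot\tfrac{n^2}{M}$, so the true maximum equals $n^2/M$ only when $m=3$, lies below it for $m=2$, and strictly exceeds it for every $m\ge 4$. The lemma as stated is therefore false for $m \ge 4$: take $m=4$, $n=4$ and the balanced multiplicities $\bm{P}_k=(1,1,1,1)$, for which $\text{deg}(\mathcal{G}_k) = \binom{4}{2} = 6$ while $\lfloor n^2/M\rfloor = \lfloor 16/6\rfloor = 2$. No exchange argument or handling of the floor can bridge this gap; your instinct to ``check carefully'' that identification was exactly right, and the check fails.

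The correct general bound, which your own calculation already delivers, is $\text{deg}(\mathcal{G}_k) \le \tfrac{(m-1)n^2}{2m} = M\left(\tfrac{n}{m}\right)^2 < \tfrac{n^2}{2}$, tight when $m \mid n$ and the multiset is balanced, with your rearrangement argument handling the integer case when $m \nmid n$. This preserves everything the paper actually needs downstream --- the degree remains $\mathcal{O}(n^2)$, so the sparse-Hamiltonian simulation complexity of the QWOA-CS submixer is unaffected --- and it coincides with the stated $\lfloor n^2/M\rfloor$ in the $m=3$ portfolio-rebalancing instances the paper simulates. So you should complete your proof by stopping at $\tfrac{(m-1)n^2}{2m}$ and noting that the statement's bound holds only for $m\le 3$ and should otherwise be corrected, rather than attempting to derive it as written.
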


\begin{proof}
By the inequality of arithmetic and geometric means, we know that the arithmetic mean of a set of non-negative numbers is greater than or equal to the geometric mean of the set~\cite{earl_2021}. With respect to \cref{eq:degree}, the inequality takes the form,

\begin{equation}
    \frac{1}{M} \sum_{i>j}^{m-1} |I_{x_i}||I_{x_j}| \geq \left( \prod_{i>j}^{m-1} |I_{x_i}||I_{x_j}| \right)^{\frac{1}{M}},
\end{equation}
for which equality occurs when all $|I_{x_i}||I_{x_j}|$ are equal. As the $I_{x_i}$ collectively index $\bm{s}$, it is always the case that $n = \sum_{i=0}^{m-1}|I_{x_i}|$, meaning that \cref{eq:degree} is maximised when $|I_{x_i}| = |I_{x_j}| = \frac{n}{M}$.
\end{proof}

\begin{lemma}
The diameter of $\mathcal{G}_k$ scales linearly in $n$ and is determined by $n$ and the maximum of value in $\bm{P}_k$,
\begin{equation}\label{eq:diameter}
    \text{diam}(\mathcal{G}_k)=\min(n - \bm{P}_k).
\end{equation}
\end{lemma}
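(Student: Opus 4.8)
The plan is to recognise that the graph distance in $\mathcal{G}_k$ counts transpositions. Since edges of $\mathcal{G}_k$ join permutations of $\mathcal{P}_k$ that differ by one transposition of two distinct symbols (consistent with $d_{\text{min}}=2$, as such a swap changes exactly two positions), $\text{dist}(\bm{s},\bm{s}')$ equals the minimum number $\tau(\bm{s},\bm{s}')$ of distinct-symbol transpositions carrying $\bm{s}$ to $\bm{s}'$, and $\text{diam}(\mathcal{G}_k)=\max_{\bm{s},\bm{s}'}\tau(\bm{s},\bm{s}')$. Writing $p_{\max}=\max_j P_{k,x_j}$ so that $\min(n-\bm{P}_k)=n-p_{\max}$, and fixing an alphabet symbol $x^\ast$ of maximal multiplicity, I would prove the claim in two halves: an upper bound $\tau(\bm{s},\bm{s}')\le n-p_{\max}$ valid for every pair, and a construction realising equality.

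For the upper bound I would introduce the \emph{displacement digraph} $D$ on the alphabet, adding one directed edge $s_i\to s'_i$ for each position $i$ with $s_i\ne s'_i$. Since $\bm{s}$ and $\bm{s}'$ are anagrams, $D$ is balanced and hence decomposes into edge-disjoint directed cycles; it is standard that the optimal sorting uses the decomposition with the greatest number of cycles, giving
\[\tau(\bm{s},\bm{s}')=H(\bm{s},\bm{s}')-c^\ast(D),\]
where $c^\ast(D)$ is the maximum number of cycles over all edge-disjoint decompositions of $D$. The key inequality is $c^\ast(D)\ge \deg^+(x^\ast)$: in any decomposition into simple cycles every out-edge at $x^\ast$ is consumed by a distinct cycle through $x^\ast$, so at least $\deg^+(x^\ast)$ cycles appear. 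Letting $g^\ast$ count the positions fixed at $x^\ast$ and $g\ge g^\ast$ count all fixed positions, we have $\deg^+(x^\ast)=p_{\max}-g^\ast$ and $H=n-g$, whence
\[\tau(\bm{s},\bm{s}')=(n-g)-c^\ast(D)\le (n-g)-(p_{\max}-g^\ast)=n-p_{\max}-(g-g^\ast)\le n-p_{\max}.\]

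For the lower bound I would exhibit a pair attaining equality, which by the chain above forces $g=g^\ast$ and $c^\ast(D)=\deg^+(x^\ast)$. The idea is to fix exactly the $\max(0,\,2p_{\max}-n)$ copies of $x^\ast$ that cannot be displaced and to derange everything else so that $D$ consists of precisely $\min(p_{\max},\,n-p_{\max})$ edge-disjoint cycles, each passing through $x^\ast$, with the subdigraph induced on the remaining symbols \emph{acyclic}. Acyclicity off $x^\ast$ guarantees that no decomposition can produce a cycle avoiding $x^\ast$, so $c^\ast(D)=\deg^+(x^\ast)$ and $\tau=H-c^\ast(D)=n-p_{\max}$; linearity in $n$ then follows immediately from the explicit formula. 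I expect the construction to be the main obstacle: ensuring the residual digraph on non-$x^\ast$ symbols is acyclic for an \emph{arbitrary} admissible multiplicity vector requires a careful, slightly case-dependent placement (especially in the regime $p_{\max}>n/2$, where many copies of $x^\ast$ must remain fixed and the available non-$x^\ast$ symbols are scarce), whereas the reframing and the upper bound are routine.
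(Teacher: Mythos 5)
Your route is genuinely different from the paper's. The paper argues by a greedy traversal: starting from $\bm{s}$ it tracks, for each symbol, the set of indices that symbol has previously occupied, asserts that a transposition increases the distance from $\bm{s}$ only when both affected index sets grow, and counts the number of such distance-increasing moves, with the majority symbol chosen as the partner, to arrive at $n-\max\bm{P}_k$. That argument is short but informal; it does not cleanly separate the two halves of the claim (that \emph{every} pair is within $n-p_{\max}$ transpositions, and that \emph{some} pair requires that many). Your reduction to the multiset Cayley distance, $\tau=H-c^*(D)$ with $D$ the balanced displacement digraph, plus the observation that every simple-cycle decomposition contains at least $\deg^+(x^*)$ cycles through a majority symbol $x^*$, yields the upper bound $\tau\le n-p_{\max}-(g-g^*)\le n-p_{\max}$ for all pairs in a fully rigorous way. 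This is the harder half of the lemma, and you have it completely; it is tighter than what the paper offers. Do state (or cite) the identity $\tau=H-c^*(D)$ explicitly, since both directions of it are load-bearing and the paper supplies neither.

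Two remarks on the extremal pair. First, you have the easy and hard regimes reversed. When $p_{\max}\ge n/2$ the construction is immediate: send each of the $r=n-p_{\max}$ non-$x^*$ tokens to a position currently holding $x^*$ and vice versa, fixing the surplus $2p_{\max}-n$ copies of $x^*$. Then every edge of $D$ is incident to $x^*$, so every simple cycle is a $2$-cycle through $x^*$, $c^*=\deg^+(x^*)=r$, and $\tau=2r-r=n-p_{\max}$; the residual digraph on non-$x^*$ symbols is empty, so no acyclicity argument is needed at all. The delicate regime is instead $p_{\max}<n/2$, where $r>p_{\max}$ forces $r-p_{\max}$ displacement edges between non-$x^*$ symbols; there a one-directional (topologically ordered) placement of those leftover edges keeps the residual digraph acyclic, and the degree bookkeeping closes because $\deg^-(x^*)=p_{\max}\ge q_y$ for every other symbol $y$. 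Second, as written your lower bound is a sketch with this construction left open; once it is filled in as above the proof is complete, and linearity in $n$ for fixed $m$ follows from $n-p_{\max}\le n(1-1/m)$.
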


\begin{proof}
Consider a sequence of $ (i \, j) $ permutations from initial configuration $ \bm{s} \in S_k $. Let $\bar{I}_{x_i}$ be sets containing the indices previously assigned to value $x_i$ in $\bm{s}$ and its permutations $\bm{s}^\prime$. Mapping the permutation sequence to a traversal on $\mathcal{G}_k$, we maximise the inter-node distance between $\bm{s}$ and the currently occupied vertex $\bm{s}^\prime$ by choosing transitions that minimise overlap between $\bm{s}^\prime$ and all of the previously traversed vertices. This corresponds to choosing $ (i \, j) $ such that $\bar{I}_{x_i}$ and $\bar{I}_{x_j}$ increase in size. For this to occur, both sets must contribute a unique index to the other. Hence, the total number of $(i \, j)$ permutations corresponding to an increase in internode distance between $\bm{s}$ and $\bm{s}^\prime$ is $\sum_{i=0}^{m-1}\min{\left( |\bar{I}_{x_i}|, |\bar{I}_{x_j}| \right)}$. \cref{eq:diameter} is obtained as the result of this sum at the starting permutation $\bm{s}$ with $j$ chosen so $|\bar{I}_{x_i}| \leq |\bar{I}_{x_j}|$, noting that $|I_{x_i}| = \bm{P}_{k, i}$ at the start of the permutation sequence.
\end{proof}

\subsection{CTQWs on Constrained Permutation Graphs}
\label{sec:perm_ctqw}

Sparse Hamiltonian simulation offers an efficient method for the implementation of the QWOA-CS submixer $\hat{U}_{\mathcal{G}}$ (see \cref{sec:sparse_hamiltonian,sec:qwoa-cs}). As $\mathcal{G}_k$ are undirected and unweighted, $\text{ELEMENT}$ is a trivial constant-time operation. To compute the $l$-th non-zero column of $\mathcal{G}_k$ in row $\bm{s}$ we define,

\begin{equation}
    \begin{split}
    \text{COLUMN}_k \ket{\bm{s}}\ket{l}\ket{I_{x_0}}\ket{I_{x_1}}\dots\ket{I_{x_{m-1}}}\ket{0} \\
    \rightarrow \ket{\bm{s}}\ket{l}\ket{I_{x_0}}\ket{I_{x_1}}\dots\ket{I_{x_{m-1}}}\ket{\bm{s}_l^\prime}
    \end{split}
\end{equation}
where $\ket{I_{x_j}}$ are of size $\mathcal{O}(\lceil \log n \rceil)$ and hold bit string representations of $I_{x_i}$ (see \cref{sec:perm_graph}) in which a one corresponds to index $j \in I_{x_i}$.

Since we know the size of $I_{x_i}$ and the degree $\text{deg}(\mathcal{G}_k)$ from $\bm{P}_k$ and \cref{eq:degree} respectively, we can compute the $l$-th non-trivial $(i \, j)$ permutation with $\mathcal{O}(n^2)$ time complexity, as shown in \cref{alg:swaps}. The state $\bm{s}^\prime_{l}$ is computed by performing the corresponding swap operation on a register containing a copy of $\bm{s}$. Thus, a CTQW over $\mathcal{G}_k$ can be simulated with
    \[\mathcal{O}\left( {n}^4  t + \frac{\log{(1/\epsilon)}}{\log{\log{(1/\epsilon)}}} \right)\]
gate complexity.

\begin{algorithm2e}
\caption[Generation of neighbours in the constrained permutation graph]{The $i$-th non-trivial permutation of $\bm{s} \in \mathcal{S}^\prime$.}\label{alg:swaps}
    $y \gets 0$\;
    \ForEach{\{$I_{x_k}, I_{x_j} \}$}{
    $y \gets y + |I_{x_k}||I_{x_j}|$\;
    \If{$ i < y$}{
        $l \gets i \, \, \mathrm{mod} \, \, |I_{x_k}|$\;
        $m \gets i \, \, \mathrm{mod} \, \, |I_{x_j}|$\;
        \KwRet{$(I_{{x_j}, \, l} \, \, I_{{x_k}, \, m})$}
    }
    }
\end{algorithm2e}

\subsection{CTQWs on Non-Homogeneous Partite Graphs}
\label{sec:k-partite-graph}

With respect to the indexed solution space $\text{id}(\mathcal{S}^\prime)$, let $\mathcal{L}_{|\mathcal{S}^\prime|}$ be the graph Laplacian of a complete graph and $\mathcal{L}_{|S_k|}$ be the graph Laplacian of the union of $K$ complete graphs connecting solutions within the same $S_k$. The graph Laplacian for a non-homogeneous $K$-partite graph with partite sets $S_k$ can then be expressed as,
\begin{equation}
\mathcal{L}_{|S_0|, |S_1|, \dots, |S_K|} = \mathcal{L}_{|\mathcal{S}^\prime|} - \mathcal{L}_{|S_k|}.
\end{equation}

\begin{lemma}
    For any partitioning of $\text{id}{(\mathcal{S}^\prime)}$ into sequentially indexed disjoint sets, $\left[\mathcal{L}_{|\mathcal{S}^\prime|}, \mathcal{L}_{|S_k|}\right]= 0$.
\end{lemma}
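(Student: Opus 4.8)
The plan is to exploit the especially simple closed form of the complete-graph Laplacian. Writing $J = \mathbf{1}\mathbf{1}^{\top}$ for the all-ones matrix on $\text{id}(\mathcal{S}^\prime)$, with $\mathbf{1}$ the all-ones vector, the Laplacian of the complete graph is $\mathcal{L}_{|\mathcal{S}^\prime|} = |\mathcal{S}^\prime|\,I - J$. (This is consistent with $\mathcal{L} = D - G$: for $K_N$ one has $D = (N-1)I$ and $G = J - I$, so $\mathcal{L} = NI - J$.) First I would substitute this form into the commutator. Since the scalar multiple of the identity commutes with everything, the $|\mathcal{S}^\prime|\,I$ term drops out and the claim reduces to showing the single relation $[J, \mathcal{L}_{|S_k|}] = 0$.

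The key observation is then that $\mathcal{L}_{|S_k|}$, being the Laplacian of the disjoint union of the $K$ complete graphs on the partite sets $S_k$, is itself a genuine graph Laplacian, hence symmetric and possessing zero row sums. Equivalently, it annihilates the constant vector, $\mathcal{L}_{|S_k|}\mathbf{1} = 0$. Because $J$ factors through $\mathbf{1}$, both one-sided products collapse: $\mathcal{L}_{|S_k|}J = (\mathcal{L}_{|S_k|}\mathbf{1})\mathbf{1}^{\top} = 0$, and likewise $J\mathcal{L}_{|S_k|} = \mathbf{1}(\mathbf{1}^{\top}\mathcal{L}_{|S_k|}) = \mathbf{1}(\mathcal{L}_{|S_k|}\mathbf{1})^{\top} = 0$, where the final step uses the symmetry of $\mathcal{L}_{|S_k|}$. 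Both products vanishing gives $[J,\mathcal{L}_{|S_k|}] = 0$ and therefore $[\mathcal{L}_{|\mathcal{S}^\prime|}, \mathcal{L}_{|S_k|}] = 0$.

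The steps in order are thus: (i) record the complete-graph Laplacian as $|\mathcal{S}^\prime|I - J$ and verify it against $\mathcal{L} = D - G$; (ii) expand the commutator and discard the identity term, reducing to $[J,\mathcal{L}_{|S_k|}]$; (iii) invoke the zero-row-sum and symmetry properties of $\mathcal{L}_{|S_k|}$ to kill both $J\mathcal{L}_{|S_k|}$ and $\mathcal{L}_{|S_k|}J$. I would note that the hypothesis of \emph{sequentially indexed disjoint sets} is used only to guarantee that $\mathcal{L}_{|S_k|}$ is block-diagonal and hence a well-defined graph Laplacian on the whole indexed space; the ordering itself plays no role in the algebra.

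There is no genuinely hard obstacle here: the entire argument turns on recognising that the complete-graph Laplacian differs from $-J$ only by a multiple of the identity, and that $J = \mathbf{1}\mathbf{1}^{\top}$ is built from the constant vector that every Laplacian sends to zero. The only point that warrants a line of care is confirming $\mathcal{L}_{|S_k|}\mathbf{1} = 0$ for the union-of-complete-graphs Laplacian, which holds blockwise since each complete-graph block independently has zero row sums, so the factorisation argument applies uniformly across the partition.
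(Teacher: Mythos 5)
Your proof is correct, but it takes a genuinely different route from the paper's. The paper argues entrywise: it writes out $\bigl(\mathcal{L}_{|\mathcal{S}^\prime|}\cdot\mathcal{L}_{|S_k|}\bigr)_{ij}$ as a sum over the intermediate index, restricts the summation range using the block support of $\mathcal{L}_{|S_k|}$, and then invokes symmetry of both matrices to exchange the factors. You instead exploit the rank-one structure of the complete-graph Laplacian: writing $\mathcal{L}_{|\mathcal{S}^\prime|} = |\mathcal{S}^\prime| I - \mathbf{1}\mathbf{1}^{\top}$ (any offset by a multiple of $I$ arising from a different Laplacian normalisation is immaterial here), the identity term drops from the commutator and the zero-row-sum property $\mathcal{L}_{|S_k|}\mathbf{1}=0$, together with symmetry, kills both one-sided products with $J=\mathbf{1}\mathbf{1}^{\top}$. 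This buys you several things the paper's index manipulation does not: it proves the strictly stronger fact $J\mathcal{L}_{|S_k|}=\mathcal{L}_{|S_k|}J=0$; it works verbatim for \emph{any} graph Laplacian in place of the union-of-complete-graphs Laplacian, so the block structure and sequential indexing are only needed to make $\mathcal{L}_{|S_k|}$ a well-defined Laplacian on the indexed space, as you note; and it isolates exactly why the Laplacian (rather than the adjacency matrix) is required --- the adjacency matrix of the block union does not annihilate $\mathbf{1}$ when the blocks have unequal sizes, which is precisely the failure mode the paper discusses informally after its proof. Your argument is also tighter than the paper's, whose restriction of the summation range and subsequent factor swap are stated rather tersely.
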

\begin{proof}
\begin{align}
\label{eq:commute}
    \left(\mathcal{L}_{|\mathcal{S}^\prime|} \cdot \mathcal{L}_{|S_k|}\right)_{ij} &= \sum_{k=0}^{|\mathcal{S}^\prime|-1} \left(\mathcal{L}_{|\mathcal{S}^\prime|}\right)_{ik} \left(\mathcal{L}_{|S_k|}\right)_{kj} \nonumber \\
    = \sum_{k=0}^{f(i+1)-1} & \left(\mathcal{L}_{|\mathcal{S}^\prime|}\right)_{ik} \left(\mathcal{L}_{|S_k|}\right)_{kj} \nonumber \\
    = \sum_{k=0}^{f(i+1)-1} & \left(\mathcal{L}_{|S_k|}\right)_{ik} \left(\mathcal{L}_{|\mathcal{S}^\prime|}\right)_{kj}  = \left(\mathcal{L}_{|S_k|} \cdot \mathcal{L}_{|\mathcal{S}^\prime|} \right)_{ij}
\end{align}
where the second line holds as $ \left(\mathcal{L}_{|S_k|}\right)_{ik} $ is non-zero only when $ f(i) \leq k < f(i + 1) $, and the last line holds as $ \mathcal{L}_{|\mathcal{S}^\prime|}$ and $\mathcal{\mathcal{L}}_{|S_k|} $ are symmetric. 
\end{proof}

Equation (\ref{eq:commute}) underpins the requirement that the walk be defined by the graph Laplacian, instead of the adjacency matrix as if $\left(\mathcal{L}_{|S_k|}\right)_{ii} = 0$ the product of  $\mathcal{L}_{|S_k|} \cdot \mathcal{L}_{|\mathcal{S}^\prime|}$ has off-diagonal elements if the size of any of the diagonal blocks in $\mathcal{L}_{|S_k|}$ is not a factor of $|\mathcal{S}^\prime|$.

As $\mathcal{L}_{|\mathcal{S}^\prime|}$ is a circulant matrix, it is diagonalised by the discrete Fourier transform mod $ |\mathcal{S}^\prime| $, $\mathcal{F}_{|\mathcal{S}^\prime|}$, with eigenvalues, 
\begin{equation}
\Lambda_{\mathcal{L}_{|\mathcal{S}^\prime|}} = \left(1, |\mathcal{S}^\prime| + 1,  |\mathcal{S}^\prime| + 1, \ldots, |\mathcal{S}^\prime| +1 \right). 
\end{equation}
 Furthermore, as $ \mathcal{L}_{|S_k|}  $ is block circulant, it is diagonalised by the block diagonal operator $ \mathcal{F}_{|S_k|} $ in which the $k$-th block is the discrete Fourier transform  mod $ |S_k| $, with eigenvalues,
\begin{equation}
\Lambda_{\mathcal{L}_{|S_k|}} = \left( \Lambda_{\mathcal{L}_{|S_0|}}, \Lambda_{\mathcal{L}_{|S_1|}}, \ldots, \Lambda_{\mathcal{L}_{|S_k|}} \right),   
\end{equation}
where $ \Lambda_{|S_k|} = \left(1, |S_0| + 1,  |S_1| + 1, \ldots, |S_{K-1}| + 1 \right)$. 

\begin{theorem}
 A CTQW over the graph Laplacian of an inhomogeneous $K$-partite graph is simulated exactly by,
\begin{equation}\label{eq:evolution}
\begin{split}
    & \hat{U}_{\mathcal{L}_{|S_0|, |S_1|, \dots, |S_K|}}(t) = \\
    & \mathcal{F}_{|\mathcal{S}^\prime|} \exp\left(-\text{i} t \hat{\Lambda}_{\mathcal{L}_{|\mathcal{S}^\prime|}}\right) \mathcal{F}_{|\mathcal{S}^\prime|}^\dagger \mathcal{F}_{|S_k|} \exp\left(\text{i} t \hat{\Lambda}_{\mathcal{L}_{|S_k|}}\right) \mathcal{F}_{|S_k|}^\dagger.
\end{split}
\end{equation}
\end{theorem}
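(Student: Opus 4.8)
The plan is to reduce the exponential generated by the $K$-partite Laplacian to a product of two separately diagonalisable factors, exploiting the additive decomposition $\mathcal{L}_{|S_0|, |S_1|, \dots, |S_K|} = \mathcal{L}_{|\mathcal{S}^\prime|} - \mathcal{L}_{|S_k|}$ together with the commutation relation already established in \cref{eq:commute}. By the definition of a CTQW on a graph Laplacian (\cref{eq:ctqw,eq:laplacian}), the evolution operator is $\exp\left(-\text{i} t \mathcal{L}_{|S_0|, |S_1|, \dots, |S_K|}\right) = \exp\left(-\text{i} t \left(\mathcal{L}_{|\mathcal{S}^\prime|} - \mathcal{L}_{|S_k|}\right)\right)$, so the task is to show that this equals the stated product of Fourier-conjugated diagonal exponentials.

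First I would invoke the preceding lemma, $\left[\mathcal{L}_{|\mathcal{S}^\prime|}, \mathcal{L}_{|S_k|}\right] = 0$. Since $-\text{i} t \mathcal{L}_{|\mathcal{S}^\prime|}$ and $+\text{i} t \mathcal{L}_{|S_k|}$ are scalar multiples of commuting matrices they also commute, so the exponential of their sum factorises cleanly, giving $\exp\left(-\text{i} t \mathcal{L}_{|\mathcal{S}^\prime|}\right)\exp\left(+\text{i} t \mathcal{L}_{|S_k|}\right)$, with the sign flip on the second factor inherited from the minus sign in the $K$-partite decomposition. I would then diagonalise each factor using the spectral facts recorded above the theorem: as $\mathcal{L}_{|\mathcal{S}^\prime|}$ is circulant it is diagonalised by $\mathcal{F}_{|\mathcal{S}^\prime|}$, yielding $\exp\left(-\text{i} t \mathcal{L}_{|\mathcal{S}^\prime|}\right) = \mathcal{F}_{|\mathcal{S}^\prime|}\exp\left(-\text{i} t \hat{\Lambda}_{\mathcal{L}_{|\mathcal{S}^\prime|}}\right)\mathcal{F}_{|\mathcal{S}^\prime|}^\dagger$, and as $\mathcal{L}_{|S_k|}$ is block-circulant it is diagonalised by the block-diagonal transform $\mathcal{F}_{|S_k|}$, yielding $\exp\left(+\text{i} t \mathcal{L}_{|S_k|}\right) = \mathcal{F}_{|S_k|}\exp\left(+\text{i} t \hat{\Lambda}_{\mathcal{L}_{|S_k|}}\right)\mathcal{F}_{|S_k|}^\dagger$. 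Substituting both expressions into the factored product reproduces \cref{eq:evolution} exactly.

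The only genuine obstacle is the justification of the operator-exponential factorisation, which fails for non-commuting generators; everything hinges on the commutation lemma, and once that is granted the remaining manipulations are routine functional calculus on the (block-)circulant diagonalisations already given. I would therefore be careful to state explicitly that the factorisation is \emph{exact} (carries no Trotter error) precisely because the two Laplacians commute, which is what licenses the word ``exactly'' in the theorem and distinguishes this construction from a generic product-formula simulation.
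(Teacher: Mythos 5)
Your proposal is correct and follows essentially the same route as the paper's proof: factorise $\exp\left(-\text{i}t\left(\mathcal{L}_{|\mathcal{S}^\prime|}-\mathcal{L}_{|S_k|}\right)\right)$ into a product of two exponentials using the commutation lemma, then diagonalise each factor via its circulant or block-circulant structure. The paper's own argument is just a terser statement of exactly these two steps.
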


\begin{proof}
    The factorisation, 
    \[\exp\left(-\text{i} t \hat{\mathcal{L}}_{|S_0|, |S_1|, \dots, |S_K|} \right) = \exp\left(-\text{i} t \hat{\mathcal{L}}_{|\mathcal{S}^\prime|})\exp(\text{i} t \hat{\mathcal{L}}_{|S_k|}\right),\]
    follows from the commutativity of $ \mathcal{L}_{|\mathcal{S}^\prime|} $ and $ \mathcal{L}_{|S_k|} $. The circulant and block-circulant structure of $ \mathcal{L}_{|\mathcal{S}^\prime|} $ and $ \mathcal{L}_{|S_k|} $ ensures that their time-evolutions are diagonalised by $\mathcal{F}_{|\mathcal{S}^\prime|}$ and $ \mathcal{F}_{|S_k|} $ respectively.
\end{proof}

The efficiency of this implementation follows from its use of the quantum Fourier transform, due to which $\hat{U}_{\mathcal{P}^\prime}$ has a gate complexity of $ O(n \log n) $, requiring $ O(n) $ ancilla qubits for computation of the eigenvalues. We note that \cref{eq:evolution} also holds for homogeneous $K$-partite graphs; however, in such instances, $\mathcal{F}_{|\mathcal{S}^\prime|}$ is a diagonalizing unitary for both $ \mathcal{L}_{|\mathcal{S}^\prime|} $ and $ \mathcal{L}_{|S_k|} $ so \cref{eq:evolution} may be reduced to $\mathcal{F}_{|\mathcal{S}|}\exp{\left[-\text{i} \left(\hat{\Lambda}_{\mathcal{L}_{|\mathcal{S}^\prime|}} - \hat{\Lambda}_{\mathcal{L}_{|S_k|}}\right)\right]}\mathcal{F}_{|\mathcal{S}|}^\dagger$.  

\section{Methods and Results}
\label{sec:methods_and_results}

 Here we apply the theoretical framework detailed in \cref{sec:theory,sec:figures_of_merit} to the analysis of QVA mixing unitaries, as well as a comparison of the generalised QMOA (\cref{sec:cart_qva}) and QWOA-CS (\cref{sec:perm_qva}) to pre-existing QVAs using the parallel machine scheduling and portfolio optimisation problems as benchmarking applications. For this, we consider QWOA-CS with and without its second submixer $\hat{U}_{\mathcal{L}}$, with the latter case denoted as QWOA-CS~(disjoint). In \cref{sec:CTQW-subshells,sec:convergence}, we present a numerical analysis of the structure and convergence behaviour of the graphs that define the mixing unitaries of the QAOA, QMOA, QWOA, QWOA-CS and QAOA, followed by a detailed analysis of the $(n, m)$ Hamming graphs in \cref{sec:conv_hamming}. In \cref{sec:applications}, we present numerical simulation results for the parallel machine scheduling and portfolio rebalancing problems. Finally, \cref{sec:hybrid-optimisation-results} presents a hybrid optimisation scheme for cases where a closed form solution for the graph subshell coefficients is known.

\begin{table*}[!ht]
\centering
\caption[Comparison of graph characteristics]{Comparison of graph characteristics (see \cref{sec:CTQW-subshells,sec:convergence}).}
\label{table:graph_characteristics_ordered_deg}
\resizebox{\linewidth}{!}{%
\begin{tabular}{@{}lccccccc@{}}
\toprule
Graph  Type & $|V|$ & $\text{deg}(G)$ & $\text{diam}(G)$ & $|\mathcal{N}_{d,k}|$ & $\text{Prob}^*$ & $\min\text{Prob}$ & $\text{HWHM} (\sigma^2)$ \\
\midrule
($7, 2$) Hamming & 128 & 7 & 7 & 8 & 1.00 & 0.038 & 0.21 \\
($3, 5$) Hamming & 125 & 12 & 3 & 4 & 0.91 & 0.037 & 0.23 \\
Constrained Permutation & 168 & 17 & 3 & 10 & 0.84 & 0.031 & 0.061 \\
Parity & 126 & 20 & 4 & 5 & 0.94 & 0.035 & 0.018 \\
Permutation & 128 & 28 & 4 & 5 & 0.62 & 0.040 & 0.021 \\
Complete & 128 & 127 & 1 & 2 & 0.069 & 0.0091 & 0.056 \\
\bottomrule
\end{tabular}
}
\end{table*}

\subsection{Mixing Unitary Graph Analysis}
\label{sec:CTQW-subshells}

We examined the structure of graphs that define the mixing unitaries of the QVAs considered in this work. As the mixing unitaries of the QWOA-CS~(disjoint), the QAOAz~(XY-parity), and the QAOAz~(XY-complete) involve a CTQW over several disjoint subgraphs, one subgraph was selected for each of these algorithms, which are referred to as the constrained permutation, parity, and permutation graphs. The graph size, degree, and diameter were obtained using the \texttt{IGraphM} package~\cite{horvat2023igraphm}, and are listed in \cref{table:graph_characteristics_ordered_deg}. The constrained permutation graph corresponds to a constrained COP with $\bm{P}_k=(1, 5, 2)$ (see \cref{sec:multiset_partitioning}). The corresponding graphs for the QWOA, QMOA, and QAOA are the complete ($7, 2$) Hamming and ($3, 5$) Hamming graphs. Graph subshell coefficients were determined by symbolic computation of \cref{eq:amp_dist} using the Mathematica \texttt{MatrixExp} function. Unique coefficients were identified and verified for occurrence in no more than one shell of the graph~\cite{Mathematica}. For all graphs aside from the constrained permutation graph, the number of subshells $|\mathcal{N}_{d,k}|$ is equal to $\text{diam}(G) + 1$, which is the expected result for distance-transitive graphs.

\subsection{Convergence Potential and Phase Discrepancy}
\label{sec:convergence}

The convergence potential $\text{Prob}^*$ (defined in \cref{eq:convergence_potential}) of the graphs described in \cref{sec:CTQW-subshells} are listed in \cref{table:graph_characteristics_ordered_deg}. These were calculated using optimal walk times $t^*$, which were obtained by numerical optimisation of \cref{eq:opt_t} using the \texttt{NMaximize} function with the \texttt{DifferentialEvolution} method in Mathematica~\cite{Mathematica}. This method was also used for all other numerical optimisation tasks in this and the following subsection.

\begin{figure*}[t!]
    \centering
    \includegraphics[width=\linewidth]{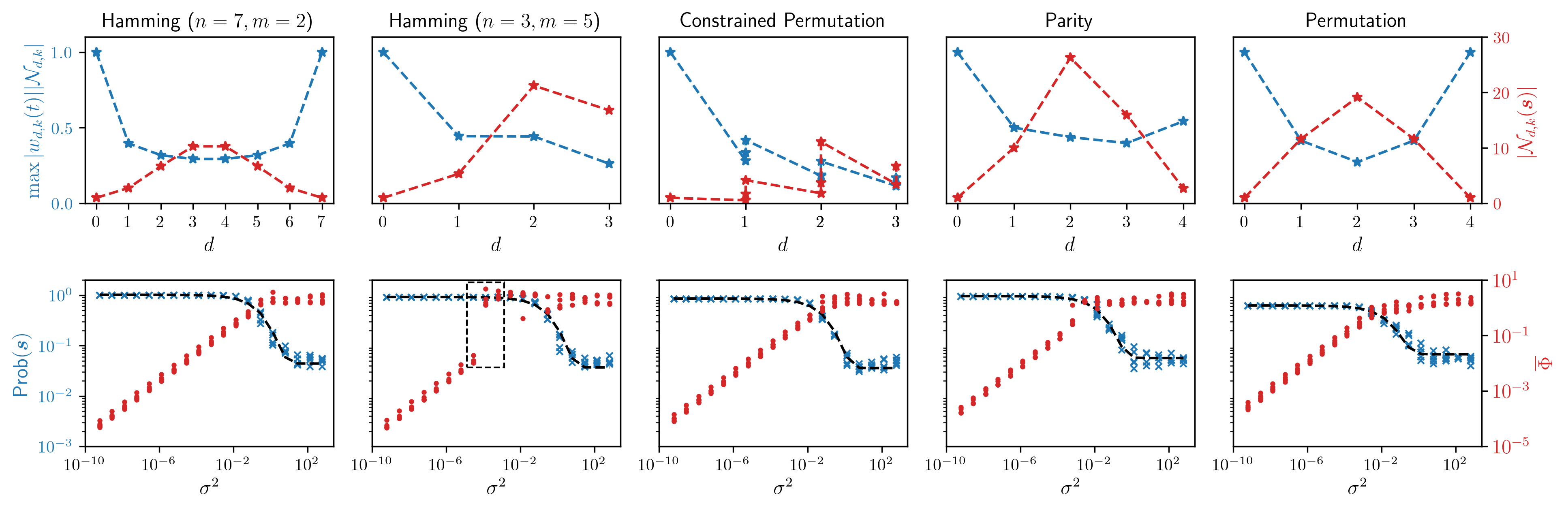}
    \caption[Magnitude of amplitude transfer and graph convergence]{As described in \cref{sec:convergence}, the top row shows the subshell amplitude contributions $|w_{d,k}(t)| |\mathcal{N}_{d,k}|$ to the convergence potential values given in \cref{table:graph_characteristics_ordered_deg} together with the subshell size $|\mathcal{N}_{d,k}|$ against inter-node distance $d$. The bottom row shows convergence to the target state $\text{Prob}(\bm{s})$ and phase discrepancy $\Phi$ with variance-adjusted phases $\bm{q}^*$ as function of the variance $\sigma^2$. Results within the dashed box in the lower $(n=3, m=5)$ Hamming graph plot are also shown in \cref{fig:transfer-and-discrepancy-hamming} (see \cref{sec:CTQW-subshells,sec:convergence}).}
    \label{fig:transfer-and-discrepancy}
\end{figure*}

\begin{figure}[ht!]
    \centering
    \includegraphics[width=6.19975cm]{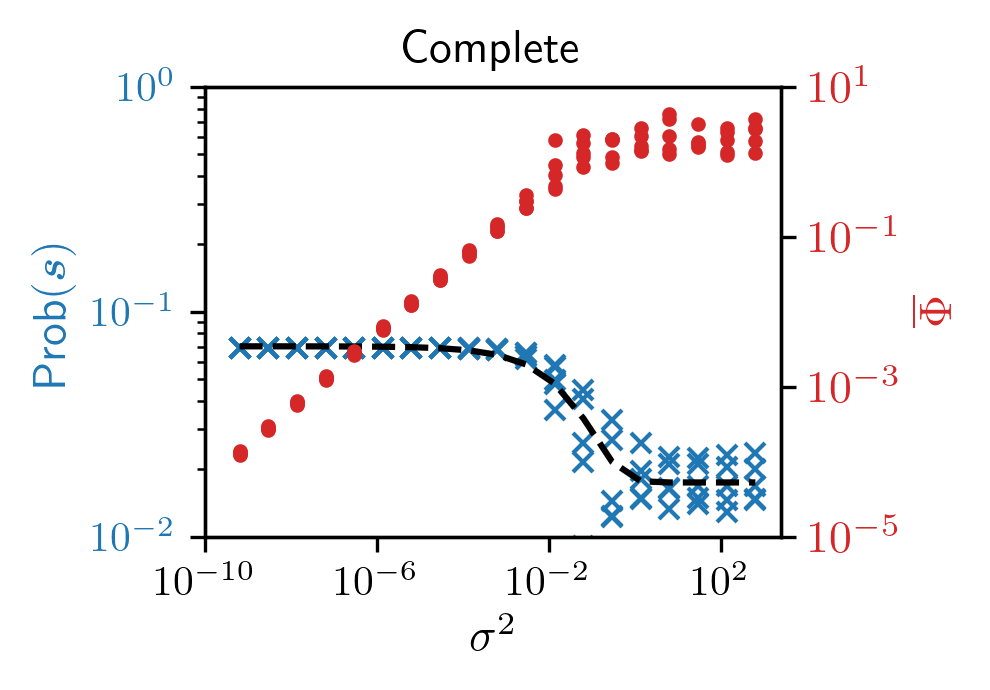}
    \caption[Complete graph convergence]{Convergence to the target state $\text{Prob}(\bm{s})$ and phase discrepancy $\Phi$ with increasing variance $\sigma^2$ in variance adjusted $q^*_{d,k}$ for the complete graph (see \cref{fig:transfer-and-discrepancy}).}
    \label{fig:transfer-and-discrepancy-complete}
\end{figure}

The first row of \cref{fig:transfer-and-discrepancy} shows the magnitude of amplitude contribution from the graph subshells $|w_{d,k}(t)| |\mathcal{N}_{d,k}|$ as a function of inter-node distance, together with the subshell sizes $|\mathcal{N}_{d,k}|$ for the graph test set, excluding the complete graph. The second row of \cref{fig:transfer-and-discrepancy} shows the maximum convergence obtained with the variance-adjusted phases $\bm{q}^*$ following the method described in \cref{sec:convergence_def} (see \cref{eq:variance-adjusted,eq:variance-adjusted-state}). For all graphs, maximum convergence was obtained at $\sigma^2 = 0$.

The convergence results as a function of $\sigma^2$ were fit to the exponential function $a + b\exp(-(\sigma^2 - c))$, from which we obtained an estimation of the variance at which the maximum possible convergence is half of $\text{Prob}^*$. This is reported as the half-width at half maximum ($\text{HWHM}(\sigma^2)$) in \cref{table:graph_characteristics_ordered_deg}. The minimum obtained convergence is listed in the same table under ``$\min\text{Prob}$''.

\begin{figure}[ht!]
    \centering
    \includegraphics[width = 6.19975cm]{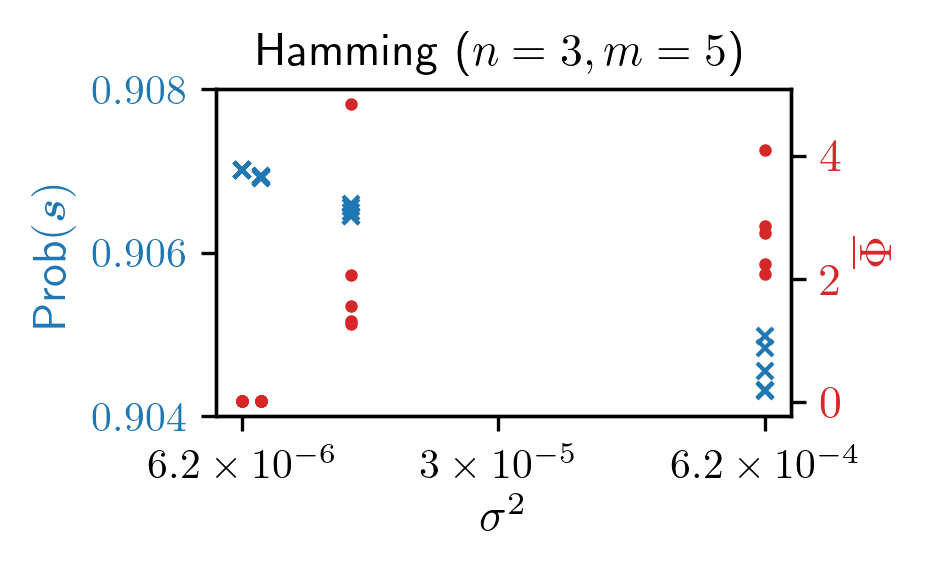}
    \caption[$(3,5)$ Hamming graph convergence]{Convergence to the target state $\text{Prob}(\bm{s})$ and phase discrepancy $\Phi$ with increasing variance $\sigma^2$ in variance adjusted $q^*_{d,k}$ for the $(3, 5)$ Hamming graph from $\sigma^2 = 6.2 \times 10^{-6} $ to $6.2 \times 10^{-4}$ (see \cref{fig:transfer-and-discrepancy}).}
    \label{fig:transfer-and-discrepancy-hamming}
\end{figure}

Convergence in the $(3, 5)$ Hamming graph remains relatively stable over the range $\sigma^2 = 10^{-4}$ to $\sigma^2 = 10^{-2}$, despite a sharp increase in phase discrepancy over this interval. As shown in \cref{fig:transfer-and-discrepancy-hamming}, $\text{Prob}(\bm{s})$ decreases by less than $0.05$ despite the phase discrepancy increasing by an order of magnitude. The convergence of the complete graph, shown in \cref{fig:transfer-and-discrepancy-complete}, is lower on average than that of the other graphs for all values of $\sigma^2$.

\subsection{Convergence Potential of $(n,m)$ Hamming Graphs}
\label{sec:conv_hamming}

Following the procedure of subshell coefficeint computation decribed in \cref{sec:CTQW-subshells}, the subshell coefficeints of the $(n, m)$ Hamming graph, here referred to as $G_{(n,m)}$, were found in closed-form as,
\begin{equation}
\label{eq:hamming-coefficeints}
\begin{aligned}
    w_{d,0}(t) & = \frac{(-1)^d}{m^n}e^{-\text{i} t n (m-1)} \times \\
    & (e^{\text{i} t m}-1)^d (1 + (m-1)e^{\text{i}t m})^{n-d}.
\end{aligned}
\end{equation}
Which, for $G_{(1,m)}$ and $G_{(n,2)}$ reproduces previously derived expressions for the coefficients of the complete and hypercube graphs~\cite{matwiejew_quantum_2023}.

Using \cref{eq:hamming-coefficeints} together with the method for identification of $t^*$ in \cref{sec:convergence}, the convergence potential of $G_{(n,m)}$, denoted $\text{Prob}^*_{(n,m)}$, was numerically solved for $n=2$ to $10$ with $m=2$ to $9$. These results are reported in \cref{fig:hamming-grovers} as:
\begin{equation}
\text{Amplification} =  |\mathcal{S}|\text{Prob}^*_{(n,m)},
\end{equation}
which is defined as the increase in probability amplitude relative to an equal superposition over $\mathcal{S}$. This metric provides an intuitive description of the performance advantage of a particular $G_{(n,m)}$ with increasing problem size. Ideally, amplification should scale according to $|\mathcal{S}^\prime|$.

\begin{figure}[ht!]
    \centering
    \includegraphics[width=8.85679cm]{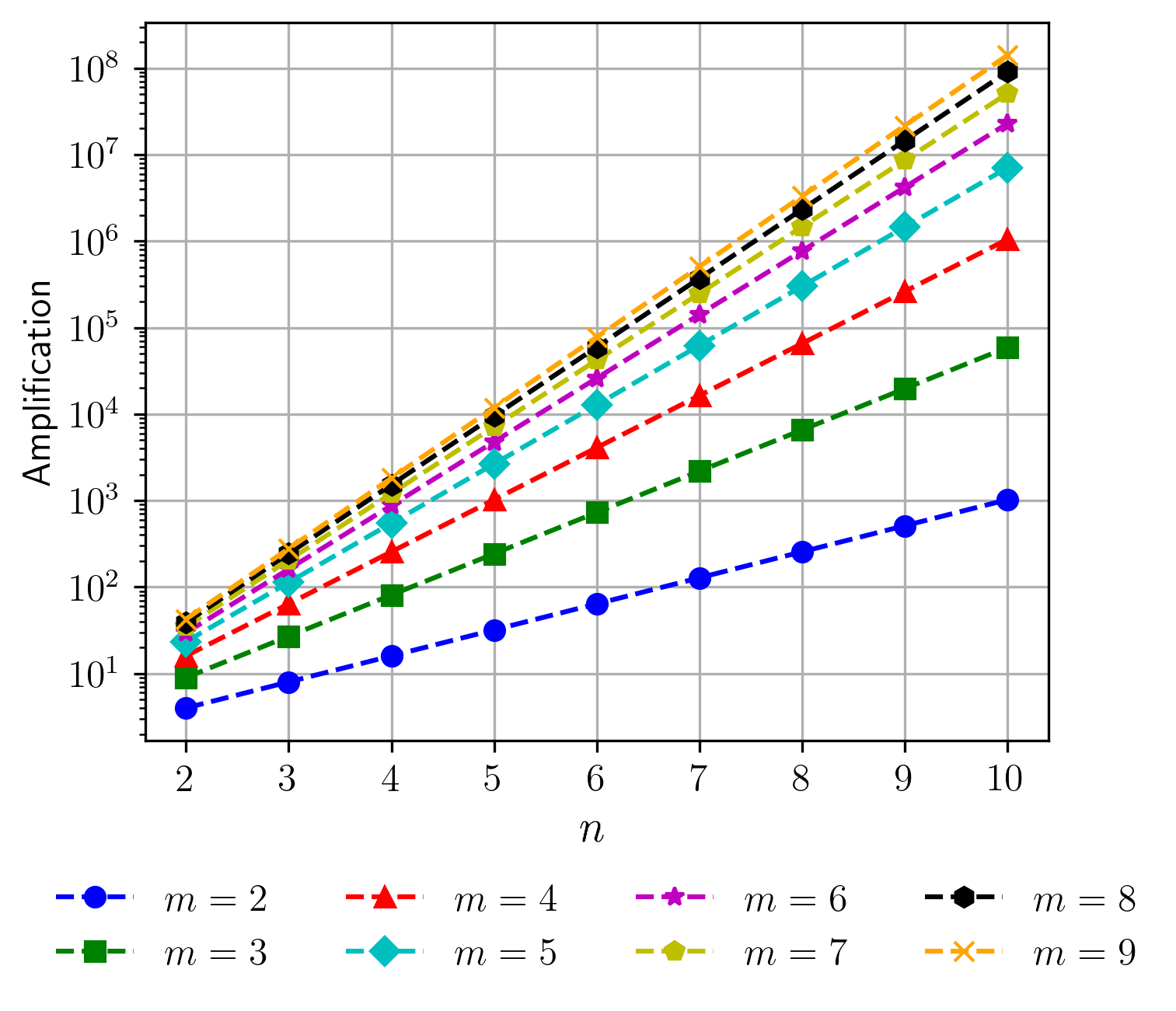}
    \caption[Amplification potential of $(n,m)$ Hamming graphs]{Amplification achieved by maximisation of the graph convergence potential for CTQWs on $(n, m)$ Hamming graphs. Markers indicate $\text{Prob}_{(n,m)}^*$ and the dashed lines $\left(\text{Prob}_{(1,m)}^*\right)^n$ (see \cref{sec:conv_hamming}).}
    \label{fig:hamming-grovers}
\end{figure}

We observed the following relationship between the convergence potential of $G_{(n,m)}$ and the complete graph $G_{(1,m)}$:
\begin{equation}
    \label{eq:hamming_grovers_convergence}
    \text{Prob}^*_{(n,m)} = \left(\text{Prob}^*_{(1,m)}\right)^n,
\end{equation}
which appears to be exact, up to the limits of numerical precision. 
For $m \geq 4$, the amplification of $G_{(n,m)}$ is closely approximated as:
    \[\left( 9 m^{n-1} - 24 m^{n-2} + 16 m^{n-3}\right)^n,\]
where the base is obtained as a solution to $\text{Prob}^*_{(1,m)}$ with $t^*=\pi/m$ and subshell optimal phase-factors $(q^*_{0,0} = 0, q^*_{1,0} = \pi)$, multiplied by $m^n$. By this, we see that the amplification of $G_{(n,m)}$ scales as $\mathcal{O}(e^{n \log m})$. In terms of measurement probability, the convergence potential decreases polynomially with $m$ and exponentially with $n$. However, in the latter case, as $m \rightarrow \infty$, $\text{Prob}^*_{(n,m)}$ is greater than $1/m^n$ by a factor of $9^n$.

Hamming graphs with $m \leq 4$ have $\text{Prob}_{(1,2)}^* = 1$. By \cref{eq:hamming_grovers_convergence}, the convergence potential of the $G_{(n,m \leq 4)}$ graphs is equal to one for all $n$. This entails an increase in amplification that is directly proportional to the size of the solution space. Numerical results for $\text{Prob}^*_{(n,m \leq 4)}$ from $n=1$ to $20$ were consistent with this up to the limits of numerical precision.

\subsection{QVA Benchmarks}
\label{sec:applications}

This section presents a comparative study of the QVAs introduced in \cref{sec:background,sec:cart_qva,sec:perm_qva}, focusing on their mixing structure, convergence behaviour, and ability to amplify the probability of the optimal solution under optimisation of their variational parameters. Algorithms for unconstrained optimisation, including the generalised QMOA and the QAOA, were applied to the parallel machine scheduling problem. The QWOA-CS, QWOA, QAOAz~(XY-parity), and QAOAz~(XY-complete) were applied to the portfolio optimisation problem. Additionally, we include results for the QWOA-CS without the $K$-partite mixing unitary $\hat{U}_{\mathcal{L}}$, denoted as QWOA-CS~(disjoint).

Algorithm performance is primarily evaluated in terms of the approximation ratio, which is defined as the ratio of the average cost of solutions found by sampling from the optimised QVA state to the globally optimal cost function value in the valid solution space. In the context of minimisation, we expressed this as,
\begin{equation}
\text{Approximation Ratio} = \frac{\langle \hat{Q} \rangle - \max C(\mathcal{S}^\prime)}{\min C(\mathcal{S}^\prime) - \max C(\mathcal{S}^\prime)},    
\end{equation}
where, for unconstrained optimisation, $\mathcal{S}^\prime = \mathcal{S}$.

Numerical results were obtained using \texttt{QuOp\_MPI}, which computes a noise-free simulation of the QVA phase-shift and mixing unitaries~\cite{matwiejew_quop_mpi_2022,matwiejew_quop_mpi_2022}. Variational parameters were optimised using the adaptive Nelder-Mead method included with \texttt{SciPy}, with an iteration limit of $10^3$ and a convergence tolerance of $10^{-9}$~\cite{jones_scipy_2001,gao2012implementing}. This optimisation scheme aimed to determine the limiting behaviour of the algorithms given an ``optimal'' choice of variational parameters. For each $p$, five repeats were carried out with $\bm{\theta}$ initialised from a uniform distribution between $0$ and $2\pi$.

\subsubsection{Parallel Machine Scheduling}
\label{sec:pms-results}

Two problem instances were considered with job completion times $\tau$ and machine speed $v$ from uniform distributions over $(25, 100]$ and $(24, 100]$, based on the range of these quantities in~\cite{xiao_branch_2021}. The first problem instance has $n=6$ tasks and $m=5$ machines, and the second problem instance has $n=7$ tasks and $m=4$ machines. We refer to these as Schedule A and Schedule B, respectively. The size of the search space for Schedule A is $|\mathcal{S}^\prime| = 15625$ ($\sim 13.9$ qubits) for the QMOA and $|\mathcal{S}| = 262144$ ($18$ qubits) for the QAOA. The search space for Schedule B is the same for both the QMOA and the QAOA at $|\mathcal{S}^\prime| = 16384$ ($14$ qubits). For QMOA and QAOA, solution costs were scaled to be within $0$ and $2 \pi$ by dividing by the mean of $C(\mathcal{S})$. By \cref{eq:hamming-coefficeints} and the procedure outlined in \cref{sec:convergence} we found the transfer potential for the QMOA graphs to be $0.823$ (Schedule A) and $1$ (Schedule B). The job priority $w$, completion time $\tau$ and machine speed $v$ for the two problem instances are given in Appendix~\ref{app:parallel_machine_data}. 

To apply the QAOA to Schedule A, where the machine count is not a power of two, we expanded the QAOA solution space to include $2^{\lceil \log m \rceil}$ machines and modified the cost function by the addition of a quadratic penalty function,
\begin{equation}
    \label{eq:penalty-qaoa}
    C_{\text{QAOA}}(\bm{s}) = 
    \begin{cases} 
    C(\bm{s}) & \text{if } \max \bm{s} \leq m, \\
    C(\bm{s}) + a (m - \max \bm{s})^2 & \text{otherwise},
\end{cases}
\end{equation}
where $\max \bm{s}$ is the maximum machine number in the solution, and $m$ is the highest valid machine number. Machines over $m$ were assigned speeds equal to the minimum of the valid machines. 

\begin{figure}[t!]
    \centering
    \includegraphics{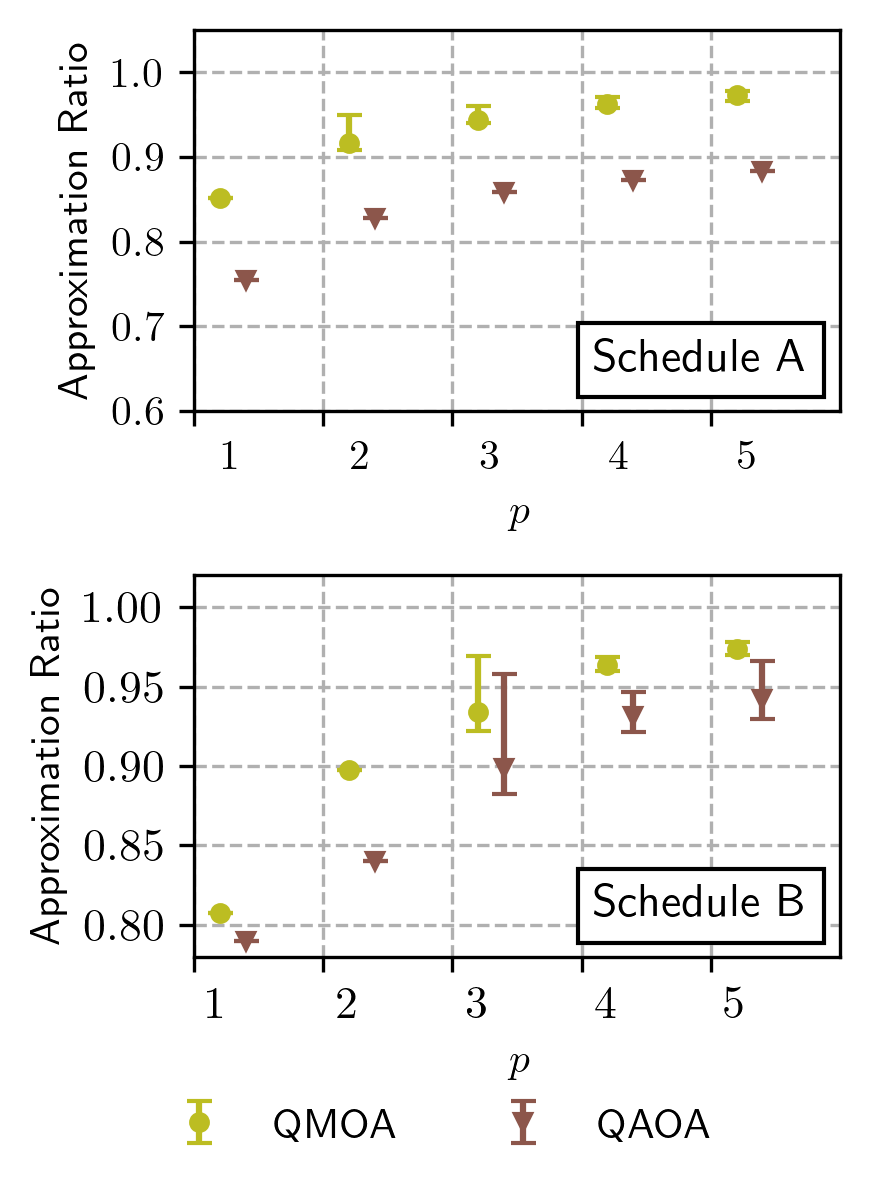}
    \caption[QVA approximation ratio (parallel machine scheduling)]{Approximation ratio with increasing ansatz depth $p$ of the optimised QMOA and QAOA states for the parallel machine scheduling problem.}
    \label{fig:pms-depth-plot}
\end{figure}

The average approximation ratio from $p=1$ to $5$ is shown in \cref{fig:pms-depth-plot}. The QMOA achieves the highest average at each ansatz depth, which, at $p=5$, is $0.973$ for both Schedule A and Schedule B. The corresponding QAOA averages are $0.883$ (Schedule A) and $0.942$ (Schedule B).

\begin{figure}[t!]
    \centering
    \includegraphics[width = 8.85679cm]{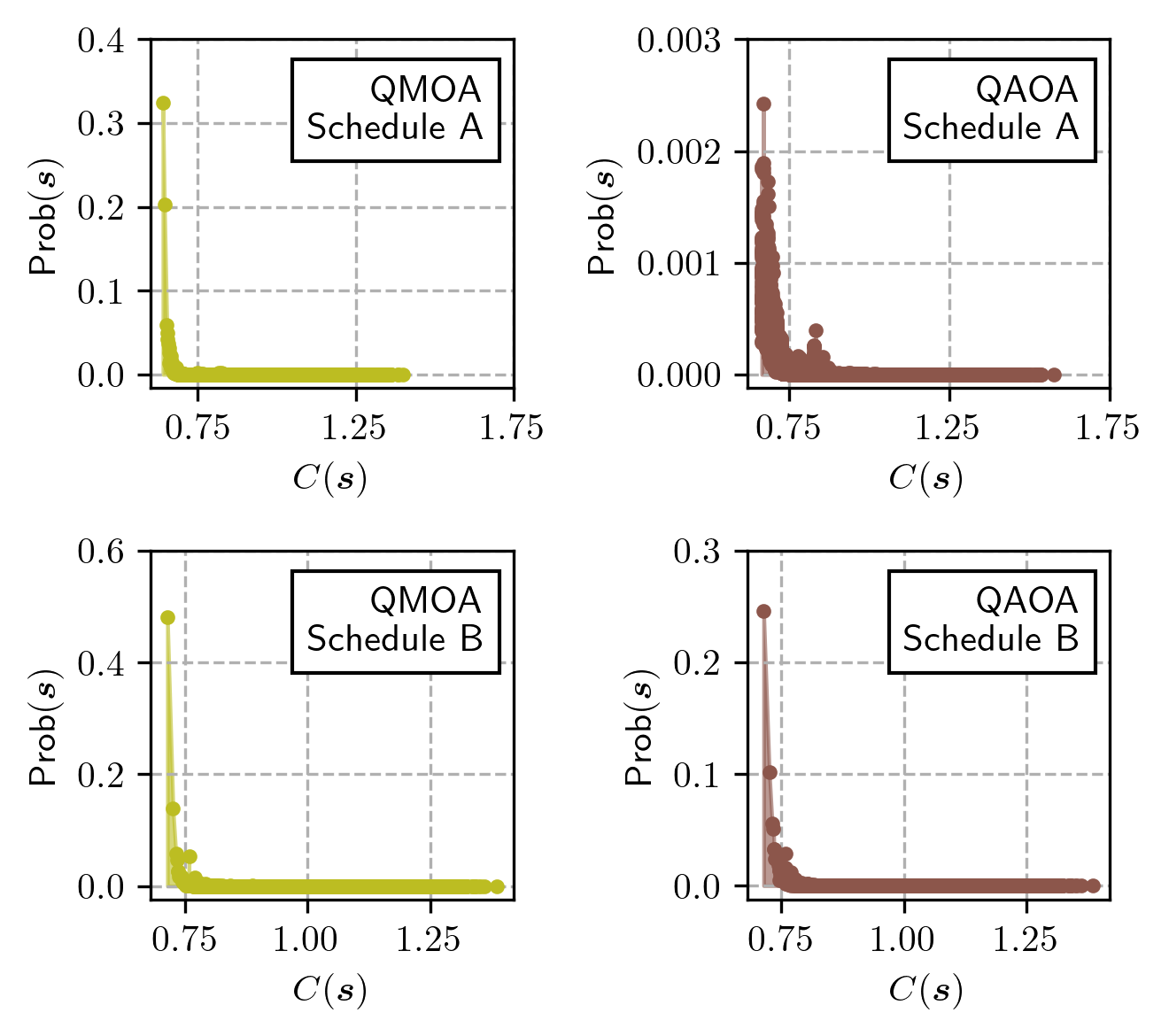}
    \caption[QVA probability distributions (parallel machine scheduling)]{QMOA and QAOA probability distributions for the parallel machine scheduling problem that achieved the highest approximation ratio at $p=5$. The QAOA has a higher upper bound in its solution costs for Schedule A because its search space includes invalid solutions, which are modified by the addition of a quadratic penalty function (see \cref{eq:penalty-qaoa}).}
    \label{fig:pms-states}
\end{figure}

The QMOA and QAOA states for Schedule A and Schedule B with the highest approximation ratio at $p=5$ are shown in \cref{fig:pms-states}. The QMOA states converge to the globally optimal solution with a measurement probability of $0.325$ (Schedule A) and $0.481$ (Schedule B). The highest probability solution in the Schedule A QAOA state has the 91st lowest cost and a probability of $0.00242$. For Schedule B, the QAOA also converges to the globally optimal solution with a measurement probability of $0.246$.

\subsubsection{Portfolio Rebalancing}
\label{sec:portfolio-results}

\begin{figure*}[t!]
    \centering
    \includegraphics[width=\linewidth]{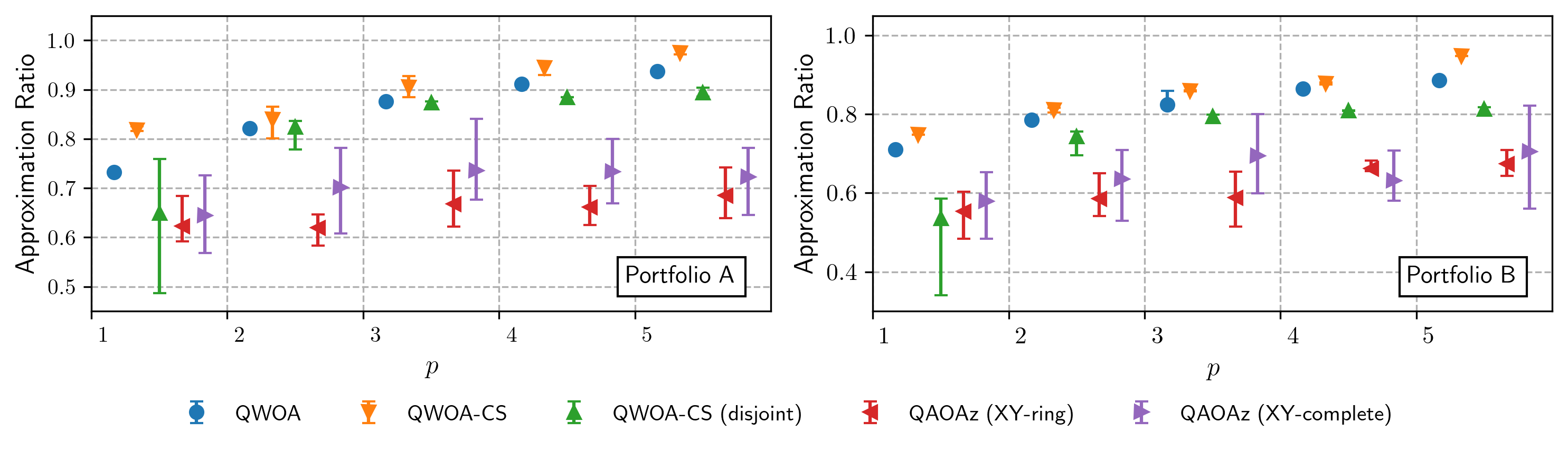}
    \caption[QVA approximation ratio (portfolio rebalancing)]{Approximation ratio of QVA solutions to the portfolio rebalancing problem with increasing ansatz depth $p$ for Portfolio A (left) and Portfolio B (right). Each point represents the mean objective function value after parameter optimisation over five repeats at each ansatz depth, and the error bars depict the minimum and maximum values.}
    \label{fig:portfolio-depth-plots}
\end{figure*}

For simulated optimisation of the portfolio rebalancing problem, solution costs were computed according to \cref{eq:portfolio_cost_function}, based on adjusted closing prices obtained from Yahoo Finance using the \texttt{yfinance} Python package for assets listed on the Australian Securities Exchange~\cite{yahoo_finance_2024}. Two portfolios were considered with $n=6$ and $8$ assets, both with $A=2$. The size of the search space for the QWOA, QWOA-CS and QWOA-CS~(disjoint) is $|\mathcal{S}^\prime| = 90$ ($6.5$ qubits, Portfolio A) and $|\mathcal{S}^\prime| = 784$ ($\sim9.6$ qubits, Portfolios B). The search space for the QAOAz~(XY-parity) and QAOAz~(XY-complete) is $|\mathcal{S}| = 4096$ ($12$ qubits, Portfolio A) and $|\mathcal{S}| = 65536$ ($16$ qubits, Portfolio B). The selected assets, date ranges and model parameters are provided in Appendix~\ref{app:portfolio_data}.

For the QWOA, QWOA-CS, and QWOA-CS~(disjoint), the valid solution space $\mathcal{S}^\prime$ was precomputed. The QWOA and QWOA-CS indexing unitaries do not necessarily produce the same ordering. However, since solution ordering does not influence the dynamics of the QWOA mixing unitary, the same set of solution costs was used for all three algorithms.

An indexing algorithm for the generating multisets $\mathcal{P}_k \in \mathcal{P}^\prime$ of the permuation paritioning of $\mathcal{S}^\prime$ (see \cref{sec:multiset_partitioning}) is given in \cref{alg:index_s} with complexity $\mathcal{O}(n)$. It follows from observing that, to maintain $A$, $\bm{s} \in \mathcal{S}^\prime$ can be modified only by substituting two \emph{no} positions with one long and one short position. Consequently,
\begin{equation}\label{eq:nsubsets}
    \left|\mathcal{P}^\prime\right| = \left\lfloor \frac{1}{2}\left(n - \left| A \right|\right) \right\rfloor + 1.
\end{equation}
A quantum circuit implementing the QWOA-CS indexing unitary for the portfolio rebalancing problem is shown in \cref{qc:index_sj}. The registers $\ket{P_{k,x_0}}$, $\ket{P_{k,x_1}}$, and $\ket{P_{k,x_2}}$ are of size $\mathcal{O}(\log n)$; consequently, the binomial coefficients are computable with a gate complexity of $\mathcal{O}(n\log n)$. By the complexity analysis in \cref{sec:indexing_perm}, the overall gate complexity is $\mathcal{O}(n^4)$.

\begin{algorithm2e}[ht!]
\caption[QWOA-CS multiset indexing (portfolio rebalancing)]{QWOA-CS multiset indexing (portfolio rebalancing).}\label{alg:index_s}
    \uIf{$A \geq 0$}{
        $P_{k,x_0} \gets k$ \tcp*{short positions}
        $P_{k,x_1} \gets A + k$ \tcp*{long positions}
        $P_{k,x_2} \gets n - A + 2k$ \tcp*{no positions}
    }
    \Return $j$\;
\end{algorithm2e}

\begin{figure*}[t!]
\centering
\resizebox{\linewidth}{!}{%
\begin{quantikz}[column sep = 0.12cm, row sep = 0.2cm]
    \lstick{\ket{\bm{s}}} & \qwbundle{} & \qw & \push{\,\cdots\,} & \qw & \push{\,\,\ket{s_i}\,\,} \gategroup[wires=11, steps=15, style={dotted, cap=round, inner sep=0.00cm, column sep = 0.010cm, row sep = 1cm}, label style={label position=below, yshift=-0.5cm}]{$i=0,\dots,n-1$} & \ctrl{1} & \ctrl{2} & \qw & \qw & \qw & \qw & \qw & \qw & \qw & \qw & \qw & \qw & \ctrl{2} & \ctrl{1} & \qw & \push{\,\cdots\,} & \qw & \qw\rstick{\ket{\bm{s}}} \\
    \lstick{\ket{1}} & \qwbundle{} & \qw & \push{\,\cdots\,} & \qw & \ctrl{1} & \ctrl{2} & \ghost{H}\qw & \qw & \qw & \qw & \qw & \qw & \qw & \qw & \qw & \qw & \qw & \ctrl{2} & \ctrl{1} & \qw & \push{\,\cdots\,} & \qw & \qw\rstick{\ket{1}} \\
    \lstick{\ket{0}} & \qw & \qw & \push{\,\cdots\,} & \qw & \gate{>} & \qw & \ctrl{1} & \qw & \ctrl{2} & \qw & \qw & \ctrl{2} & \qw & \ctrl{1} & \ctrl{1} & \qw & \ctrl{3} & \qw & \gate{>} & \qw & \push{\,\cdots\,} & \qw & \qw\rstick{\ket{0}} & \\
    \lstick{\ket{0}} & \qw & \qw & \push{\,\cdots\,} & \qw & \qw & \gate{<} & \targ{} & \ctrl{1} & \qw  & \qw & \qw & \qw & \ctrl{1} & \ctrl{3}  & \targ{} & \ctrl{1} & \qw & \gate{<} & \qw & \qw & \push{\,\cdots\,} & \qw & \qw\rstick{\ket{0}} & \\
    \lstick{\ket{P_{k,x_0}}} & \qwbundle{} & \qw & \push{\,\cdots\,} & \qw & \qw & \qw & \qw & \gate[4][1.8cm]{{\hat{\mathcal{M}}}_{x_0}}\gateinput{$x_0$}\gateoutput{$x_0$} & \gate[5,disable auto height][1.8cm]{{\hat{\mathcal{M}}}_{x_1}}\gateinput{$x_0$}\gateoutput{$x_0$} & \qw & \qw & \gate[5][1.8cm]{{\hat{\mathcal{M}}}_{x_1}}\gateinput{$x_0$}\gateoutput{$x_0$} & \gate[4][1.8cm]{{\hat{\mathcal{M}}}_{x_0}}\gateinput{$x_0$}\gateoutput{$x_0$} & \qw & \qw & \gate{-} & \qw & \qw & \qw & \qw & \push{\,\cdots\,} & \qw & \qw\rstick{\ket{0}} & \\ 
    \lstick{\ket{P_{k,x_1}}} & \qwbundle{} & \qw & \push{\,\cdots\,} & \qw & \qw & \qw & \qw & \gateinput{$x_1$}\gateoutput{$x_1$} & \gateinput{$x_1$}\gateoutput{$x_1$} & \qw & \qw & \gateinput{$x_1$}\gateoutput{$x_1$} & \gateinput{$x_1$}\gateoutput{$x_1$} &  \qw &  \qw & \qw & \gate{-} & \qw & \qw & \qw & \push{\,\cdots\,} & \qw & \qw\rstick{\ket{0}} & \\
    \lstick{\ket{P_{k,x_2}}} & \qwbundle{} & \qw & \push{\,\cdots\,} & \qw & \qw & \qw & \qw & \gateinput{$x_2$}\gateoutput{$x_2$} & \gateinput{$x_2$}\gateoutput{$x_2$} & \qw & \qw & \gateinput{$x_2$}\gateoutput{$x_2$} & \gateinput{$x_2$}\gateoutput{$x_2$} &  \gate{-} &  \qw &  \qw & \qw & \qw & \qw & \qw & \push{\,\cdots\,} & \qw & \qw\rstick{\ket{0}} & \\
    \lstick{\ket{0}} & \qwbundle{} & \qw & \push{\,\cdots\,} & \qw & \qw & \qw & \qw & \gateinput{$y$}\gateoutput{${\hat{\mathcal{M}}}_{x_0} {\oplus} y$} & \qw & \ctrl{3} & \qw & \qw & \gateinput{$y$}\gateoutput{${\hat{\mathcal{M}}}_{x_0} {\oplus} y$} & \qw & \qw & \qw & \qw & \qw & \qw & \qw & \push{\,\cdots\,} & \qw & \qw\rstick{\ket{0}} & \\
    \lstick{\ket{0}} & \qwbundle{} & \qw & \push{\,\cdots\,} & \qw & \qw & \qw & \qw & \qw & \ghost{{\hat{\mathcal{M}}}_\text{1}}\gateinput{$y$}\gateoutput{${\hat{\mathcal{M}}}_{k_1} {\oplus} y$} & \qw & \gate[3][1.8cm]{\hat{U}_+}\gateinput{$x_0$}\gateoutput{$x_0$} & \gateinput{$y$}\gateoutput{${\hat{\mathcal{M}}}_{k_1} {\oplus} y$} & \qw & \qw & \qw & \qw & \qw & \qw &  \qw & \qw & \push{\,\cdots\,} & \qw &\qw\rstick{\ket{0}} & \\
    \lstick{\ket{k}} & \qwbundle{} & \qw & \push{\,\cdots\,} & \qw & \qw & \qw & \qw & \qw & \qw & \qw & \qw & \qw & \qw & \qw & \qw & \qw  & \qw & \qw & \qw & \qw & \push{\,\cdots\,} & \gate[2][2cm]{\hat{U}_{f_k}}\gateinput{$x_0$}\gateoutput{$x_0$} & \qw\rstick{\ket{k}} & \\
    \lstick{\ket{0}} & \qwbundle{} & \qw & \push{\,\cdots\,} & \qw & \qw & \qw & \qw & \qw & \qw & \targ{} & \ghost{H}\gateinput{$x_1$}\gateoutput{$x_0 {+} x_1$} & \qw & \qw & \qw & \qw & \qw  & \qw & \qw & \qw & \qw & \push{\,\cdots\,} & \gateinput{$x_1$}\gateoutput{$f(x_0) {+} x_1$} &  \qw\rstick{\ket{\text{id}(\bm{s})}} &
\end{quantikz}
}
    \caption[QWOA-CS indexing circuit (portfolio rebalancing)]{Circuit implementing the QWOA-CS indexing algorithm (\cref{alg:index_sj})  for the portfolio rebalancing problem (see also \cref{eq:index_1,eq:index_2}). $\mathcal{M}_{x_j}$ computes the number of permutations remaining after the choice of $x_j$, e.g., $\hat{\mathcal{M}}_{x_0}\ket{x_0, x_1, x_2}\ket{y}=\ket{x_0,x_1,x_2}\ket{\binom{x_0 + x_1 + x_2 - 1}{x_0 -1, x_1, x_2} \oplus y}$, based on the circuit presented in \cref{qc:multinomial}. The CNOT operation between the fourth lowest and lowest registers writes the output of $\hat{\mathcal{M}}_{x_0}$ to the lowest register with $\mathcal{O}(n)$ CNOT gates.}
    \label{qc:index_sj}
\end{figure*}

As shown in \cref{fig:portfolio-depth-plots}, the approximation ratio for all of the QVAs improves with increasing $p$. For Portfolio A at $p=5$, the QWOA, QWOA-CS, and QWOA-CS~(disjoint) have mean approximation ratios of $0.937$, $0.975$, and $0.893$, respectively, which are noticeably higher than the means of the QAOAz~(XY-parity) and QAOAz~(XY-complete), which are $0.685$ and $0.722$, respectively. A similar trend is observed for Portfolio B at $p=5$, with the QWOA, QWOA-CS, and QWOA-CS~(disjoint) having mean approximation ratios of $0.887$, $0.948$, and $0.814$, compared to the QAOAz~(XY-parity) and QAOAz~(XY-complete) means of $0.674$ and $0.705$. At $p=2$ and higher, the QWOA has the smallest variance in its approximation ratio, followed by the QWOA-CS, QWOA-CS~(disjoint), QAOAz~(XY-parity), and QAOAz~(XY-complete). This is also the case at $p=1$, with the exception being the QWOA-CS~(disjoint), which has a larger range than the rest of the QVAs.

\begin{figure*}
    \centering
    \includegraphics[width=\linewidth]{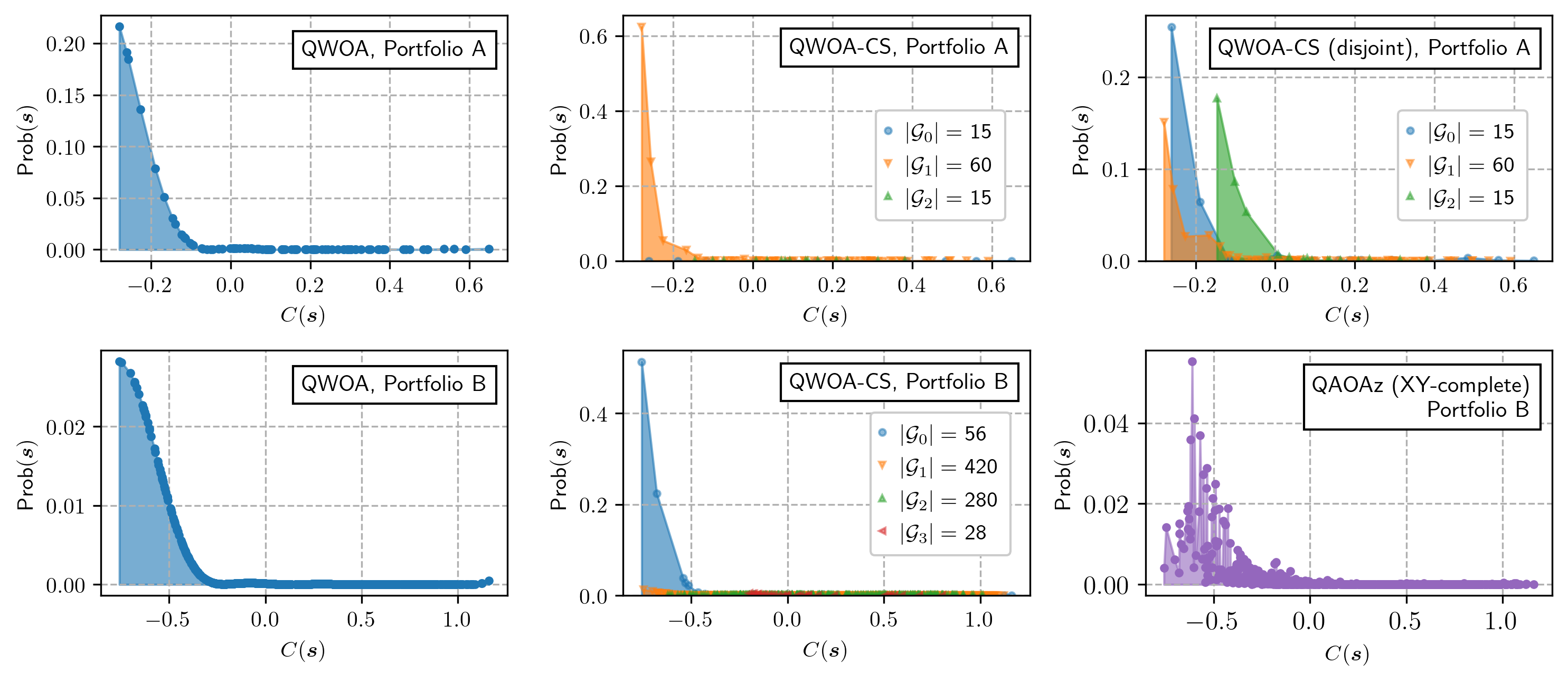}
    \caption[Optimised QVA probability distributions (portfolio rebalancing)]{Selected QVA solution probability at $p=5$. Each state corresponds to the upper bound in approximation depicted in \cref{fig:portfolio-depth-plots}. The QWOA-CS and QWOA-CS~(disjoint) subplots depict the probability distribution across the disjoint subgraphs of the constrained permutation graph. The QAOAz~(XY-complete) also evolves over disjoint subgraphs (the permutation graph); however, they are not shown due to significant overlap in probability distribution over these graphs.}
    \label{fig:portfolio-states}
\end{figure*}

While the QWOA, QWOA-CS, and QWOA-CS~(disjoint) have comparable approximation ratios, their convergence behaviour is distinct. Examining the probability distribution for the Portfolio A states that achieved the highest approximation ratio at $p=5$, as shown in \cref{fig:portfolio-states}, we see that the QWOA-CS has converged to the globally optimal solution with a measurement probability of $0.624$ and that its probability density is concentrated in the subgraph containing this solution. The QWOA-CS~(disjoint) shows convergence to the optimal solution in each subgraph, with the second-best solution having the highest probability of $0.248$. Like the QWOA-CS, the QWOA has also converged to the globally optimal solution; however, it has a lower probability of $0.216$.

Moving to Portfolio B, we see that the difference is more pronounced. Measurement of the globally optimal solution from the QWOA state, shown in \cref{fig:portfolio-depth-plots} at $p=5$, has a probability of $0.0282$ compared to a probability of $0.513$ for the QWOA-CS, which again shows an almost complete transfer of probability density to the corresponding subgraph.

\begin{figure}[ht!]
    \centering
    \includegraphics[width = 6.9083cm]{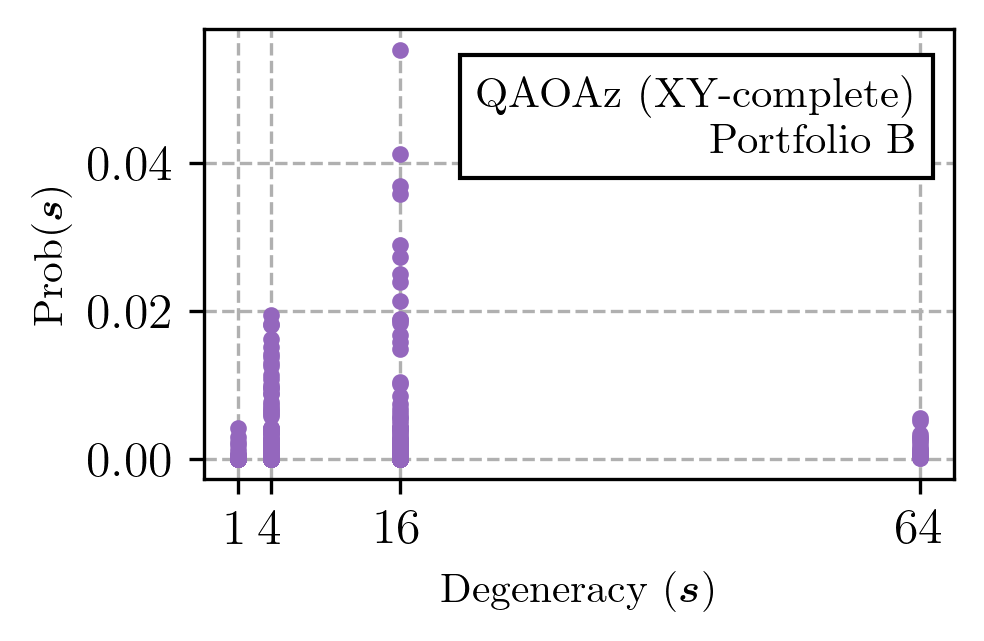}
    \caption[QAOAz~(XY-complete) probability and solution degeneracy]{Portfolio B solution probability and degeneracy for the QAOAz~(XY-complete) state that achieved the highest approximation ratio at $p=5$ (see also \cref{fig:portfolio-depth-plots,fig:portfolio-states}). }
    \label{fig:qaoaz_degen}
\end{figure}

The QAOAz~(XY-complete) has an upper bound in its approximation ratio comparable to the QWOA-CS~(disjoint) at $p=3$ and $p=5$. The corresponding QAOAz~(XY-complete) probability distribution is shown in the lower left of \cref{fig:portfolio-states}. It shows strong convergence to solutions away from the global minimum, with a maximum probability of $0.007$ at the state that corresponds to the 27th lowest cost. \cref{fig:qaoaz_degen} shows solution probability against degeneracy in the quantum search space, where two states $\ket{\bm{s}}$ and $\ket{\bm{s}^\prime}$ are considered degenerate if they encode the same portfolio configuration. Convergence appears biased towards solutions with a degeneracy of 16, with the next highest group being solutions with a degeneracy of four, followed by those with a degeneracy of one.

\subsubsection{Shell Variance}
\label{sec:shell-variance-results}

\begin{figure}[ht!]
    \centering
    \includegraphics[width=8.85679cm]{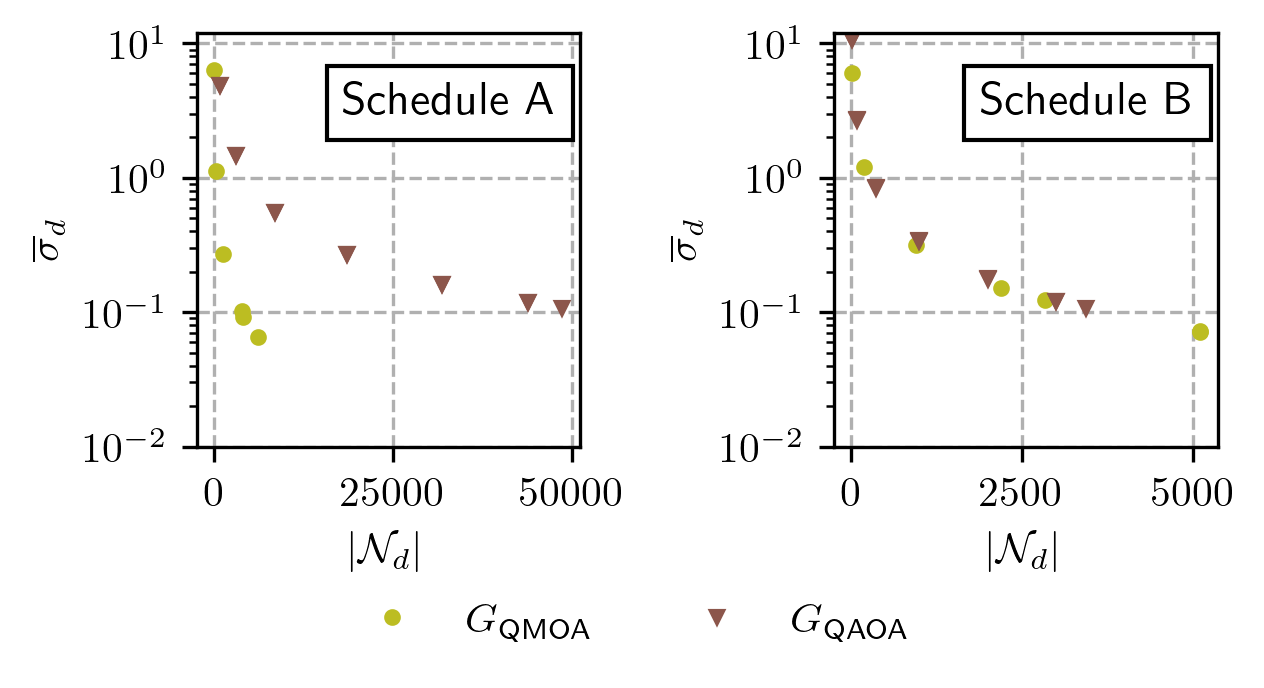}
    \caption[Shell variance of cost embeddings (parallel machine scheduling)]{QMOA and QAOA MSV of the Schedule A and Schedule B parallel machine scheduling problems where $G_\text{QMOA}$ and $\hat{G}_\text{QAOA}$ refer to the defining graphs of the mixing unitaries of the QMOA and QAOA respectively.}
    \label{fig:PMS-MSV}
\end{figure}

The shell variance, defined in \cref{eq:MSV} (\cref{sec:MSV}), of the QMOA and QAOA graph embeddings of the parallel machine scheduling problem solution space is shown in \cref{fig:PMS-MSV}. The QMOA maps the solution costs of Schedule A and Schedule B into the $(6, 5)$ and $(7, 4)$ Hamming graphs, respectively. The QAOA maps solution costs to the $(n, 2)$ Hamming graph, with $n=18$ for Schedule A and $n=14$ for Schedule B. The QMOA graph MSV is $1.33$ (Schedule A) and $1.14$ (Schedule B). The QAOA MSV is higher at $16.5$ (Schedule A) and $2.31$ (Schedule B). \cref{fig:portfolio-shell-var} depicts the shell variance for the QVAs applied to the optimisation of Portfolios A and B (excluding the QWOA). The constrained permutation graph has the lowest MSV, followed by the parity and permutation graphs. The MSV for the complete graph of the QWOA mixing unitary is $3.96$ (Portfolio A) and $166$ (Portfolio B). In each case, the MSV is inversely correlated with the mean approximation ratio and the maximum observed convergence to the global minimum (see \cref{fig:portfolio-depth-plots,fig:portfolio-states}). 

\begin{figure*}
    \centering
    \includegraphics[width = \linewidth]{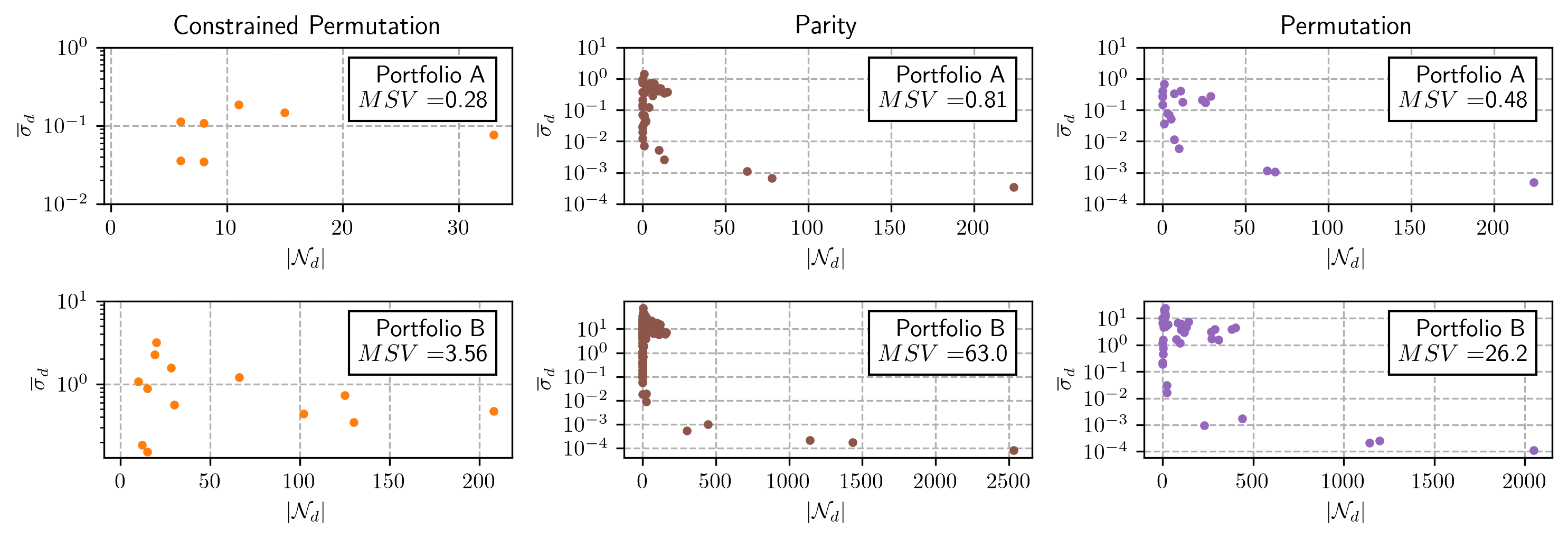}
    \caption[Shell variance of solution cost embeddings (portfolio rebalancing)]{Shell variance and MSV of the embedding of the solution costs of the portfolio rebalancing problem into defining graphs of the QWOA-CS~(disjoint), QAOAz~(XY-parity) and QAOAz~(XY-complete) mixing unitaries, as defined in \cref{sec:figures_of_merit}. Multiple points at the same $|\mathcal{N}_d|$ indicate shells of the same size in disjoint subgraphs.}
    \label{fig:portfolio-shell-var}
\end{figure*}

\subsubsection{Phase Distribution in QVA states}
\label{sec:phase-distribution-results}

By \cref{th:min_var}, a QVA that exhibits efficient convergence is expected to produce a phase distribution that clearly distinguishes the state with the highest amplification. We interpret this clustering as indicative of a QVA's ability to support phase coherence across the subshells of its mixing unitary over repeated ansatz iterations. A wider range of angles over states with significant amplification is expected to lead to a more diffusive evolution compared to phase distributions that narrowly group states with a similar phase-encoded cost.

\cref{fig:pms-phase,fig:portfolio-phase} show the probability ratio of QVA states for the parallel machine scheduling and portfolio optimisation problems against the phase angle relative to the state with the highest probability of measurement, denoted as $\bm{s}^*$. A depth of $p=3$ was selected to assess these properties at a point of intermediate convergence across all the considered algorithms. In each case, a narrower range of phase differences at $p=3$ is associated with higher convergence at $p=5$ and a smaller MSV (see also \cref{fig:PMS-MSV,fig:portfolio-shell-var,fig:pms-states,fig:portfolio-states}).

\begin{figure}[t!]
    \centering
    \includegraphics[width=8.85679cm]{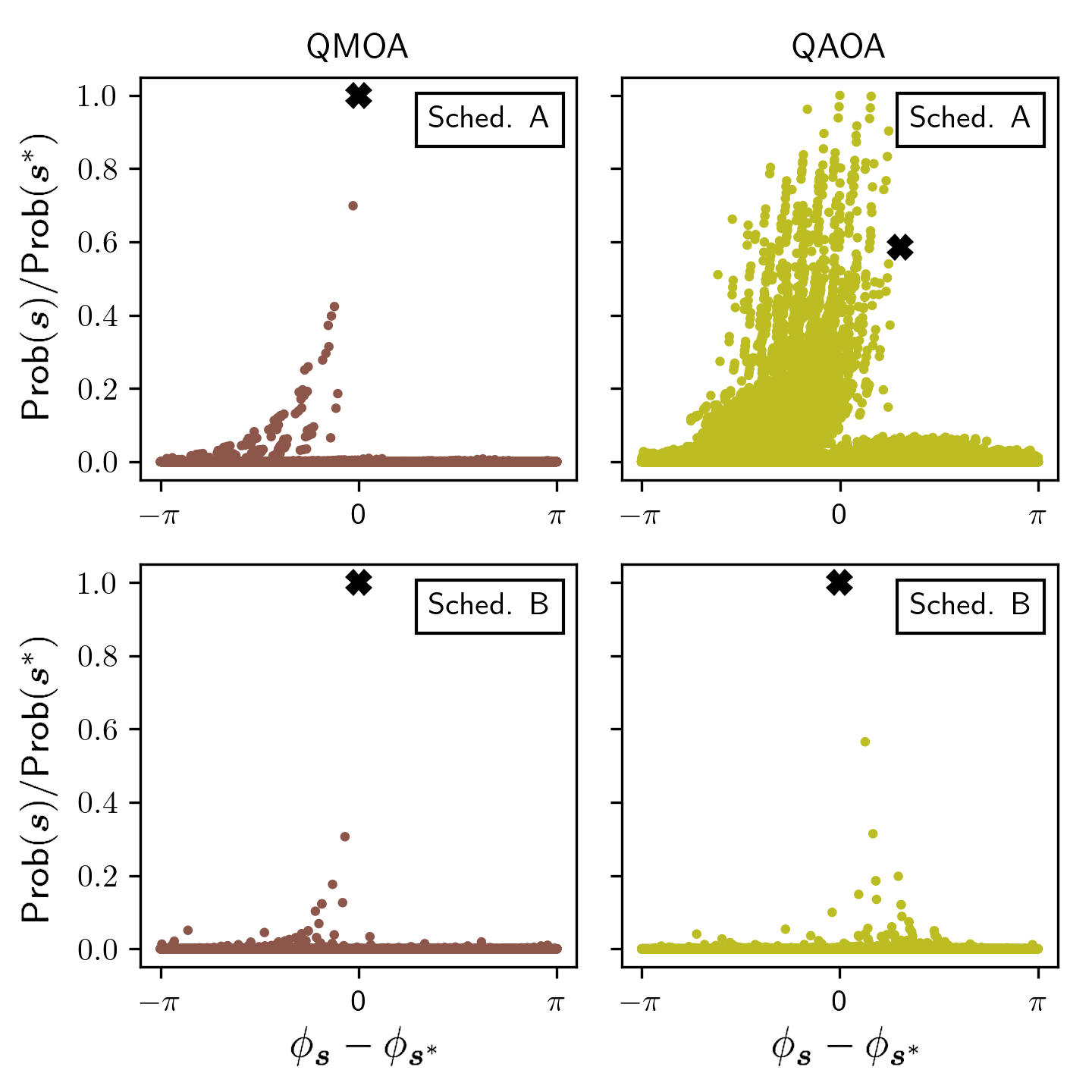}
    \caption[Probability ratio and phase angle (parallel machine scheduling)]{Probability ratio versus phase angle $\phi$ for states $\bm{s}$, relative to the state with the highest measurement probability in the optimised state vectors of QVAs solving the parallel machine scheduling problem. Each distribution corresponds to the highest approximation ratio depicted in \cref{fig:portfolio-depth-plots} at $p=3$. The black cross indicates the globally optimal solution $\bm{s}_{\text{opt}}$, which is unique in each instance. The upper row corresponds to Schedule A (Sched. A) and the lower to Schedule B (Sched. B). The measurement probability of $\bm{s}^*$ for each QVA are as follows. Schedule A: $0.0606$ (QMOA), $0.000559$ (QAOA). Schedule B $0.206$ (QMOA), $0.103$ (QAOA).}
    \label{fig:pms-phase}
\end{figure}
\begin{figure*}
    \centering
    \includegraphics[width = \linewidth]{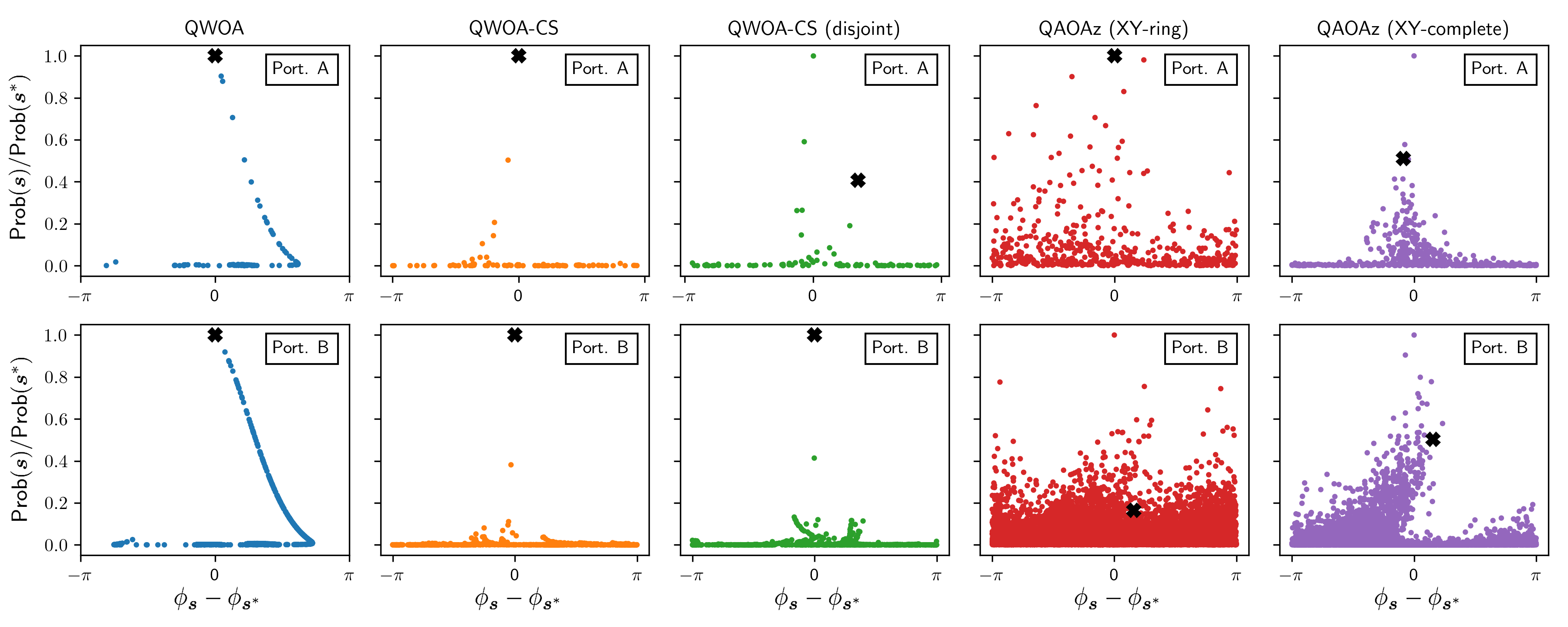}
    \caption[QVA probability ratio and phase angle (portfolio rebalancing)]{Probability ratio versus phase angle $\phi$ for states $\bm{s}$, relative to the state with the highest measurement probability in the optimised state vectors of QVAs solving the portfolio reballancing problem. Each distribution corresponds to the highest approximation ratio depicted in \cref{fig:portfolio-depth-plots} at $p=3$. The black cross indicates the globally optimal solution $\bm{s}_{\text{opt}}$, which is unique in each instance. The upper row corresponds to Portfolio A (Port. A) and the lower to Portfolio B (Port. B). The measurement probability of $\bm{s}^*$ for each QVA are as follows. Portfolio A: $0.118$ (QWOA), $0.369$ (QWOA-CS), $0.229$ (QWOA-CS~(disjoint)), $0.0151$ (QAOAz~(XY-parity)), $0.0269$ (QAOAz~(XY-complete)). Portfolio B: $0.0147$ (QWOA), $0.213$ (QWOA-CS), $0.103$ (QWOA-CS~(disjoint)), $0.00144$ (QAOAz~(XY-parity)), $0.00325$ (QAOAz~(XY-complete)).}
    \label{fig:portfolio-phase}
\end{figure*}

\subsubsection{Hybrid Parameter Optimisation}
\label{sec:hybrid-optimisation-results}

\begin{figure}[!h]
    \centering
    \includegraphics[width=8.85679cm]{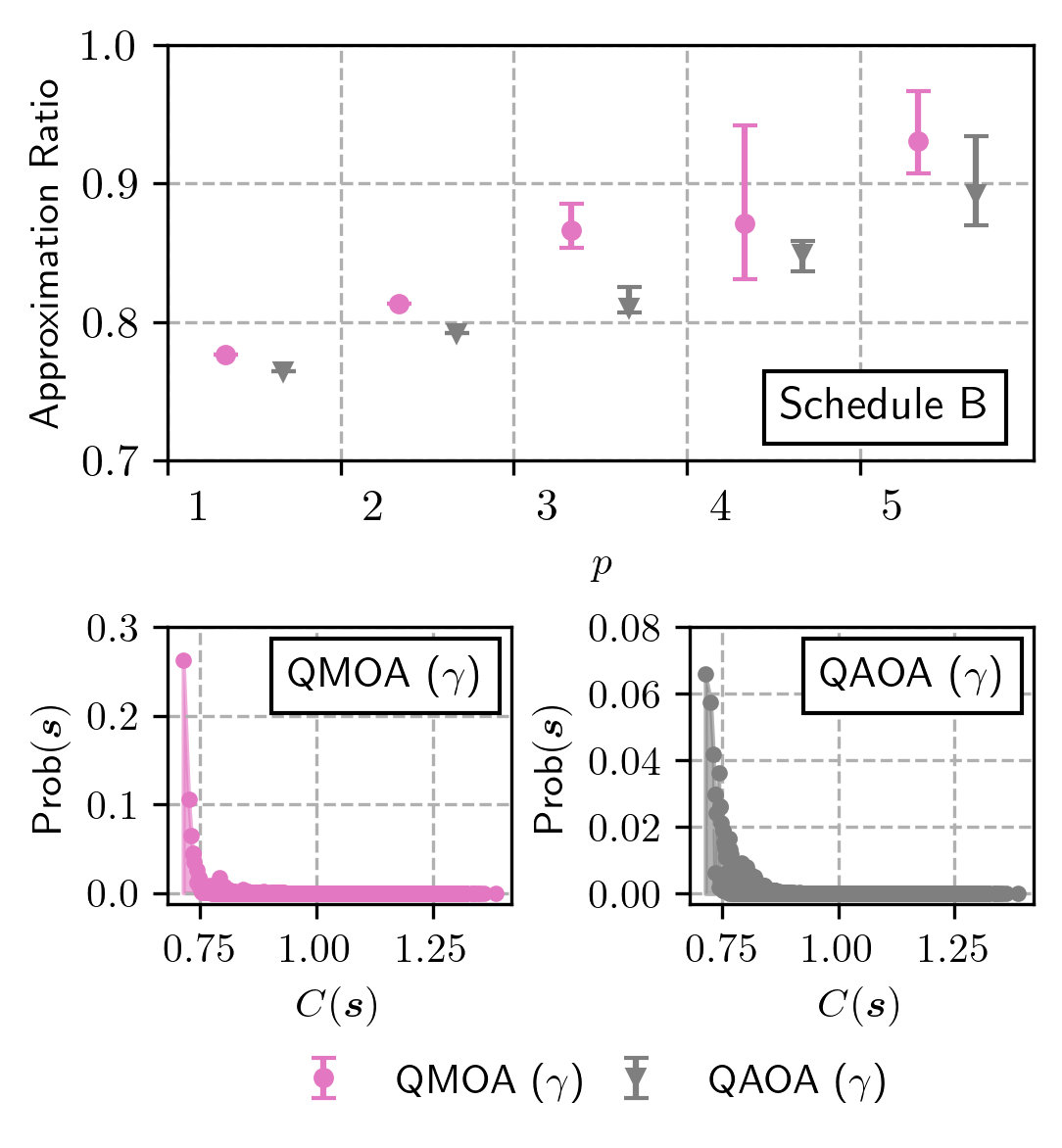}
    \caption[Optimisation with fixed walk times (parallel machine scheduling)]{Optimiastion of parallel machine scheduling problem Schedule B with fixed subshell-optimal walk times. The top shows the mean approximation ratio with increasing ansatz depth $p$. Each point is the mean of five repeats, and the bars show the minimum and maximum observed at the given depth. The bottom row shows the states with the highest approximation ratio at $p=5$. From lowest to highest $d$, the QMOA~($\gamma$) subshell-optimal walk times were: 1.79661, 1.90339, 1.14226, 2.10132, 2.46619, 0.785398, 0.785398. For QAOA~($\gamma$), these were: 0.27055, 0.387597, 0.481275, 0.563943, 0.640522, 0.713724, 0.785398, 0.857072, 0.930274, 1.00685, 2.05207, 1.95839, 1.30025, 1.5708.}
    \label{fig:pms-hybrid-opt}
\end{figure}

In \cref{sec:CTQW-subshells}, \cref{eq:hamming-coefficeints} opens the possibility of developing hybrid parameter optimisation schemes in which closed-form solutions for the graph subshell coefficients are used to derive near-optimal walk times, independent of any specific problem instance. A directed application of this observation is as follows:

\begin{enumerate}
    \item For subshells at $d > 0$, obtain $|\mathcal{N}_{d,0}|$ subshell-optimal walk times, $(t_{1, 0}, \dots, t_{0, |\mathcal{N}_{d,0}|-1})$, by numerical optimisation of $\max |w_{d,k}(t_{d,k})|$ where $t > 0$.
    \item Define the fixed mixing time for ansatz iteration $i$ as $t_{i + 1,k}$ where $k = i \mod |\mathcal{N}_{d, k}|$ and $ i = 0$ to $p - 1$.
    \item Minimse $\langle \hat{Q} \rangle$ by optimisation over the reduced set of variational parameters $\bm{\theta} = (\gamma_0, \gamma_1, \dots, \gamma_{p-1})$.
\end{enumerate}

This scheme was applied to the Schedule B parallel machine scheduling problem, for which the QMOA mixing unitary has $8$ subshell coefficients and the QAOA has $15$. We denote these QVA variants as QMOA~($\gamma$) and QAOA~($\gamma$). Numerical optimisation of $\max |w_{d,k}(t_{d,k})|$ followed the method described in \cref{sec:convergence}.

The approximation ratios and their corresponding probability distributions with the highest ratio at $p=5$ are illustrated in \cref{fig:pms-hybrid-opt}. At this depth, the mean approximation ratios are $0.934$ for QMOA~($\gamma$) and $0.911$ for QAOA~($\gamma$). Both distributions show convergence towards the globally optimal solution. The probability of measuring the global minimum from the prepared QMOA~($\gamma$) state is $0.263$, representing $54.7\%$ of the maximum probability achieved by optimising both $\gamma$ and $t$. In contrast, the probability in the prepared QAOA~($\gamma$) state is $0.0658$, which constitutes $26.7\%$ of the optimal value obtained when both $\gamma$ and $t$ are optimised (see \cref{fig:pms-states}).

\section{Discussion}
\label{sec:discussion}

The results presented in \cref{sec:CTQW-subshells,sec:convergence} support both the theoretical description of QVA dynamics in \cref{sec:theory} and establish that the defining graphs of the QAOA, QMOA, QWOA, QWOA-CS, and QAOAz mixing unitaries have the potential to produce a high degree of constructive interference at a target state. The graph convergence potentials reported in \cref{table:graph_characteristics_ordered_deg} indicate that vertex degree negatively correlates with the maximum convergence obtainable with a single ansatz iteration. We suggest that this follows from the unitary evolution of the CTQW, which requires that probability amplitude transfer between $\bm{s}$ and $\bm{s}^\prime$ be equally weighted over all $\bm{s}^\prime$ within the same subshell. Consequently, amplitude transfer between $\bm{s}$ and $\bm{s}^\prime$ is also negatively correlated with the size of the subshell containing $\bm{s}^\prime$. However, as shown in \cref{fig:transfer-and-discrepancy}, a high degree of transfer is still possible due to the cumulative contribution of all states within the same subshell.

In \cref{fig:transfer-and-discrepancy}, we also see a clear relationship between phase discrepancy and the ability of a mixing unitary to realise its convergence potential. For each graph, the maximum convergence occurred only with zero subshell variance. The extent of this impact appears to be influenced by graph structure. As given in \cref{table:graph_characteristics_ordered_deg}, the $(3, 5)$ Hamming graph has a convergence that is greater than half of its full convergence potential over the largest range of subshell variance, closely followed by the $(7, 3)$ Hamming graph.

Identifying \cref{eq:hamming-coefficeints} allows for a broader investigation of the convergence potential of the $(n, m)$ Hamming graphs. Our results in \cref{sec:conv_hamming} indicate that the convergence potential of such graphs is determined by their largest unstructured subspace. These are characterised by complete graphs of size $m$, for which convergence is upper bound by the theoretical limits of an unstructured quantum search. Accordingly, for $m \leq 4$, complete convergence appears to be attainable with one iteration of $\hat{U}_Q$ and $\hat{U}_W$ independent of $n$, which would provide an exponential advantage if achieved in practice without significant parameter optimisation overhead. In contrast, for $m > 4$, the convergence potential decreases polynomially in $m$ and exponentially in $n$. However, as the amplification grows like $\mathcal{O}(e^{n \log m})$, we still observe an exponential advantage relative to an unbiased sampling of the solution space.

Whether this scaling behaviour can be realised in practice depends on the extent to which problems of interest can approximate this idealised phase distribution with a suitable choice of variational parameters. To this end, we note the observed effectiveness of the QAOA in tackling COPs with binary variables, such as the maxcut and Boolean satisfiability problems~\cite{wurtz_maxcut_2021,basso2021quantum,boulebnane10solving,sureshbabu_parameter_2024,headley_problem-size-independent_2023}. Our results indicate that similar performance is attainable for unconstrained COPs with arties of three and four. The convergence potential results for $m > 4$ also corroborate the observation of exponential scaling in $n$ of the QMOA when applied to optimisation of multivariable functions in previous work ~\cite{matwiejew_quantum_2023}.

Quantum variational algorithm simulation and analysis in \cref{sec:applications} supports the generality of the results in \cref{sec:CTQW-subshells,sec:convergence,sec:conv_hamming}. For both the parallel machine scheduling and portfolio optimisation problems, QVAs with mixing unitaries defined by graphs with lower MSV consistently achieved the highest approximation ratio and the strongest amplification of the globally optimal solution (see \cref{fig:pms-depth-plot,fig:portfolio-states,fig:pms-states}). The correspondence between this and a reduction in subshell variance is further emphasised by \cref{fig:pms-phase,fig:portfolio-phase}, in which an ordered phase distribution over a comparatively narrow range is associated with a higher approximation ratio.    
The numerical simulation results in \cref{sec:pms-results} show that the QMOA outperforms the QAOA in the parallel machine scheduling problem. This is most pronounced in Schedule A, where the state preparation and mixing unitary of the generalised QMOA reduces the quantum search space by 16 times. At face value, one might conclude that this advantage is primarily due to the reduction in size; however, since the defining graph of the QAOA mixing unitary, the $(n, 2)$ Hamming graph, has a convergence potential that is exponentially higher than Hamming graphs with $m > 4$, we suggest that the advantage of the QMOA for this problem should be attributed at least in part to the alignment of its graph with the global structure in the problem solution space. The results obtained for Schedule B offer further support for this position, where, as shown in \cref{fig:pms-depth-plot,fig:pms-states}, the QMOA consistently outperforms the QAOA with a quantum search space of equivalent size.

In the portfolio optimisation problem, the QWOA-CS is consistently the best-performing QVA. As shown in \cref{fig:portfolio-depth-plots,fig:portfolio-states}, it has the highest mean approximation ratio at each depth and, at $p=5$, exhibits the highest convergence to the globally optimal solution. By combining the QWOA indexing unitary with a two-part mixing process over constrained permutation graphs and a $K$-partite graph, it outperforms the QWOA. The efficient amplification of low-cost solutions by the QWOA-CS~(disjoint) supports the efficacy of the hueristic for mixing unitary design and emphasises the cosupporting roles of the QWOA-CS submixers.

The performance of the QAOAz~(XY-parity) is consistent with previous work~\cite{slate_quantum_2021}. In \cref{fig:portfolio-depth-plots}, it has the lowest mean approximation ratio at all depths, except for the QAOAz~(XY-complete) at $p=4$. The QAOAz~(XY-complete) performs slightly better on average but poorly overall compared to the QWOA, QWOA-CS, and QWOA-CS~(disjoint). \cref{table:graph_characteristics_ordered_deg,fig:transfer-and-discrepancy} indicate that these XY-based mixing unitaries can perform an efficient quantum search but are disadvantaged in this case because their search space is larger due to the inclusion of degenerate solutions. In fact, in \cref{fig:portfolio-states,fig:qaoaz_degen}, it appears that the presence of degenerate solutions biases the convergence of the QAOAz~(XY-complete) away from the globally optimal solution. This also appears to be the case for both the QAOAz~(XY-parity) (Portfolio B) and QAOAz~(XY-complete) in \cref{fig:portfolio-phase}.

Altogether, these results indicate that the heuristic for mixing unitary design presented in \cref{sec:mixer-design} leads to QVAs that are highly efficient in terms of their ability to amplify the probability of low-cost solutions at low circuit depth, together with an underlying theoretical framework with strong explanatory potential. By the subshell interpretation of the QVA mixing unitary, with the appropriate choice of graph structure, we can view the $t$ parameter as selecting for transfer between one of a small number of subshells which, given low enough subshell variance, are semi-independent from the problem instance. As demonstrated in \cref{sec:hybrid-optimisation-results}, this opens the potential for developing problem-agnostic methods to mitigate the computational overhead of QVA parameter optimisation.

As detailed in \cref{sec:cart_qva}, the QMOA is a highly efficient QVA with a gate complexity of $\mathcal{O}(n^2)$, which is only a polynomial factor higher than the QAOA~\cite{bennett_quantum_2021}. In \cref{sec:perm_qva}, we find that the indexing process is the determining factor in the gate complexity of the QWOA-CS, which is $\mathcal{O}(n^4)$. In comparison, the QAOAz~(XY-parity) requires $\mathcal{O}(n)$ and the QAOAz~(XY-complete) $\mathcal{O}(n^2)$ fundamental quantum gates. The overhead of the QWOA-CS indexing process may be a not-insignificant increase in complexity for near-term devices. However, as the space of valid solutions is generally exponentially smaller than the full solution space, methods based on post-selection or amplitude amplification of the valid solution space are not scalable. While such choices must be made case by case, our results indicate that a polynomial increase in gate complexity may be an acceptable trade-off, given the potential for a significant reduction in the number of ansatz iterations required for acceptable convergence.

In \cref{fig:transfer-and-discrepancy}, the magnitude of potential amplitude transfer between graph subshells as a function of inter-node distance is completely symmetric for the $(7, 2)$ Hamming graph and the permutation graph. The parity, $(3, 5)$ Hamming, and constrained permutation graphs show a higher potential for amplitude transfer between closer shells to varying extents. Neverthless, we have demonstrated that an efficient mixing unitary relies on the compounding effects of constructive interference over large portions of the solution space, and by this, we expect that the diameter of any suitable graph will grow at most polynomially with the number of combinatorial variables. This underscores our position that, in the design of QVAs, researchers are best served by considering the minimisation of subshell variance, as opposed to a viewpoint that emphasises a measure of distance between solutions. After all, quantum mechanics is inherently non-local.

\section{Conclusion}
\label{sec:conclusion}

We have presented an analytical description of quantum variational algorithms for combinatorial optimisation problems based on a continuous-time quantum walk interpretation of mixing unitaries based on vertex-transitive graphs. This model describes the amplitude transfer induced by these unitaries as a linear combination of time-dependent complex subshell coefficients, which are associated with subspaces corresponding to the automorphism groups of a defining graph. The graphs used by many QVAs of practical interest are distance-transitive or, at least, vertex-transitive; hence, this approach is highly generalisable.

Our theoretical description establishes that a mixing unitary's ability to produce convergence to high-quality solutions depends on minimising the average variance within these subshells, which informs our heuristic for mixing unitary design. By constructing a graph according to the minimum non-zero Hamming distance between solutions in the valid solution space, solutions with similar costs are, on average, mapped to the same subshell. This reduces subshell variance and thus supports efficient convergence to optimal solutions at low circuit depths.

We applied our heuristic for mixing unitary design to constrained and unconstrained combinatorial optimisation problems. The unconstrained case produces mixing unitaries defined by the Hamming graphs. This broad class of graphs encompasses the complete, hypercube, and higher-order Hamming graphs that characterise the mixing unitaries of several established algorithms, including the QWOA, QAOA, and QMOA. By this observation, we generalise the QMOA, initially developed to optimise continuous multivariable functions, to unconstrained combinatorial optimisation problems, including problems with variables that do not have a one-to-one mapping with an integer number of qubits.

In the case of constrained optimisation, we demonstrate that the solution space is fully partitioned into permutations of multisets, where valid solutions are related by a minimum non-zero Hamming distance of two. This partitioning scheme is general to problems with integer equality constraints and is efficiently implementable, as a subset of generating multisets for the valid solution space can be identified in at most polynomial time for all problems of this type.

Following from this, we introduce the QWOA-CS. This algorithm extends the QWOA by constructing an indexing scheme and mixing unitary that preserves the structure of the partitioned solution space. The QWOA-CS mixing unitary encapsulates two sequentially applied mixing unitaries. The first is defined by a continuous-time quantum walk over a graph that connects solutions related by the transposition of two of their elements, which we name the constrained permutation graph. We prove that the maximum degree of this graph grows quadratically with the number of solution variables, meaning that a walk on this graph can be efficiently implemented via sparse Hamiltonian simulation using quantum signal processing. The second mixing unitary enables transfer between the disjoint sets by a continuous-time quantum walk over a non-homogeneous partite graph. For this, we derive an efficient method for its exact simulation based on the quantum Fourier transform. Our complexity analysis shows that the generalised QMOA and QWOA-CS are scalable algorithms with polynomial gate complexity.

Based on our theoretical description of amplitude transfer and the concept of phase discrepancy, we introduce two figures of merit: convergence potential and mean shell variance. Convergence potential provides an upper bound on the convergence supported by the defining graph of a mixing unitary in a single ansatz iteration. We apply this metric to study the convergence characteristics of mixing unitaries based on the graphs that underpin the QAOA, QWOA, QAOAz, QMOA, and QWOA-CS. Overall, the convergence potential is higher in graphs with a small diameter and vertex degree.

Mean shell variance quantifies the average variance in solution costs over the shells of the defining graph of the mixing unitary. This classically compatible measure successfully predicted the relative performance of all algorithms considered. In algorithms utilising graphs with low mean shell variance, we demonstrated that the optimised ansatz unitary produces a phase distribution that distinguishes the globally optimal solution from the bulk of the solution space, even at low ansatz depth.

A closed-form expression for the subshell coefficients of the Hamming graphs supported a detailed analysis of their convergence characteristics. We found that these graphs can provide a degree of amplification that scales exponentially with the number of combinatorial variables. In particular, our results indicate that Hamming graphs over variables with alphabets of size two to four can support complete convergence to a target solution independent of problem size. This result supports the observed effectiveness of the QAOA on problems with binary variables and suggests that similar results might be observed for problems with a variable arity of three and four.

Numerical results simulating the application of the generalised QMOA and the QWOA-CS to the parallel machine scheduling and portfolio rebalancing problems demonstrate that these algorithms offer a significant improvement over pre-existing alternatives. They achieved the highest approximation ratio in each case and consistently converged to the globally optimal solution. 

Finally, we present a parameter setting scheme in which a closed-form solution for the subshell coefficients generates fixed mixing parameters that are independent of the problem instance. Under this scheme, our numerical results show that the QMOA can achieve efficient convergence to the globally optimal solution by varying only the phase-shift parameters.

Our findings suggest several avenues for further investigation, including analysis of the dynamics and convergence characteristics of continuous-time quantum walks on the non-homogeneous $K$-partite graph and the convergence potential of graphs at deeper ansatz depths. The latter must contend with the degrees of freedom introduced by each additional ansatz iteration; however, for the case of unconstrained optimisation, a geometrically informed analysis of the problem solution space may be a fruitful approach to this challenge. There is clearly potential for the development of parameter-setting schemes that are more refined than our illustrative example. These would be supported by an analytical or empirical derivation of a formula for the subshell coefficients of the constrained permutation graph, which might be attainable given our comprehensive description of the structural properties of this graph.

We also note that the state preparation and mixing unitary circuits for the generalised QMOA might serve as the basis for a QWOA-CS circuit that does not depend explicitly on the quantum Fourier transform, which may assist with its execution on near-term quantum processors. One difficulty in achieving this is that, while the generalised QMOA utilises an approximation of a CTQW on a complete graph that is exact up to a global phase, the simulation of a CTQW on a $K$-partite graph, based on sequential walks over the Laplacian of disjoint complete graphs and a fully connected complete graph, must account for relative phase differences between the partite sets. However, an approximation of this walk may be sufficient.

In conclusion, we have demonstrated that a continuous-time quantum walk description of the dynamics of quantum variational algorithms offers a highly explanatory lens through which researchers can tackle the core challenges that arise in designing these algorithms. We believe that this supports the hope that quantum variational algorithms will provide a significant quantum advantage in optimising NP-hard problems that are of practical importance to a wide range of domains.

\section*{Acknowledgements}

This work was supported by resources provided by the Pawsey Supercomputing Centre with funding from the Australian Government and the Government of Western Australia. EM acknowledges the support of the Australian Government Research Training Program Scholarship. Additionally, EM would like to thank Dr Kaushika De Silva for providing feedback on the presentation of key mathematical concepts and A/Prof. Du Huynh for discussions on the phase-based analysis of the optimisation QVA states.

\bibliographystyle{elsarticle-num}
\bibliography{bibliography}
\appendix
\section{Parallel Machine Scheduling}
\label{app:parallel_machine_data}
Solution costs for the parallel machine scheduling problem were computed as described in \cref{sec:pms-results}. The problem data and model parameters for the two considered instances are given below.

\filbreak

\vspace{1em}

\noindent Schedule A:
\begin{itemize}
    \item $n = 6$,  $m = 5$,  $\eta = 0.5$, $\alpha = 2$, $a = 100$
    \item $w = (3, 6, 1, 4, 5, 2)$
    \item $\tau = (21, 22, 13, 14,  5, 15)$
    \item $\kappa_{\text{QMOA}} = (65, 61, 41, 36, 79)$
    \item $\kappa_{\text{QAOA}} = (65, 61, 41, 36, 79, 41, 41, 41)$ 
\end{itemize}

\filbreak

\vspace{1em}

\noindent Schedule B:
\begin{itemize}
    \item $n = 7$,  $m = 4$,  $\eta = 0.5$, $\alpha = 2$
    \item $w = (7, 3, 2, 4, 1, 6, 5)$
    \item $\tau = (23, 9, 11, 17, 6, 11, 12)$
    \item $\kappa = (71, 62, 50, 97)$
\end{itemize}

\section{Portfolio Rebalancing}
\label{app:portfolio_data}
Covariance and mean return for portfolios of $n$ randomly selected assets were computed as described in \cref{sec:portfolio-results} between 28/12/2022 and 29/12/2023. The asset sets and model parameters for the two considered portfolios are given below.

\vspace{1em}

\filbreak

\noindent Portfolio A:
\begin{itemize}
    \item $n=6$, $A=2$, $\eta = 0.5$
    \item Assets: AMC (Amcor Ltd.), IPH (IPH Ltd.), TLS (Telstra Corporation Ltd.), CMW (Cromwell Property Group), CBA (Commonwealth Bank of Australia), RIO (Rio Tinto Ltd.)
\end{itemize}

\vspace{1em}

\filbreak

\noindent Portfolio B:
\begin{itemize}
    \item $n=8$, $A=2$, $\eta = 0.5$
    \item Assets: TLS (Telstra Corporation Ltd.), SGR (The Star Entertainment Group Ltd.), QAN (Qantas Airways Ltd.), BXB (Brambles Ltd.), RIO (Rio Tinto Ltd.), ANZ (Australia and New Zealand Banking Group Ltd.), AMP (AMP Ltd.), BKW (Brickworks Ltd.)
\end{itemize}
\end{document}